\documentclass[twoside,11pt]{article}

%

%
%
%

\usepackage{blindtext}
\usepackage[preprint]{jmlr2e}



\usepackage{lastpage}


\usepackage[ruled,vlined]{algorithm2e}
\usepackage{amsmath}
\usepackage{amsfonts}
\usepackage{tabularx}
\usepackage{diagbox}
\usepackage{xcolor}
\usepackage{bm}
\DeclareMathOperator{\tr}{trace}
\usepackage{enumitem}

\newtheorem{assumption}{Assumption}

\interfootnotelinepenalty=10000


\jmlrheading{24}{2023}{1-\pageref{LastPage}}{11/21}{2/23}{21-1313}{Henry Lam and Haofeng Zhang}


\ShortHeadings{Doubly Robust Stein-Kernelized Monte Carlo Estimator}{Lam and Zhang}
\firstpageno{1}

\begin{document}

\title{Doubly Robust Stein-Kernelized Monte Carlo Estimator: Simultaneous Bias-Variance Reduction and Supercanonical Convergence}

\author{\name Henry Lam \email khl2114@columbia.edu \\
\name Haofeng Zhang \email hz2553@columbia.edu \\
       \addr  Department of Industrial Engineering and Operations Research\\
        Columbia  University\\
       New York, NY 10027, USA
       }

\editor{Aryeh Kontorovich}

\maketitle

\begin{abstract}

Standard Monte Carlo computation is widely known to exhibit a canonical square-root convergence speed in terms of sample size. Two recent techniques, one based on control variate and one on importance sampling, both derived from an integration of reproducing kernels and Stein's identity, have been proposed to reduce the error in Monte Carlo computation to supercanonical convergence. This paper presents a more general framework to encompass both techniques that is especially beneficial when the sample generator is biased and noise-corrupted. We show our general estimator, which we call the doubly robust Stein-kernelized estimator, outperforms both existing methods in terms of mean squared error rates across different scenarios. We also demonstrate the superior performance of our method via numerical examples.

\end{abstract}

\begin{keywords}
Monte Carlo methods, kernel ridge regression, Stein's identity, control functionals, importance sampling
\end{keywords}

\section{Introduction}
\label{sec:intro}


We consider the problem of numerical integration via Monte Carlo simulation. 
As a generic setup, we aim to estimate the expectation $\theta=\mathbb E_\pi[f(X)]$ using independent and identically distributed (i.i.d.) samples $x_i$, $i\in [n]:=\{1,\cdots,n\}$, drawn from the target distribution $\pi$. A natural Monte Carlo estimator is the sample mean $\hat{\theta}=\frac{1}{n}\sum_{i=1}^{n} f(x_i)$, which is widely known to be unbiased with mean squared error (MSE) $O(n^{-1})$, a rate commonly referred to as the canonical rate. 

In this paper, we are interested in Monte Carlo estimators that are supercanoncial, namely with MSE $o(n^{-1})$. We focus especially on recently proposed Stein-kernelized-based methods. These methods come in two forms, one based on control variate, or more generally control functional (CF), and one based on importance sampling (IS). They are capable of reducing bias or variance of Monte Carlo estimators to the extent that the convergence speed becomes supercanonical. On a high level, these approaches utilize knowledge on the analytical form of the sampling density function (up to a normalizing constant), and apply a ``kernelization" of Stein’s identity induced by a reproducing kernel Hilbert space (RKHS) to construct functions or weights that satisfy good properties for CF or IS purpose. 

Our main goal in this paper is to study a more general framework to encompass both approaches, by introducing a doubly robust Stein-kernelized (DRSK) estimator. The ``doubly robust" terminology borrows from existing approaches in off-policy learning \citep{dudik2011doubly,dudik2014doubly,jiang2016doubly,farajtabar2018more} where one simultaneously applies control variate and IS to reduce estimation error, and the resulting estimator is no worse than the more elementary estimators. In our setting, we will show that DRSK indeed outperforms kernel-based CF and IS across a range of scenarios, especially when the sample generator is imperfect in that the samples are biased or noise-corrupted. Importantly, distinct from the off-policy learning literature, the superiority of our estimator manifests in terms of faster MSE rates in $n$.


More specifically, we consider a general estimation framework with target quantity $\theta=\mathbb{E}_\pi[f(X,Y)]$, where $X$ is a partial list of input variables that is analytically tractable (i.e., its density $\pi_X$ known up to a normalizing constant), while $Y$ is a noise term that is not traceable and is embedded in the samples $(x_i, f(x_i,y_i))$. Moreover, instead of having a generator for $\pi$, we may only have access to a possibly biased generator with distribution $q$. In this setting, we will demonstrate that applying the existing methods of kernel-based CF and IS both encounter challenges. In fact, because of these complications, these estimators could have a \emph{subcanonical} convergence. On the other hand, our DRSK still exhibits \emph{supercanonical} rate. 

Other than theoretical interest, a motivation of studying our considered general setting pertains to estimation problems that involve epistemic and aleatory variables, which arise when simulation generators are ``corrupted". More concretely, consider a target performance measure that is an expected value of some random outputs (aleatory noise), which are in turn generated from some input models. When the input models themselves are estimated from extrinsic data (epistemic noise), then estimating the target performance measure would involve handling the two sources of noises simultaneously, a common problem in stochastic simulation under input uncertainty \citep{xie2014bayesian,zouaoui2003accounting,song2014advanced,lam2016advanced,corlu2020stochastic,barton2022input}. Here, when the epistemic uncertainty is represented via a Bayesian approach, a natural strategy is to estimate the posterior performance measure, in which $X$ can represent the epistemic uncertainty and $Y$ the aleatory uncertainty. Typically, $X$ follows a posterior distribution that is known up to a normalizing constant, and may need to be generated using variational inference \citep{wainwright2008graphical,blei2017variational}\footnotemark\footnotetext{In variational inference \citep{blei2017variational}, the family of variational distributions generally does not contain the true posterior and thus the resulting approximate distribution is systemically biased/different from the true posterior.} or Markov chain Monte Carlo (MCMC) with independent parallel chains \citep{rosenthal2000parallel,LL}\footnotemark\footnotetext{Parallel computing of MCMC has attracted attention due to the enhancement of parallel processing units in GPU \citep{jacob2011using}. More precisely, here we mean that the $n$ samples are the end point of $n$ independent Markov chains respectively where each Markov chain is initialized from an i.i.d. starting point and run for the same amount of (possibly small) iterations; See Section \ref{sec:multipledistributions} for an example. These $n$ samples are independent but may exhibit a high bias as pinpointed by \citet{rosenthal2000parallel}, which falls into the scope of our study.},
so that the realized samples follow a biased distribution $q$ instead of $\pi$ \citep{liu2017stein2}\footnotemark\footnotetext{There are other problem settings where the biased distribution $q$ also arises, such as in parametric bootstrap or perturbed maximum a posteriori; See \citet{liu2017stein2} for details.}. Our work primarily focuses on the case where the generated samples of $X$ are independent, while other studies for the case where the samples are not necessarily independent (but without studying supercanonical rates) can be found in, e.g., \citet{south2022postprocessing,belomestny2021variance} and references therein. On the other hand, $Y$ could be generated from a black-box simulator that lacks analytical tractability. Such problems with biased epistemic generators and black-box aleatory noises comprise precisely the setup where DRSK is well-suited to enhance the estimation rates.

To be more concrete, below we use a simple generic problem in stochastic simulation to illustrate our problem setting. A more sophisticated example of a computer communication network can be found in Section \ref{sec:multipledistributions} in our numerical experiments.

\begin{example} \label{example} Consider an M/M/1 queue with known arrival rate $1$ and unknown service rate $x$. Since the ground-truth $x$ is unknown, we may estimate $x$ via Bayesian inference based on historical data (say, the actual service time of several customers) and obtain its posterior distribution up to a normalizing constant. By leveraging MCMC or variational inference, only samples from an approximate posterior (instead of the exact posterior) can be drawn. Suppose we are interested in the mean of the waiting time of the first 10 customers. To do this, for each rate $x$, we simulate a fixed number of M/M/1 queues, obtain the waiting time of the first 10 customers in each queue, and output their average $f(x,y)$. Here $y$ represents the intrinsic noise (aleatory uncertainty) in the simulation model since the sample average waiting time given $x$ is still a random proxy for its expectation. The goal is to calculate the expectation of $f(X,Y)$ under the posterior of $X$ and the intrinsic noise of $Y$. Note, moreover, that when using a large or even moderate number of simulated queues, the noise level of $Y$ given $X$ in our estimate is small because of the averaging effect.
\end{example}

Finally, in terms of computational complexity, DRSK costs no more than kernel-based CF and IS. The main computational expense is to solve a kernel ridge regression (KRR) problem (as in kernel-based CF) and a convex quadratic program (as in kernel-based IS), both of which involve a Gram matrix whose dimension is related to the number of the samples. Although computational complexity is not the main focus of this work, we point out that KRR is known to not scale well with the growth of the sample size due to the matrix computation. Hence advanced computational techniques such as divide-and-conquer KRR \citep{pmlr-v30-Zhang13} may be applied when the sample size is large.

In the following, we first introduce some background and review the kernel-based CF and IS (Section \ref{sec:mainres}). With these, we introduce our main DRSK estimator, present its convergence guarantees and compare with the existing methods (Section \ref{sec:DRSK main}). After that, we demonstrate some numerical experiments to support our method (Section \ref{sec:numerics}). Finally, we develop the theoretical machinery for regularized least-square regression on RKHS needed in our analysis (Section \ref{sec:RLS}) and detail the proofs of our theorems (Section \ref{sec:analysis}).

\section{Background and Existing Methods} \label{sec:mainres}
We first introduce our setting and notations (Section \ref{sec:PS}), then review the technique of kernelization on Stein’s identity in an RKHS (Section \ref{sec:Stein}),  kernel-based CF (Section \ref{sec:CF0}) and IS (Section \ref{sec:BBIS0}), followed by a discussion on other related work (Section \ref{sec:literature}).

\subsection{Setting and Notation} \label{sec:PS}

Consider a random vector $(X,Y)$ where $X$ takes values in an open set $\Omega \subset \mathbb{R}^d$ and $Y$ takes values in an open set $\Gamma \subset \mathbb{R}^p$. Our goal is to estimate the expectation of $f(X,Y)$ under a distribution $(X,Y)\sim \pi$, which we denote as $\theta:=\mathbb{E}_\pi[f(X,Y)]$. The point estimator will be denoted as $\hat{\theta}$. We assume $X$ admits a positive continuously differentiable marginal density with respect to $d$-dimensional Lebesgue measure, which we denote as $\pi_X(x)$. Similarly, we denote $\pi_{Y|X}(y|x)$ as the conditional distribution of $Y$ given $X$ (which is not required to have density).

Our premise is that we can run simulations and have access to a collection of i.i.d. samples $D = \{(x_i,f(x_i,y_i)): i=1,\cdots,n\}$ where $(x_i,y_i)$ are drawn from some distribution $q$ (which might be unknown and distinct from $\pi$). It is sometimes useful to think of $X$ as the ``dominating" factor in the simulation, contributing the most output variance, whereas $Y$ is an auxiliary noise and contributes a small variance (we will rigorously define these in the theorems later). The small variance from $Y$ can be justified in, e.g., the stochastic simulation setting where typically the modeler simulates and then averages a large number of simulation runs to estimate an expectation-type performance measure; Recall Example \ref{example}.


For convenience, for any measurable function $g : \Omega \times \Gamma \to \mathbb{R}$, we write $\mu(g)=\mathbb E_\pi[g(X,Y)]$, and for any measurable function $g : \Omega \to \mathbb{R}$, we write $\mu_X(g)=\mathbb E_{\pi_X}[g(X)]$. If $g$ is constructed from training data, then $\mu(g)$ and $\mu_X(g)$ are understood as the conditional expectation of $g$ given training data. Let $L^2(\pi_X)$ denote the space of measurable functions $g : \Omega \to \mathbb{R}$ for which $\mu_X(g^2)$ is finite, with the norm written as $\|\cdot\|_{L^2(\pi_X)}$. Let $C^k(\Omega,\mathbb{R}^j)$ denote the space of (measurable) functions from $\Omega$ to $\mathbb{R}^j$ with continuous partial derivatives up to order $k$. The region $\Omega$ can be bounded or unbounded; in the former case, the boundary $\partial \Omega$ is assumed to be piecewise smooth (i.e., infinitely differentiable). 


Similarly, for any measurable function $g : \Omega \times \Gamma \to \mathbb{R}$, we write $\nu(g)=\mathbb E_q[g(X,Y)]$, and for any measurable function $g : \Omega \to \mathbb{R}$, we write $\nu_X(g)=\mathbb E_{q_X}[g(X)]$. If $g$ is constructed from training data, then $\nu(g)$ and $\nu_X(g)$ are understood as the conditional expectation of $g$ given training data. Let $L^2(q_X)$ denote the space of measurable functions $g : \Omega \to \mathbb{R}$ for which $\nu_X(g^2)$ is finite, with the norm written as $\|\cdot\|_{L^2(q_X)}$.

Throughout this paper, we assume that $\pi_X\in C^1(\Omega, \mathbb{R})$. The \emph{score function} of the density $\pi_X$, $\bm{u}(x):=\nabla_x \log \pi_X(x)$ is well-defined and is computable for given $x_i$'s. This is equivalent to saying $\pi_X(x)$ has a parametric form that is known up to a normalizing constant.
We also assume that the target function $f : \Omega \times \Gamma \to \mathbb{R}$ satisfies $\mathbb{E}_\pi[f(X,Y)^2]<\infty$. 

\subsection{Stein-Kernelized Reproducing Kernel Hilbert Space} \label{sec:Stein}
We briefly introduce the technique of kernelization on Stein’s identity in an RKHS 
\citep{liu2016kernelized,CF1}. 

We say that a real-valued function $g(x): \Omega\subset \mathbb{R}^d \to \mathbb{R}$ is in the Stein class of $\pi_X(x)$ \citep{ley2017stein,liu2016kernelized} if $g(x)$ is continuously differentiable and satisfies
$$\int_{\Omega} \nabla_x(\pi_X(x)g(x))dx = \bm{0} \in \mathbb{R}^d.$$
This condition can be easily checked using integration by parts or the divergence theorem; in particular, it holds if   $\pi_X(x)g(x) = 0$, $\forall x \in \partial \Omega$ when the closure of $\Omega$ is compact, or $\lim_{\|x\|\to \infty} \pi_X(x)g(x) = 0$ when $\Omega = \mathbb{R}^d$. The ``canonical" Stein operator of $\pi_X$, $\mathcal{A}_{\pi_X}$, acting on the Stein class of $\pi_X(x)$ is a (linear) operator defined as 
\begin{equation}\label{equ:steinoperator}
\mathcal{A}_{\pi_X}g(x) =\bm{u}(x)g(x) +\nabla_x g(x) \in \mathbb{R}^d.    
\end{equation}
where $\bm{u}(x):=\nabla_x \log \pi_X(x)$ is the score function of $\pi_X(x)$ as introduced earlier. Note that the general definition of the Stein operator typically depends on a class of functions that Stein operator acts on \citep{gaunt2019algebra}; Yet, if this class of functions is exactly the Stein class of $\pi_X(x)$, the ``canonical" Stein operator is defined uniquely by \eqref{equ:steinoperator} as suggested by previous studies \citep{stein2004use,liu2016kernelized,ley2017stein,mijoule2018stein}.

For any $g(x)$ in the Stein class of $\pi_X(x)$,
we have the well-known Stein’s identity \citep{liu2016kernelized,mijoule2018stein} as follows:
\begin{equation} \label{equ:steinidentity}
\mathbb{E}_{\pi_X}[\mathcal{A}_{\pi_X}g(X)] = \bm{0} 
\end{equation}
since $\nabla_x(\pi_X(x)g(x))=\big(\bm{u}(x)g(x) +\nabla_x g(x)\big)\pi_X(x)$. In addition, a vector-valued function
$\bm{g}(x) = [g_1(x), \cdots , g_{d'}(x)]$ is said to be in the Stein class of $\pi_X(x)$ if every $g_i$, $\forall i \in [d']$ is in the Stein class of $\pi_X(x)$.

Recall that a Hilbert space $\mathcal{H}$ with an inner product $\langle \cdot,\cdot \rangle : \mathcal{H}\times \mathcal{H}\to \mathbb{R}$ is a RKHS if there exists a symmetric positive definite function $k : \Omega \times \Omega \to \mathbb{R}$, called a (reproducing) kernel, such that for all $x \in \Omega$, we have $k(\cdot,x) \in \mathcal{H} $ and for all $x \in \Omega$ and $h \in \mathcal{H}$, we have $h(x) = \langle h(\cdot),k(\cdot,x) \rangle$. A kernel $k(x,x')$ is said to be in the Stein class of $\pi_X$ if $k(x,x')\in C^2(\Omega \times \Omega,\mathbb{R})$, and both $k(x,\cdot)$ and $k(\cdot,x')$ are in the Stein class of $\pi_X$ for any fixed $x,x'$. Note that if $k(x,x')$ is in the Stein class of $\pi_X$, so is any $h \in \mathcal{H}$ \citep{liu2016kernelized}. 

Suppose $k(x,x')$ is in the Stein class of $\pi_X$. 
Then it is easy to see that $\nabla_{x'}(\pi_X(x') k(\cdot, x'))$ is also in the Stein class of $\pi_X$ for any $x'$ \citep{liu2016kernelized}:
$$\int_{\Omega} \nabla_x\left(\pi_X(x)\nabla_{x'}(\pi_X(x') k(x, x'))\right)dx =\nabla_{x'} \left(\pi_X(x') \int_{\Omega} \nabla_x(\pi_X(x)k(x, x'))dx \right) = \bm{0} \in \mathbb{R}^{d\times d}$$
since $k(\cdot, x')$ is in the Stein class of $\pi_X$ for any $x'$. Taking the trace of the matrix $$\text{trace}\left(\nabla_x\left(\pi_X(x)\nabla_{x'}(\pi_X(x') k(x, x'))\right)\right),$$
we get the following kernelized version of Stein’s identity (cf. \eqref{equ:steinidentity}):
\begin{equation} \label{equ:stein}
\mathbb{E}_{X\sim\pi_X}[k_0(X,x')] = 0, \quad \forall x'\in \Omega    
\end{equation}
where $k_0(x,x')$ is a new kernel function defined via 
$$k_0(x,x') := \nabla_x\cdot\nabla_{x'} k(x,x')+\bm{u}(x)\cdot \nabla_{x'}k(x,x')+\bm{u}(x')\cdot\nabla_x k(x,x')+ \bm{u}(x)\cdot \bm{u}(x')k(x,x').$$

Let $\mathcal{H}_0$ be the RKHS related to $k_0(x,x')$. Then all the functions $h(x)$ in $\mathcal{H}_0$ are orthogonal to $\pi_X(x)$ in the sense that $\mathbb{E}_{\pi_X}[h(X)] = 0$ \citep{LL}. This proposition is fundamental in the construction of Stein-kernelized CFs. It is easy to check that the commonly used radial basis function (RBF) kernel $k(x,x') = \exp(-\frac{1}{ h^2}\|x-x'\|^2_2)$ is in the Stein class for continuously differentiable densities supported on $\mathbb{R}^d$. 

In addition that the kernel $k$ is in the Stein class, we also assume the gradient-based kernel $k_0$ satisfies $\sup_{x\in \Omega} k_0(x,x) < \infty.$ In this paper, we always assume $\mathcal{H}_0$ satisfies these two conditions. For instance, the RBF kernel $k(x,x') = \exp(-\frac{1}{ h^2}\|x-x'\|^2_2)$ satisfies the above two conditions for continuously differentiable densities supported on $\mathbb{R}^d$ whose score function $\bm{u}(x)$ is of polynomial growth rate. 

Next, let $\mathcal{C}$ denote the RKHS of constant functions with kernel $k_{\mathcal{C}}(x,x') = 1$ for all $x,x' \in \Omega$. The norms associated to $\mathcal{C}$ and $\mathcal{H}_0$ are denoted by $\|\cdot\|_\mathcal{C}$ and $\|\cdot\|_{\mathcal{H}_0}$ respectively.
$\mathcal{H}_+ = \mathcal{C} +\mathcal{H}_0$ denotes the set $\{c + \psi: c \in \mathcal{C}, \psi \in \mathcal{H}_0\}$. Equip $\mathcal{H}_+$ with the structure of a vector space, with addition operator $(c + \psi) + (c' + \psi') = (c + c') + (\psi + \psi')$ and multiplication operator $\lambda (c + \psi)  = (\lambda c) + (\lambda \psi)$, each well-defined due to uniqueness of the representation $f = c + \psi, f' = c' + \psi'$ with $c,c' \in \mathcal{C}$ and $\psi,\psi' \in \mathcal{H}_0$. It is known that $\mathcal{H}_+$ can be constructed as an RKHS with kernel $k_+(x,x') := k_{\mathcal{C}}(x,x') + k_0(x,x')$ and with norm $\|f\|_{\mathcal{H}_+}^2:=\|c\|_\mathcal{C}^2+\|\psi\|_{\mathcal{H}_0}^2$ \citep{berlinet2011reproducing}. Let $\kappa:=\sup_{x\in \Omega} \sqrt{k_+(x,x)}$. Note that $\kappa<\infty$  since $\sup_{x\in \Omega} k_0(x,x) < \infty$.

Finally, we remark that different choices of the kernel $k$ lead to different constructions of the RKHS $\mathcal{H}_+$, and may lead to different performance of subsequent approaches since their performance depends on the regularity of the ground-truth regression function in the space $\mathcal{H}_+$. An ideal requirement is that $\mathcal{H}_+$ should embody the ground-truth regression function
.


\subsection{Control Functionals}\label{sec:CF0}
Control functional (CF), or more precisely Stein-kernelized control functional, first introduced by \citet{CF1} and \citet{CF2}, is a systematically constructed class of control variates for variance reduction (In fact, CF can also partially perform bias reduction as we will show in Section \ref{sec:DRSK main}). 
Specifically, CF constructs a function $s_m(\cdot)$ applied on $X$ such that $\mu_X(s_m)$ is known, and that $f(X,Y)-s_m(X)$ has a very low variance so that the CF-adjusted sample
$$f(X,Y)-s_m(X)+\mathbb \mu_X(s_m)$$
is supercanonical for estimating $\mathbb E_\pi[f(X,Y)]$. This function $s_m(\cdot)$ is constructed as a functional approximation for $f(\cdot)$ by utilizing a training set of samples, where the function lies in the RKHS $\mathcal{H}_+$ constructed in Section \ref{sec:Stein} which has known mean under the distribution $\pi$. 

In more detail, following \citet{CF1}, we divide the data $D$ into two disjoint subsets as $D_0 = \{(x_i,y_i)\}^m_{i=1}$ and $D_1 = \{(x_i,y_i)\}^n_{i=m+1}$, where $1 \le m < n$. $D_0$ is used to construct a function $s_m(\cdot)\in L^2(\pi_X)$ that is a partial approximation to $f$ (that only depends on $x$). $s_m(x)$ is given by the following regularized least-square (RLS) functional regression on the RKHS $\mathcal{H}_+$, also known as kernel ridge regression (KRR):
$$s_m(x):= \mathop{\arg\min}_{g\in \mathcal{H}_+} \left\{ \frac{1}{m} \sum_{j=1}^{m} (f(x_j,y_j)-g(x_j))^2 +\lambda \|g\|^2_{\mathcal{H}_+} \right\}$$
where $\lambda>0$ is a regularization parameter which typically depends on the cardinality of the data set $D_0$. Note that in CF, the RKHS space $\mathcal{H}_+$ instead of $\mathcal{H}_0$ is the hypothesis class used in the KRR because $\mathcal{H}_0$ only contains functions with mean zero that cannot effectively approximate the function $f$ with a general mean. Consider the function
\begin{equation} \label{equ:CF}
f_m(x,y)=f(x,y)-s_m(x)+\mu_X(s_m).    
\end{equation}
Then the CF estimator is given by a sample average of $f_m(\cdot,\cdot)$ on $D_1$, i.e.,
$$\hat{\theta}_{CF}:=\frac{1}{n-m} \sum_{j=m+1}^{n} f_m(x_j,y_j).$$

If $(x_i,y_i)$ are i.i.d. drawn from the original distribution $\pi$, it is clear that we have unbiasedness, since $\mathbb{E}_\pi[\hat{\theta}_{CF}|D_0] =\mu(f_m)=\theta$ for any given $D_0$ and hence $\mathbb E_\pi[\hat{\theta}_{CF}]=\theta$. On the other hand, suppose our data $D$ are drawn from a distribution $q$ which may be different from the original distribution $\pi$. Then given $D_0$ (and thus given $s_m$), one component $f_m(X,Y)$ is in general biased for $\theta$ and the resulting CF by taking their average can suffer as a result.

It is well known in the KRR literature that the $s_m$ can be written explicitly in a closed form. We will review this result in Section \ref{sec:CF} and in particular, rewrite the existing closed-form solution from \citet{CF1} in terms of $k_+$
rather than $k_0$. In summary, CF is described in Algorithm \ref{CFalgorithm}. In addition, \citet{CF1} suggested a simplified version of CF, called the \textit{simplified CF estimator}, which is simply the $\mu_X(s_n)$ by setting $m=n$ in Algorithm \ref{CFalgorithm}. This estimator is described in Algorithm \ref{SimCFalgorithm} for the sake of completeness.

\begin{algorithm} \label{CFalgorithm}
\SetAlgoLined
\textbf{Goal}: Estimate $\theta:=\mathbb{E}_{\pi}[f(X,Y)]$;\\
\textbf{Input}: A set of i.i.d. samples $D=\{(x_j,z_j=f(x_j,y_j))\}_{j=1,\cdots,n}$ drawn from some distribution $q$, the reproducing kernel $k_0$ of $\mathcal{H}_0$ and $k_+=k_0+1$ of $\mathcal{H}_+$;\\
\textbf{Procedure}: 
(1) We divide the dataset $D$ into two disjoint subsets as $D_0 = \{(x_i,f(x_i,y_i))\}^m_{i=1}$ and $D_1 = \{(x_i,f(x_i,y_i))\}^n_{i=m+1}$, where $1 \le m < n$;\\
(2) $D_0$ is used to construct the CF-adjusted sample
$$f_m(x,y)=f(x,y)-s_m(x)+\mu_X(s_m).$$
Let
$$\hat{z}=(f(x_1,y_1),\cdots, f(x_m,y_m))^T,$$
$$K_+=(k_+(x_i,x_j))_{i,j =1,\cdots, m},$$
$$\hat{k}_+(x)=(k_+(x_1,x),\cdots, k_+(x_m,x))^T,$$
$$\beta =(K_+ +\lambda m I)^{-1} \hat{z}.$$
Then $s_m(x)= \beta^T \hat{k}_+(x)$ and
$\mu_X(s_m)=\beta^T \mathbf{1}$.\\
\textbf{Output}: CF estimator
$$\hat{\theta}_{CF}:=\sum_{j=m+1}^{n} \frac{1}{n-m} f_m(x_j,y_j).$$
 \caption{Stein-Kernelized Control Functional (CF)}
\end{algorithm}

\begin{algorithm} \label{SimCFalgorithm}
\SetAlgoLined
\textbf{Goal}: Estimate $\theta:=\mathbb{E}_{\pi}[f(X,Y)]$;\\
\textbf{Input}: A set of i.i.d. samples $D=\{(x_j,z_j=f(x_j,y_j))\}_{j=1,\cdots,n}$ drawn from some distribution $q$, the reproducing kernel $k_0$ of $\mathcal{H}_0$ and $k_+=k_0+1$ of $\mathcal{H}_+$;\\
\textbf{Procedure}: 
Let
$$\hat{z}=(f(x_1,y_1),\cdots, f(x_n,y_n))^T,$$
$$K_+=(k_+(x_i,x_j))_{i,j =1,\cdots, n},$$
$$\hat{k}_+(x)=(k_+(x_1,x),\cdots, k_+(x_n,x))^T,$$
$$\beta =(K_+ +\lambda n I)^{-1} \hat{z}.$$
Then
$\mu_X(s_n)=\beta^T \mathbf{1}$.\\
\textbf{Output}: Simplified CF estimator
$$\hat{\theta}_{SimCF}:=\mu_X(s_n).$$
\caption{Simplified CF Estimator (SimCF)}
\end{algorithm}

\subsection{Black-Box Importance Sampling}\label{sec:BBIS0}
Another method derived from the kernelization of Stein's identity is black-box importance sampling (BBIS) introduced by \citet{LL}, which has been shown to be an effective approach for both variance and bias reduction. This method can be viewed as a relaxation of conventional IS, in that it assigns weights over the samples assuming that the data are drawn from a sampling distribution $q$ in the \textit{black box}, i.e., with no knowledge on the analytical form of $q$. In this setting, standard importance weights cannot be calculated because the likelihood ratio is unknown. In the situation where the noise $Y$ does not appear, the black-box importance weights are optimized based on a convex quadratic minimization problem: 
\begin{equation}
\mathop{\arg\min}_{w}\{w^TK_0w:w\ge 0, w^T\textbf{1}=1\}\label{BBIS explain}
\end{equation}
where $\textbf{1}=(1,1,\cdots,1)$ and $K_0$ is the kernel matrix  with respect to the RKHS $\mathcal{H}_0$ constructed in Section 2.2 which contains functions with mean zero under the distribution $\pi_X$.
Suppose $w$ is the (unknown) likelihood ratio between $\pi_X$ and $q_X$, i.e., $w_i=\frac{1}{n}\frac{\pi_X(x_i)}{q_X(x_i)}$. Then
$$\mathbb{E}_{x_i\sim q_X} [w_ik_0(x_i,x_j)w_j]= \frac{1}{n}\mathbb{E}_{x_i\sim \pi_X} [k_0(x_i,x_j)w_j] = 0 $$
for any $x_j$ since the kernel function has mean zero under $\pi_X$; see \eqref{equ:stein}. In addition, we have $\mathbb{E}_{q_X}[w^T\textbf{1}]=1$. Therefore the quadratic form achieves the minimum mean $0$ under $q_X$ when $w$ is the likelihood ratio, which intuitively justifies \eqref{BBIS explain}.

 
We provide more details in the general situation. First, we define the empirical kernelized Stein discrepancy (KSD) between the weighted empirical distribution of the samples and $\pi_X$:
$$\mathbb{S}(\{x_j,w_j\},\pi_X)=\sum_{j,k\in D} w_j w_k k_0(x_j,x_k) = w^T K_0 w.$$
where $$K_0=(k_0(x_j,x_k))_{j,k \in D}$$ is the $n\times n$ kernel matrix constructed on the entire data set $D$. The black-box importance weights then form the optimal solution to the following convex quadratic optimization problem (which can be solved efficiently):
\begin{align} \label{COP0}
\hat{w}= \mathop{\arg\min}_{w} \left\{\mathbb{S}(\{x_j,w_j\},\pi_X), \text{ s.t. } \sum_{j=1}^{n} w_j=1,
0\le w_j \le \frac{B_0}{n}\right\}
\end{align}
where $B_0$ is a pre-specified bound that will be provided in the subsequent theorems. Here, the upper bound $B_0$ on the weights is a new addition compared to the original formulation in \citet{LL}, and is needed to control the MSE when there is the noise term $Y$. 
The BBIS estimator is given by a weighted average of $f(\cdot,\cdot)$ on $D$, i.e.,
$$\hat{\theta}_{IS}:=\sum_{j=1}^{n} \hat{w}_j f(x_j,y_j).$$
In summary, BBIS is described in Algorithm \ref{ISalgorithm}.

\begin{algorithm} \label{ISalgorithm}
\SetAlgoLined
\textbf{Goal}: Estimate $\theta:=\mathbb{E}_{\pi}[f(X,Y)]$;\\
\textbf{Input}: A set of i.i.d. samples $D=\{(x_j,z_j=f(x_j,y_j))\}_{j=1,\cdots,n}$ drawn from some distribution $q$, the reproducing kernel $k_0$ of $\mathcal{H}_0$;\\
\textbf{Procedure}: 
Let $$K_0=(k_0(x_i,x_j))_{i,j =1,\cdots, n},$$
and $\hat{w}$ is the optimal solution to the following quadratic optimization problem
\begin{align}
    \hat{w}= \mathop{\arg\min}_{w} \left\{w^T K_0 w, \text{ s.t. } \sum_{j=1}^{n} w_j=1, 0\le w_j \le \frac{B_0}{n}\right\}.
\end{align}
If the noise $Y$ does not exist, we simply set $B_0=+\infty$. Otherwise, set $B_0=2B$ for the canonical rate and $B_0=4B$ for a supercanonical rate (see Assumption \ref{lrupper} for the definition of $B$ and Theorem \ref{mainIS0} for details).\\ 
\textbf{Output}: BBIS estimator
$$\hat{\theta}_{IS}:=\sum_{j=1}^{n} \hat{w}_j f(x_j,y_j).$$
\caption{Modified Black-Box Importance Sampling (BBIS)}
\end{algorithm}

\subsection{Related Work}\label{sec:literature}
To close this section, we discuss some other related literature. Variance and bias reduction has been a long-standing topic in Monte Carlo simulation. Two widely used methods are
control variates and importance sampling (Chapter 4 in \citet{Gl}, Chapter 5 in \citet{asmussen2007stochastic}, Chapter 5 in \citet{rubinstein2016simulation}). The control variate method reduces variance by adding an auxiliary variate with known mean to the naive Monte Carlo estimator \citep{nelson1990control}. The construction of this auxiliary variable can follow multiple approaches. In the classical setup, a fixed number of control variates are linearly combined, with the linear coefficients constructed by ordinary least squares \citep{glynn2002some}. This approach maintains the canonical rate in general. Beyond this, one can add a growing number of control variates (relative to the sample size) to get a faster rate \citep{portier2019monte}. The linear coefficients can also be obtained by using regularized least squares to increase accuracy \citep{south2022regularized,leluc2021control}. These methods require a pre-specified collection of well-behaved control variates (e.g., the control variates are linearly independent or dense in a function space), which may not always be easy to construct in practice. 

To generalize the linear form and possibly obtain a supercanonical rate, the control variate can be constructed via a fitted function learned from data. This function can be constructed using adaptive control variates \citep{henderson2002approximating,henderson2004adaptive,kim2007adaptive}. In order to have a supercanonical 
rate for adaptive control variate estimators, one of the key assumptions in these papers is the existence of a “perfect” control variate (i.e., a  control variate with zero variance), and the “perfect” control variate can be approximated in an adaptive scheme. This assumption could be unlikely
to hold for some practical applications \citep{kim2007adaptive}. Another approach to construct the function is via $L^2$ functional approximation \citep{maire2003reducing}. 
\citet{maire2003reducing} only focuses on the mono-dimensional function whose $L^2$ expansion coefficients decrease at a polynomial rate. It is unknown how to extend this work to multi-dimensional functions. 
Finally, the function can also be constructed, as described earlier, via RLS regression \citep{CF1} which provides a systematic approach to construct control variates based on the kernelization of Stein's identity. Compared with previous methods, \citet{CF1} require less restrictive assumptions for supercanonical rates and avoid adaptive tuning procedures.

Standard IS reduces variance by using an alternate proposal distribution to generate samples, and multiplying the samples by importance weights constructed from the likelihood ratios, or the Radon-Nikodym derivative, between the proposal and original distributions. Likewise, it can also be used to de-bias estimates through multiplying by likelihood ratios if the generating distribution is biased. IS has been shown to be powerful in increasing the efficiency of rare-event simulation \citep{bucklew2013introduction,rubino2009rare,JUNEJA2006291,BLANCHET201238}. In contrast to conventional methods, BBIS does not require the closed form of the likelihood ratio. We also note that both CF and BBIS can be viewed as versions of the weighted Monte Carlo method since both of them can be written in the form of a linear combination of $f(x_i)$, while their weights are obtained in different ways. Other methods to construct weighted Monte Carlo can be found in, e.g., \citet{doi:10.1287/opre.1040.0148,owen2000safe}. 



 
Besides the Monte Carlo literature, CF utilizes a combination of two ideas: kernel ridge regression (KRR) and the Stein operator. The theory of KRR has been well developed in the past two decades. Its modern learning theory has been proposed in \citet{CS} and \citet{CS2}, and further strengthened in \citet{SZ}, \citet{SZ2} and \citet{SZ3}. \citet{CZ} provide comprehensive documentation on this topic. Most of these works assume that the input space is a compact set. \citet{SW} further study KRR on non-compact metric spaces, which provides the mathematical foundation of this paper. There are multiple studies on extending the standard KRR. For instance, \citet{SW2} study KRR with dependent samples. \citet{christmann2007consistency,debruyne2008model} study consistency and robustness of kernel-based regression. To relieve the computation cost of KRR estimators for large datasets, 
\citet{RR} propose the random Fourier feature sampling to speed up the evaluation of the kernel matrix, and \citet{pmlr-v30-Zhang13} propose a divide-and-conquer KRR to decompose the computation.

The second idea used by CF is the kernelization of Stein’s identity, i.e., applying the Stein operator to a ``primary" RKHS. The resulting RKHS automatically satisfies the zero-mean property under $\pi$ which lays the foundation for constructing suitable control variates. This idea has been used and followed up in \citet{CF1,CF2,HZ,south2022postprocessing}, and finds usage beyond control variates, including the Stein variational gradient descent \citep{liu2016stein,liu2017stein,liu2017stein2,han2018stein,wang2019stein}, the kernel test for goodness-of-fit \citep{CSG, liu2016kernelized}
and BBIS \citep{LL,hodgkinson2020reproducing} that we have described earlier.


\section{Doubly Robust Stein-Kernelized Estimator} \label{sec:DRSK main}
We propose an enhancement of CF and BBIS that can simultaneously perform both variance and bias reduction. We call it the \textit{doubly robust Stein-kernelized (DRSK) estimator}. In brief, 
we divide the data $D$ into two disjoint subsets as $D_0 = \{(x_i,y_i)\}^m_{i=1}$ and $D_1 = \{(x_i,y_i)\}^n_{i=m+1}$, where $1 \le m < n$. Based on the first subset $D_0$, we construct the same regression function $s_m(\cdot)\in L^2(\pi_X)$ to derive the CF-adjusted sample $f_m(x,y)$ as in Section \ref{sec:CF0} for variance reduction. Based on the second subset $D_1$, we generate the black-box importance weights $\hat{w}$ as in Section \ref{sec:BBIS0} for bias reduction. Finally, we take the weighted average of $f_m(x_j,y_j)$ with weights $\hat{w}$ on $D_1$ to get the DRSK estimator. The detailed procedure is described in Algorithm \ref{DRSKalgorithm}.

The terminology ``doubly robust" in DRSK is borrowed from doubly robust estimators in off-policy learning \citep{dudik2011doubly,dudik2014doubly,jiang2016doubly,farajtabar2018more}. In fact, we can rewrite the DRSK estimator as
\begin{align}
\hat{\theta}_{DRSK}:=&\sum_{j=m+1}^{n} \hat{w}_j f_m(x_j,y_j)\nonumber\\
=&\sum_{j=m+1}^{n} \hat{w}_j (f(x_j,y_j)-s_m(x_j)+\mu_X(s_m))\nonumber\\
=&\mu_X(s_m) + \sum_{j=m+1}^{n} \hat{w}_j (f(x_j,y_j)-s_m(x_j))\label{equ:doublyrobust}
\end{align}

The doubly robust estimator \citep{dudik2014doubly} is known to be a combination of two approaches: direct method (DM) and inverse propensity score (IPS). The second term in \eqref{equ:doublyrobust} is an importance-sampling weighted average of the residuals from the regression, which is similar to the part of IPS in doubly robust estimators. The first term in \eqref{equ:doublyrobust} is similar to the DM by using $s_m$ as an approximation of $f$:
$$\sum_{j=m+1}^n \frac{1}{n-m}s_m(x_j)$$
except that in our setting, there is no need to estimate the expectation of $s_m$ under $\pi$ since $s_m \in \mathcal{H}_+$ has a known expectation by our construction. Note that the first term in \eqref{equ:doublyrobust} is essentially the simplified CF estimator suggested by \citet{CF1} (Algorithm \ref{SimCFalgorithm}). In this sense, CF is similar to DM.

\begin{algorithm} \label{DRSKalgorithm}
\SetAlgoLined
\textbf{Goal}: Estimate $\theta:=\mathbb{E}_{\pi}[f(X,Y)]$;\\
\textbf{Input}: A set of i.i.d. samples $D=\{(x_j,z_j=f(x_j,y_j))\}_{j=1,\cdots,n}$ drawn from some distribution $q$, the reproducing kernel $k_0$ of $\mathcal{H}_0$ and $k_+=k_0+1$ of $\mathcal{H}_+$;\\
\textbf{Procedure}: 
(1) We divide the dataset $D$ into two disjoint subsets as $D_0 = \{(x_i,f(x_i,y_i))\}^m_{i=1}$ and $D_1 = \{(x_i,f(x_i,y_i))\}^n_{i=m+1}$, where $1 \le m < n$;\\
(2) $D_0$ is used to construct the CF-adjusted sample
$$f_m(x,y)=f(x,y)-s_m(x)+\mu_X(s_m).$$
Let
$$\hat{z}=(f(x_1,y_1),\cdots, f(x_m,y_m))^T,$$
$$K_+=(k_+(x_i,x_j))_{i,j =1,\cdots, m},$$
$$\hat{k}_+(x)=(k_+(x_1,x),\cdots, k_+(x_m,x))^T,$$
$$\beta =(K_+ +\lambda m I)^{-1} \hat{z}.$$
Then $s_m(x)= \beta^T \hat{k}_+(x)$ and
$\mu_X(s_m)=\beta^T \mathbf{1}$.
(See Section \ref{sec:CF} for details.)\\
(3) $D_1$ is used to construct the importance weights $\hat{w}$. Let $$K_0=(k_0(x_i,x_j))_{i,j =m+1,\cdots, n},$$
and $\hat{w}$ is the optimal solution to the following quadratic optimization problem
\begin{align} \label{COP3}
    \hat{w}= \mathop{\arg\min}_{w} \left\{w^T K_0 w, \text{ s.t. } \sum_{j=m+1}^{n} w_j=1, 0\le w_j \le \frac{B_0}{n-m}\right\}
\end{align}
If the noise $Y$ does not exist, we simply set $B_0=+\infty$. Otherwise, set $B_0=2B$ for the canonical rate and $B_0=4B$ for a supercanonical rate (see Assumption \ref{lrupper} for the definition of $B$ and Theorems \ref{mainDRSK0} and \ref{mainDRSK0-2} for details).\\   
\textbf{Output}: DRSK estimator
$$\hat{\theta}_{DRSK}:=\sum_{j=m+1}^{n} \hat{w}_j f_m(x_j,y_j).$$
\caption{Doubly Robust Stein-Kernelized Estimator (DRSK)}
\end{algorithm}

\subsection{Main Findings and Comparisons} \label{sec:discussion}
We summarize our main findings and comparisons of DRSK with the existing CF and BBIS methods. To facilitate our comparisons, recall that in Section \ref{sec:PS} we have proposed a general problem setting where input samples are both partially known (meaning we have a noise term $Y$) and biased (meaning $(X,Y)$ may not be drawn from the original distribution $\pi$). Here, we elaborate and consider the following four scenarios in roughly increasing level of complexity, where the last case corresponds to the general setting introduced earlier:\\
(1) ``Standard": there is no noise term $Y$ and $X$ is drawn from $\pi$.\\
(2) ``Partial": there is a noise term $Y$ and $(X,Y)$ is drawn from $\pi$.\\
(3) ``Biased": there is no noise term $Y$ and $X$ is drawn from $q$.\\
(4) ``Both": there is a noise term $Y$ and $(X,Y)$ is drawn from $q$.

We will frequently use the abbreviations ``Standard", ``Partial", ``Biased", ``Both" to refer to each case. 
Tables \ref{Pro} summarizes the MSE rates of three different methods in each scenario with some common assumptions specified in Section \ref{sec:assumptions}. In particular, the ground-truth regression function $\bar{f}:=\mathbb{E}_\pi[f(X,Y)|X=x]\in \text{Range}(L_q^{r})$ with $\frac{1}{2}\le r\le 1$ indicating the regularity of $\bar{f}$ in the space $\mathcal{H}_+$ where the positive self-adjoint operator $L_q$ is formally defined later in \eqref{equ:integraloperator}. $M_0$ is given in Assumptions \ref{noise1} that is the bound on the noise level of $Y$ given $X$. 

\begingroup
\tabcolsep = 3.0pt
\begin{table}
\begin{center}
\caption{This table displays the MSE rates of three different methods in each scenario with some common assumptions specified in Section \ref{sec:assumptions}. The ground-truth regression function $\bar{f}:=\mathbb{E}_\pi[f(X,Y)|X=x]\in \text{Range}(L_q^{r})$ with $\frac{1}{2}\le r\le 1$. $M_0$ is given in Assumptions \ref{noise1}.} \label{Pro} 
\begin{tabular}{|c|c|c|c|c|}
\hline 
MSE & Standard & Partial & Biased & Both  \\ \hline
CF (Ass. \ref{CSA}, \ref{noise1}, \ref{lrupper2}) & $O(n^{-1-r})$ & $O(n^{-1-r})+M_0 n^{-1}$ & $O(n^{-r})$ & $O(n^{-r})+M_0 $ \\ \hline 
BBIS (Ass. \ref{CSA}, \ref{noise1}, \ref{lrupper}) & $O(n^{-1})$ & $O(n^{-1})$ & $O(n^{-1})$ & $O(n^{-1})$ \\ \hline 
BBIS (Ass. \ref{CSA}, \ref{noise1}, \ref{lrupper}, \ref{lrupper4}) & $o(n^{-1})$ & $o(n^{-1})+M_0n^{-1}$ & $o(n^{-1})$ & $o(n^{-1})+M_0n^{-1}$\\ \hline 
DRSK (Ass. \ref{CSA}, \ref{noise1}, \ref{lrupper}) & $O(n^{-\frac{1}{2}-r})$ &  $O(n^{-\frac{1}{2}-r})+M_0 n^{-1}$ & $O(n^{-\frac{1}{2}-r})$ & $O(n^{-\frac{1}{2}-r})+M_0 n^{-1}$ \\ \hline 
DRSK (Ass. \ref{CSA}, \ref{noise1}, \ref{lrupper}, \ref{lrupper4}) & $o(n^{-\frac{1}{2}-r})$ &  $o(n^{-\frac{1}{2}-r})+M_0 n^{-1}$ & $o(n^{-\frac{1}{2}-r})$ & $o(n^{-\frac{1}{2}-r})+M_0 n^{-1}$ \\ \hline 
\end{tabular}

\end{center}
\end{table}

Table \ref{Pro} conveys the following:
\begin{enumerate}[leftmargin=*]
\item Except that the noise part remains at the canonical rate, CF has a supercanonical rate in the ``Standard" and ``Partial" cases, but subcanonical in the ``Biased" and ``Both" cases when $r<1$. In the following, the supercanonical and subcanonical rates are referred to as the property on the ``dominating" factor $X$ with the convention that the noise $Y$ is at the canonical rate.

\item BBIS in all cases has the canonical rate under a weak assumption and a supercanonical rate under a strong assumption (Assumption \ref{lrupper4}).

\item DRSK always has a supercanonical rate either when $r>\frac{1}{2}$ or under a strong assumption (Assumption \ref{lrupper4}). 

\item Suppose $r>\frac{1}{2}$. Then DRSK is strictly faster than CF and BBIS in the ``Biased" and ``Both" cases, under both weak and strong assumptions. Moreover, DRSK is strictly faster than BBIS in any case, under both weak and strong assumptions.
\end{enumerate}

CF can handle extra noise quite well (in the ``Standard" and ``Partial" cases) since it takes advantage of the functional approximation of $f$, but only partially reduce bias. 
In the ``Biased" and ``Both" cases, 
a single component $f_m(X,Y)$ (with finite $m$) in CF is generally a biased estimator of $\theta$.
The uniform weight $\frac{1}{n-m}$ in the final step of constructing CF cannot reduce the bias in $f_m-\theta$ effectively, 
which leads to underperformance in the ``Biased" and ``Both" cases. Therefore, the simplified CF estimator (Algorithm \ref{SimCFalgorithm}) could be a better alternative by omitting the final step, as recommended by \citet{CF1}. On the other hand, our results also show that a single $f_m(X,Y)$ is an asymptotically unbiased estimator of $\theta$ (at the rate of $O(m^{-r})$ when $m\to \infty$), indicating the bias reduction perspective of CF.

BBIS performs efficiently for bias reduction in general, although no higher order than $o(n^{-1})$ is guaranteed theoretically.
The validity of reducing the MSE in the BBIS estimator is entirely due to the black-box importance weights, ignoring the information of the function $f$ and the output data $f(x_i,y_i)$. This implies that a more ``regular" function $f$ in the RKHS may not be able to improve the rate in BBIS as CF does. Besides, 
the original BBIS estimator faces an additional challenge of not controlling the noise term (though the latter is not shown in Table \ref{Pro}).

Therefore, CF and BBIS both encounter difficulties when applying in the ``Both" case. 
Our DRSK estimator improves CF and BBIS by taking advantage of both estimators. 
The weighting part in DRSK utilizes knowledge of $\pi_X$ to diminish the bias as in BBIS, and the control functional part in DRSK utilizes the information of $f(x_j,y_j)$ to learn a more ``concentrated" function than $f(x)$ as in CF. 
As shown in Table \ref{Pro}, it can reduce the overall variance and bias efficiently. 


\subsection{Assumptions} \label{sec:assumptions}
To present our main theorems rigorously, we will employ the following assumptions.
\begin{assumption} [Covariate shift assumption]
$\pi_{Y|X}(y|x)=q_{Y|X}(y|x)$.\label{CSA}
\end{assumption}
Here we do not require $\pi_{Y|X}(y|x)$ to be known or to be a continuous probability distribution. Covariate shift assumption holds, for instance, 1) when $X$ is independent of $Y$, or 2) in stochastic simulation problems where the aleatory noise in the simulation output $Y$, conditional on the input parameter $X$, is not affected by the epistemic noise incurred by the input parameter estimation (e.g., $X$ is estimated via historical data independent of the simulation model). Example \ref{example} in Section \ref{sec:intro} and the computer communication network in Section \ref{sec:multipledistributions} provide concrete examples where covariate shift assumption holds.  Though not directly relevant to our work, we note that the assumption is standard in transfer learning or covariate shift problems \citep{gretton2009covariate,YS,KM,lam2019robust}. 
We remark that under this assumption, the ground-truth regression function $f_\pi(x):=\mathbb{E}_\pi[f(X,Y)|X=x]$ and $f_q(x):=\mathbb{E}_q[f(X,Y)|X=x]$ are the same so we denote it as $\bar{f}$. We will use crucially the decomposition 
$$f(X,Y)=\bar f(X)+\epsilon(X,Y)$$
where $\bar{f}(X)$ can be viewed as the contribution of the fluctuation on $f$ from $X$, and $\epsilon(X,Y)=f(X,Y)-\bar{f}(X)$ is the error term. Note that, by definition, we have 
$$\mathbb{E}[\epsilon(X,Y)]=0, \quad \mathbb{E}[\epsilon(X,Y)|X]=0, \quad \mathbb{E}[\epsilon(X,Y)\bar f(X)]=0$$
where the expectation can be taken with respect to $\pi$ or $q$ under Assumption \ref{CSA}. Next, we introduce a basic assumption on the error term.

\begin{assumption}
$\mathbb{E}_q[\epsilon(X,Y)^2]\le M_0<\infty$.
\label{noise1}
\end{assumption}


The following assumptions are considered in \citet{LL} for the biased input samples. 


\begin{assumption}
$\mathbb{E}_{x\sim q_X}[(\frac{\pi_X(x)}{q_X(x)})^2]=\mathbb{E}_{x\sim \pi_X}[\frac{\pi_X(x)}{q_X(x)}]<\infty$.
\label{lrupper3} \label{lrupper2}
\end{assumption}

\begin{assumption}
$\frac{\pi_X(x)}{q_X(x)}\le B<\infty \quad \forall x\in \Omega$. \label{lrupper}
\end{assumption}

Note that this assumption implies that $$\mathbb{E}_{x\sim q_X}\left[(\frac{\pi_X(x)}{q_X(x)})^2 k_0(x,x)\right]<\infty,$$  
$$\mathbb{E}_{x,x'\sim q_X}\left[\left(\frac{\pi_X(x)}{q_X(x)} \frac{\pi_X(x')}{q_X(x')}k_0(x,x')\right)^2\right]<\infty.$$
The reason is that we assume $\sup_{x\in \Omega} k_0(x,x) < \infty$ in our construction of $\mathcal{H}_0$ and we note the fact that $|k_0(x,x')| \le (k_0(x,x)k_0(x',x'))^{\frac{1}{2}}.$
Therefore, Assumption \ref{lrupper3} is the same as Assumption B.1 in \citet{LL}. Moreover, Assumption \ref{lrupper} implies Assumption \ref{lrupper3}.

\begin{assumption}
Suppose $k_0(x,x')$ has the following eigen-decomposition
$$k_0(x,x')=\sum_{l=1}^{\infty} \lambda_l \phi_l(x) \phi_l(x'),$$
where $\{\lambda_l\}_{l=1}^{\infty}$ are the positive eigenvalues sorted in non-increasing order, and $\{\phi_l\}_{l=1}^{\infty}$ are the eigenfunctions orthonormal w.r.t. the distribution $\pi_X(x)$, i.e., $
\mathbb{E}_{x\sim \pi_X}[\phi_l(x) \phi_{l'}(x)]=\textbf{1}_{l=l'}$. We assume that
$\tr(k_0(x,x'))=\sum_{l=1}^{\infty}\lambda_l<\infty$ and $\sup_{x\in \Omega,l}|\phi_l(x)|<\infty$.
\label{lrupper4}
\end{assumption}
Assumption \ref{lrupper} plus Assumption \ref{lrupper4} is the same as Assumption B.4 in \citet{LL}. In particular, we notice that 
$$\text{var}_{x\sim q_X}\Big[(\frac{\pi_X(x)}{q_X(x)})^2\phi_l(x)\phi_{l'}(x)\Big]\le \Big(\sup_{x\in\Omega} \frac{\pi_X(x)}{q_X(x)}\Big)^4  \Big(\sup_{x\in \Omega,l}|\phi_l(x)|\Big)^4. $$
To simplify notations, we denote $M_2$ as the upper bounds in Assumptions \ref{lrupper} and \ref{lrupper4}, i.e., 
$$\sup_{x\in\Omega} \frac{\pi_X(x)}{q_X(x)}\le M_2, \quad \sup_{x\in \Omega,l}|\phi_l(x)|\le M_2, \quad \text{var}_{x\sim q_X}\Big[(\frac{\pi_X(x)}{q_X(x)})^2\phi_l(x)\phi_{l'}(x)\Big]\le M_2$$
where the single value $M_2$ is introduced only for the convenience of our proof.

Finally, the integral operator $L_q:L^2(q_X)\to L^2(q_X)$ is defined as follows:
\begin{equation} \label{equ:integraloperator}
(L_q g)(x) :=\int_{\Omega} k_+(x,x')g(x')q_X(x')dx', \ x\in\Omega,\ g\in L^2(q_X).    
\end{equation}

This operator can be viewed as a positive self-adjoint operator on $L^2(q_X)$. We can define $L_\pi:L^2(\pi_X)\to L^2(\pi_X)$ in a similar way. Note that the power function of $L_q$, $L_q^r$, is well-defined as a positive self-adjoint operator as $L_q$ is a positive self-adjoint operator. Denote $\text{Range}(L_q^r)$ the range of $L_q^r$ on the domain $L^2(q_X)$. Note that a larger $r\ge \frac{1}{2}$ in $\text{Range}(L_q^r)$ corresponds to a more regular (and smaller) subspace of $L^2(q_X)$: $\text{Range}(L_q^{r_1}) \subset \text{Range}(L_q^{r_2})$ whenever $r_1 \ge r_2$. Conventionally, we write $L_q^{-r}g\in L^2(q_X)$ if (1) $g\in \text{Range}(L_q^r)$, (2) $L_q^{-r}g$ is an element in the preimage set of $g$ under the operator $L_q^{r}$ on the domain $L^2(q_X)$. Further details about the operator $L_q^r$ can be found in Section \ref{sec:RLS}.

\subsection{Convergence of Doubly Robust Stein-Kernelized Estimator} \label{sec:DRSK2}
We are now ready to present the main theorems for our DRSK estimator (Algorithm \ref{DRSKalgorithm}). All the theorems are understood in the following way: If the noise term $Y$ does not exist, then we drop Assumptions \ref{CSA}-\ref{noise1} (since they are automatically true) and set $M_0=0$ in the results; If $\pi=q$, then we drop Assumptions \ref{lrupper2}-\ref{lrupper} (since they are automatically true) with $B=1$.


\begin{theorem} [DRSK in all cases under weak assumptions] \label{mainDRSK0}
Suppose Assumptions \ref{CSA}, \ref{noise1}, and \ref{lrupper} hold. Take an RLS estimate with $\lambda = m^{-\frac{1}{2}}$ and $B_0=2B$ in (\ref{COP3}). Let $m=\alpha n$ where $0< \alpha< 1$. 
If $\bar{f} \in \text{Range}(L_q^{r})$ ($\frac{1}{2}\le r \le 1$), then $\mathbb{E}_q [(\hat{\theta}_{DRSK}-\theta)^2] \le C_1(C_{f} n^{-\frac{1}{2}-r}+M_0 n^{-1})$ where $C_{f}=\|L_q^{-r}\bar{f}\|^2_{L^2(q_X)}$ (which is a constant indicating the regularity of $\bar{f}$ in $\mathcal{H}_+$), $C_1$ only depends on $\alpha, \kappa, B$.
\end{theorem}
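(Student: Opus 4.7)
The plan is to decompose $\hat\theta_{DRSK}-\theta$ using the representation \eqref{equ:doublyrobust} together with the identity $\theta=\mu_X(s_m)+\mathbb{E}_{\pi_X}[\bar f(X)-s_m(X)]$. Writing $f(x,y)=\bar f(x)+\epsilon(x,y)$, this yields
\[
\hat\theta_{DRSK}-\theta = \underbrace{\sum_{j=m+1}^n \hat w_j\bigl(\bar f(x_j)-s_m(x_j)\bigr)-\mathbb{E}_{\pi_X}[\bar f-s_m]}_{T_{\mathrm{reg}}}+\underbrace{\sum_{j=m+1}^n \hat w_j\,\epsilon(x_j,y_j)}_{T_{\mathrm{noise}}}.
\]
Conditional on $D_0$ together with $\{x_{m+1},\ldots,x_n\}$, $T_{\mathrm{reg}}$ is deterministic while $T_{\mathrm{noise}}$ has mean zero since $\mathbb{E}[\epsilon\mid X]=0$ under both $\pi$ and $q$ by Assumption \ref{CSA}; hence the cross term in the squared expansion vanishes and $\mathbb{E}_q[(\hat\theta_{DRSK}-\theta)^2]=\mathbb{E}[T_{\mathrm{reg}}^2]+\mathbb{E}[T_{\mathrm{noise}}^2]$, reducing the proof to the two summands.

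For $T_{\mathrm{reg}}$, conditioning on $D_0$ fixes $s_m\in\mathcal{H}_+$; decompose $\bar f-s_m=c+\psi$ with $c\in\mathcal{C}$ and $\psi\in\mathcal{H}_0$. The constraint $\sum_j\hat w_j=1$ cancels the constant piece and $\mathbb{E}_{\pi_X}[\psi]=0$ by the kernelized Stein identity \eqref{equ:stein}, leaving $T_{\mathrm{reg}}=\sum_j\hat w_j\psi(x_j)=\langle\psi,\sum_j\hat w_j k_0(\cdot,x_j)\rangle_{\mathcal{H}_0}$. Cauchy-Schwarz in $\mathcal{H}_0$ then gives $|T_{\mathrm{reg}}|\le\|\bar f-s_m\|_{\mathcal{H}_+}\sqrt{\hat w^\top K_0\hat w}$, and independence of $D_0$ and $D_1$ yields $\mathbb{E}[T_{\mathrm{reg}}^2]\le\mathbb{E}[\|\bar f-s_m\|_{\mathcal{H}_+}^2]\cdot\mathbb{E}[\hat w^\top K_0\hat w]$.

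The KSD factor is bounded by a feasible comparator built from the true importance ratios $v_j=\pi_X(x_j)/q_X(x_j)$: the normalized version $\tilde w_j=v_j/\sum_k v_k$ satisfies the box constraint $\tilde w_j\le 2B/(n-m)$ on the high-probability event $\{\tfrac{1}{n-m}\sum_k v_k\ge 1/2\}$, whose complement has probability $O(1/n)$ by Chebyshev and Assumption \ref{lrupper}. Optimality of $\hat w$ combined with the deterministic fallback $\hat w^\top K_0\hat w\le\kappa^2$ on the bad event then gives $\mathbb{E}[\hat w^\top K_0\hat w]\le\mathbb{E}[\tilde w^\top K_0\tilde w]+O(1/n)$; expanding the quadratic form and applying Stein's identity $\mathbb{E}_{\pi_X}[k_0(\cdot,X)]=0$ kills the off-diagonal expectations, leaving only the diagonal contribution of order $\mathbb{E}_q[(\pi_X/q_X)^2 k_0(X,X)]/(n-m)=O(1/n)$. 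For the regression factor, the source condition $\bar f\in\mathrm{Range}(L_q^r)$ with the choice $\lambda=m^{-1/2}$ lets me invoke the RLS machinery of Section \ref{sec:RLS} to obtain $\mathbb{E}[\|s_m-\bar f\|_{\mathcal{H}_+}^2]\le C\lambda^{2r-1}\|L_q^{-r}\bar f\|_{L^2(q_X)}^2=O(C_f\, m^{1/2-r})$ through operator-calculus bias-plus-variance bounds. Combining and substituting $m=\alpha n$ yields $\mathbb{E}[T_{\mathrm{reg}}^2]=O(C_f\, n^{-1/2-r})$.

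For $T_{\mathrm{noise}}$, conditioning on $D_0$ and $\{x_{m+1},\ldots,x_n\}$ and using independence across $j$ with $\mathbb{E}[\epsilon_j\mid x_j]=0$ gives the conditional second moment $\sum_j\hat w_j^2\mathbb{E}[\epsilon^2\mid X=x_j]$. The deterministic box bound $\hat w_j\le B_0/(n-m)$ yields $\sum_j\hat w_j^2\mathbb{E}[\epsilon^2\mid X=x_j]\le B_0^2(n-m)^{-2}\sum_j\mathbb{E}[\epsilon^2\mid X=x_j]$, and taking outer expectation under Assumption \ref{noise1} gives $\mathbb{E}[T_{\mathrm{noise}}^2]\le B_0^2 M_0/(n-m)=O(M_0/n)$. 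Summing the two contributions yields the claimed rate. The main obstacle is the $\mathcal{H}_+$-norm KRR bound under the $q$-weighted source condition for $r\in[1/2,1]$, which is precisely what the operator-theoretic analysis of Section \ref{sec:RLS} supplies; the remaining steps are bookkeeping that exploits two cancellation mechanisms: the RKHS reproducing property combined with Stein's identity for the regression piece, and the weight box-constraint combined with the zero-mean conditional noise for the error piece.
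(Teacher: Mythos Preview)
Your proof is essentially the paper's argument (Theorem \ref{mainDRSK1}(a)): the same decomposition into a regression piece bounded by $\|\bar f-s_m\|_{\mathcal{H}_+}^2\cdot\mathbb{S}(\{\hat w_j,x_j\},\pi_X)$ via the reproducing property, the same feasible-comparator bound on the KSD (paper's Lemma \ref{varifyweight1}), the same box-constraint bound on the noise (Lemma \ref{noiselemma}), and the same KRR estimate (Corollary \ref{main_lemma2}). Your observation that the cross term $\mathbb{E}[T_{\mathrm{reg}}T_{\mathrm{noise}}]$ vanishes by conditioning is a slight sharpening over the paper's Cauchy--Schwarz factor of $2$.

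Two small points need tightening. First, Stein's identity does \emph{not} directly kill the off-diagonal expectations for the \emph{self-normalized} weights $\tilde w_j=v_j/\sum_k v_k$, because the random denominator couples all coordinates; on the good event you should first bound $\tilde w^\top K_0\tilde w\le 4(n-m)^{-2}\,v^\top K_0 v$ (using $\sum_k v_k\ge (n-m)/2$) and then apply the cancellation to the unnormalized quadratic form, where it is valid. The paper sidesteps this by citing Theorem B.2 of \citet{LL}. Second, the KRR bound from Corollary \ref{main_lemma2} carries an additive $+\,2\kappa^2 M_0$ term in $\mathbb{E}_q[\|\bar f-s_m\|_{\mathcal{H}_+}^2]$ that you omitted; it is harmless here since it is multiplied by the $O(n^{-1})$ KSD factor and absorbed into the $M_0 n^{-1}$ term, but the intermediate claim $\mathbb{E}[\|s_m-\bar f\|_{\mathcal{H}_+}^2]=O(C_f m^{1/2-r})$ as stated is not correct.
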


Next we can obtain a better result with a stronger assumption. 

\begin{theorem} [DRSK in all cases under strong assumptions] \label{mainDRSK0-2}
Suppose Assumptions \ref{CSA}, \ref{noise1}, \ref{lrupper}, and \ref{lrupper4} hold. Take an RLS estimate with $\lambda = m^{-\frac{1}{2}}$ and $B_0=4B$ in (\ref{COP3}). Let $m=\alpha n$ where $0< \alpha< 1$. 
If $\bar{f} \in \text{Range}(L_q^{r})$ ($\frac{1}{2}\le r \le 1$), then $\mathbb{E}_q [(\hat{\theta}_{DRSK}-\theta)^2] \le C_1(C_{f,n} n^{-\frac{1}{2}-r}+M_0 n^{-1})$ where $C_{f,n}=\|L_q^{-r}\bar{f}\|^2_{L^2(q_X)} \cdot o(1)$ as $n\to \infty$, $C_1$ only depends on $\alpha, \kappa, B$.
\end{theorem}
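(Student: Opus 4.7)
My plan is to exploit the doubly-robust identity \eqref{equ:doublyrobust} and decompose the error of $\hat\theta_{DRSK}$ into a ``regression-residual'' piece and a ``pure-noise'' piece, then handle each using the independence of $D_0$ (which determines $s_m$) and $D_1$ (which determines $\hat{w}$ and the evaluation points $\{x_j\}_{j>m}$). Writing $f(X,Y)=\bar{f}(X)+\epsilon(X,Y)$ and setting $h:=\bar{f}-s_m$, the identity together with $\theta=\mu_X(\bar{f})$ and $\sum_j \hat{w}_j=1$ collapses the error into
\begin{equation*}
\hat\theta_{DRSK}-\theta=\Big(\sum_{j=m+1}^{n}\hat{w}_j\,h(x_j)-\mu_X(h)\Big)+\sum_{j=m+1}^{n}\hat{w}_j\,\epsilon(x_j,y_j).
\end{equation*}
Squaring and using $(a+b)^2\le 2a^2+2b^2$ reduces the proof to bounding the two resulting expectations.

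For the first piece I would use the reproducing structure of $\mathcal{H}_+$. Since $\bar{f}\in\text{Range}(L_q^r)\subset\mathcal{H}_+$ for $r\ge 1/2$ and $s_m\in\mathcal{H}_+$ by construction, we have $h\in\mathcal{H}_+$. Decomposing $h=c+\psi$ with $c\in\mathcal{C}$ and $\psi\in\mathcal{H}_0$, the constant part cancels against $\sum_j \hat{w}_j=1$, and $\mu_X(\psi)=0$ by Stein's identity \eqref{equ:stein}, leaving
\begin{equation*}
\sum_{j>m}\hat{w}_j\,h(x_j)-\mu_X(h)=\Big\langle\psi,\;\sum_{j>m}\hat{w}_j\,k_0(\cdot,x_j)\Big\rangle_{\mathcal{H}_0}.
\end{equation*}
Cauchy--Schwarz then yields the key bound
\begin{equation*}
\Big|\sum_{j>m}\hat{w}_j\,h(x_j)-\mu_X(h)\Big|\le \|h\|_{\mathcal{H}_+}\sqrt{\mathbb{S}(\{x_j,\hat{w}_j\},\pi_X)}.
\end{equation*}
Because $D_0$ and $D_1$ are independent, the expectation of the square factorizes cleanly as $\mathbb{E}_q[\|h\|_{\mathcal{H}_+}^2]\cdot\mathbb{E}_q[\mathbb{S}(\{x_j,\hat{w}_j\},\pi_X)]$.

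The two factors are then bounded separately. The KRR theory developed in Section~\ref{sec:RLS}, applied at $\lambda=m^{-1/2}$ and source exponent $r\ge 1/2$, gives $\mathbb{E}_q[\|h\|_{\mathcal{H}_+}^2]\le C\,\|L_q^{-r}\bar{f}\|_{L^2(q_X)}^2\cdot m^{(1-2r)/2}$ up to a variance term that is sharper under Assumption~\ref{lrupper4} because the effective dimension becomes bounded there. The BBIS analysis of \citet{LL}, revisited with the box constraint $B_0=4B$ and the summable eigen-expansion from Assumption~\ref{lrupper4}, delivers the improved expected-KSD bound $\mathbb{E}_q[\mathbb{S}(\{x_j,\hat{w}_j\},\pi_X)]=o(n^{-1})$; this is precisely the $o(1)$ improvement over the weak-assumption $O(n^{-1})$ rate that will surface as the $o(1)$ factor in $C_{f,n}$. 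Multiplying and using $m=\alpha n$ gives the first-piece contribution $\|L_q^{-r}\bar{f}\|_{L^2(q_X)}^2\cdot o(n^{-1/2-r})$.

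For the noise piece, the box constraint $\hat{w}_j\le B_0/(n-m)$ in \eqref{COP3} is decisive: conditioning on $D_0$ and $\{x_j\}_{j>m}$, the residuals $\epsilon(x_j,y_j)$ are mean zero, mutually independent, and have second moment bounded by $M_0$ by Assumptions~\ref{CSA} and \ref{noise1}, so
\begin{equation*}
\mathbb{E}_q\left[\Big(\sum_{j>m}\hat{w}_j\,\epsilon(x_j,y_j)\Big)^{\!2}\right]\le\frac{B_0^2}{(n-m)^2}\sum_{j>m}\mathbb{E}_q[\epsilon(x_j,y_j)^2]=O\!\left(\frac{M_0}{n}\right),
\end{equation*}
which yields the $M_0 n^{-1}$ term. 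Combining the two contributions produces the claimed MSE bound. The main obstacle, in my view, is coordinating the KRR bias/variance trade-off at the fixed rate $\lambda=m^{-1/2}$ with the improved $o(n^{-1})$ expected-KSD bound: one has to verify that after multiplication the KRR variance term does not overwhelm the leading bias $\|L_q^{-r}\bar{f}\|^2\cdot m^{(1-2r)/2}$, so that the vanishing $o(1)$ factor in $C_{f,n}$ genuinely inherits from the Assumption~\ref{lrupper4} improvement on the KSD rather than being diluted by the regression residual.
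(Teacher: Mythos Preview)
Your approach is essentially identical to the paper's: the same decomposition into the $\mathcal{H}_0$-residual piece bounded by $\|\bar f-s_m\|_{\mathcal{H}_+}\sqrt{\mathbb{S}}$ via Cauchy--Schwarz, the same factorization by independence of $D_0$ and $D_1$, Corollary~\ref{main_lemma2} for the KRR factor, Lemma~\ref{varifyweight2} for the $o((n-m)^{-1})$ KSD factor under Assumptions~\ref{lrupper} and~\ref{lrupper4}, and Lemma~\ref{noiselemma} for the noise piece.

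One correction worth making: Assumption~\ref{lrupper4} plays no role in the KRR bound. It concerns the eigen-decomposition of $k_0$ with respect to $\pi_X$, not of $k_+$ with respect to $q_X$, so the KRR variance term $2\kappa^2 M_0$ in Corollary~\ref{main_lemma2} is not sharpened; it remains as is. The ``obstacle'' you flag is therefore resolved differently than you suggest: after multiplying by the KSD factor one obtains $2\kappa^2 M_0\cdot o((n-m)^{-1})=o(M_0 n^{-1})$, which is simply absorbed into the $M_0 n^{-1}$ term already present from the noise piece. Thus the $o(1)$ in $C_{f,n}$ comes cleanly and exclusively from the improved KSD rate, with no need for the KRR variance to beat the KRR bias.
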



Theorem \ref{mainDRSK0-2} shows that except that the noise part remains at the canonical rate, DRSK achieves effectively a supercanonical rate in all cases.

We make a remark on the conditions $\bar{f} \in \text{Range}(L_q^{r})$ ($\frac{1}{2}\le r \le 1$) appearing in Theorems \ref{mainDRSK0} and \ref{mainDRSK0-2} (as well as the subsequent theorems). The requirement of $r= \frac{1}{2}$ is essentially equivalent to saying that $\bar{f}$ (potentially with a difference on a set of measure zero with respect to the measure $q_X$) is in the RKHS $\mathcal{H}_+$; See Section \ref{sec:RLS} for technical details. A larger $r\ge \frac{1}{2}$ corresponds to a more restrictive assumption that $\bar{f}$ is in a more regular (and smaller) subspace of $\mathcal{H}_+$, and leads to a better MSE rate in our theorems (which is consistent with intuition). Therefore, as long as $\bar{f} \in \mathcal{H}_+$, we can assert $r\ge \frac{1}{2}$ and apply Theorems \ref{mainDRSK0} and \ref{mainDRSK0-2}. 

In addition, we pinpoint that $\bar{f} \in \mathcal{H}_+$ is a common and necessary assumption in kernel-based CF and IS \citep{CF1,CF2,LL}. The KRR theory \citep{SW} indicates that the approximation and estimation error of KRR can be as large as $O(1)$ if $\bar{f}$ is merely in a large space like $L^2(q_X)$. Therefore, we can foresee that the supercanonical rate can be achieved only when the ground-truth regression function is in a small regular space like the Stein-Kernelized RKHS. It is generally not easy to check $\bar{f} \in \text{Range}(L_q^{r})$ or $\bar{f} \in \mathcal{H}_+$ in practice as the ground-truth regression function $\bar{f}$ is typically unknown. Nevertheless, our algorithms can still be applied and the experimental results in Sections \ref{sec:illustration} and \ref{sec:multipledistributions} show that our performance could still be superior despite the challenge in assumption verification.

Another version of Theorem \ref{mainDRSK0} (and similarly, Theorem \ref{mainDRSK0-2}) to replace the requirement $\bar{f} \in \text{Range}(L_q^{r})$ is to assume that there exists a $\varepsilon>0$ and $g \in \text{Range}(L_q)$, such that $\|\bar{f}-g\|^2_{\mathcal{H}_+}\le \varepsilon$. This assumption holds, for instance, if $\bar{f} \in \text{Range}(L_q^{\frac{1}{2}})$ (Proposition \ref{ease2} in Section \ref{sec:RLS}).
Then under this assumption, $\mathbb{E}_q [(\hat{\theta}_{DRSK}-\theta)^2] \le C_1(\|L_q^{-1}g\|^2_{L^2(q_X)} n^{-\frac{3}{2}}+M_0 n^{-1}+\varepsilon n^{-1})$ where $C_1$ only depends on $\alpha, \kappa, B$. 

A proof outline of Theorems \ref{mainDRSK0} and \ref{mainDRSK0-2} is in Section \ref{sec:outline}. Detailed proofs are given in Section \ref{sec:DRSK}.


\subsection{Convergence of Control Functional}
We present our main results for CF (Algorithm \ref{CFalgorithm}) including the results for SimCF (Algorithm \ref{SimCFalgorithm}), which provide comparisons to our results for DRSK.

\begin{theorem}[CF in the ``Standard" case]\label{mainCF0-2}
Take an RLS estimate with $\lambda = m^{-\frac{1}{2}}$. Let $m=\alpha n$ where $0< \alpha< 1$. 
If $f \in \text{Range}(L_\pi^{r})$ ($0\le r \le 1$), then $\mathbb{E}_\pi [(\hat{\theta}_{CF}-\theta)^2] \le C_1C_{f} n^{-1-r}$ where $C_{f}=\|L_\pi^{-r}f\|^2_{L^2(\pi_X)}$ (which is a constant indicating the regularity of $f$ in $\mathcal{H}_+$), and $C_1$ only depends on $\alpha, \kappa$. In particular, $\mathbb{E}_\pi [(\mu_X(s_m)-\theta)^2] \le C_1C_{f} m^{-r}$.
\end{theorem}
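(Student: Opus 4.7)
The plan is to exploit the fact that in the ``Standard'' case the samples in $D_1$ are i.i.d.\ from $\pi$ and, conditional on $D_0$, the CF-adjusted sample $f_m(X)$ is unbiased for $\theta$; the whole argument then reduces to bounding the $L^2(\pi_X)$ approximation error of the KRR fit $s_m$ to $f$.

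First I would condition on $D_0$. Since $\mu_X(s_m)$ is $D_0$-measurable and $s_m\in \mathcal{H}_+$ has its mean under $\pi_X$ available in closed form, one has
\begin{equation*}
\mathbb{E}_\pi[f_m(X)\mid D_0]=\mathbb{E}_\pi[f(X)]-\mathbb{E}_\pi[s_m(X)\mid D_0]+\mu_X(s_m)=\theta,
\end{equation*}
so $\hat{\theta}_{CF}$ is conditionally unbiased. By independence of $D_0$ and $D_1$, the variance given $D_0$ is
\begin{equation*}
\mathrm{Var}_\pi[\hat{\theta}_{CF}\mid D_0]=\frac{1}{n-m}\mathrm{Var}_\pi[f(X)-s_m(X)\mid D_0]\le \frac{1}{n-m}\|f-s_m\|_{L^2(\pi_X)}^2.
\end{equation*}
Taking outer expectation and using $m=\alpha n$ gives
\begin{equation*}
\mathbb{E}_\pi[(\hat{\theta}_{CF}-\theta)^2]\le \frac{1}{(1-\alpha)n}\,\mathbb{E}\bigl[\|s_m-f\|_{L^2(\pi_X)}^2\bigr].
\end{equation*}

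Next I would invoke the regularized least-squares convergence theory that is developed in Section~\ref{sec:RLS}. Under the source condition $f\in\mathrm{Range}(L_\pi^{r})$ with $0\le r\le 1$ and the choice $\lambda=m^{-1/2}$ (which balances approximation and estimation error at the rate $m^{-r}$ for this range of $r$), that machinery yields
\begin{equation*}
\mathbb{E}\bigl[\|s_m-f\|_{L^2(\pi_X)}^2\bigr]\le C\,\|L_\pi^{-r}f\|_{L^2(\pi_X)}^2\,m^{-r},
\end{equation*}
with a constant $C$ depending only on $\kappa$. Substituting this bound and using $m=\alpha n$ turns the factor $m^{-r}/(n-m)$ into $n^{-1-r}$ times a constant depending only on $\alpha$ and $\kappa$, which is exactly the claimed bound with $C_f=\|L_\pi^{-r}f\|_{L^2(\pi_X)}^2$.

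Finally, for the ``in particular'' statement on the simplified CF, I would apply Jensen's inequality to the $D_0$-measurable quantity $\mu_X(s_m)$. Writing $\mu_X(s_m)-\theta=\mathbb{E}_\pi[s_m(X)-f(X)\mid D_0]$ and squaring,
\begin{equation*}
\mathbb{E}_\pi[(\mu_X(s_m)-\theta)^2]\le \mathbb{E}\bigl[\|s_m-f\|_{L^2(\pi_X)}^2\bigr]\le C_1 C_f\,m^{-r},
\end{equation*}
by the same KRR bound. The main technical work sits entirely inside the RLS convergence result for $\|s_m-f\|_{L^2(\pi_X)}^2$; once that is in hand, the rest of the argument is a routine conditioning-plus-variance decomposition. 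The one point requiring care is that the RLS rate must be stated in $L^2(\pi_X)$ (not in $\mathcal{H}_+$ norm) and must hold uniformly across $0\le r\le 1$ with the specific choice $\lambda=m^{-1/2}$, so I would need to check that the version of the theorem invoked from Section~\ref{sec:RLS} indeed covers this full range rather than only the ``well-specified'' regime $r=\tfrac{1}{2}$.
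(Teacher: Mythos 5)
Your proposal is correct and follows essentially the same route as the paper: conditional unbiasedness of $f_m$ given $D_0$ yields the $\frac{1}{n-m}$ factor, the per-sample error is bounded by $\mathbb{E}[\|s_m-f\|^2_{L^2(\pi_X)}]$, and this is controlled by the KRR bound of Corollary \ref{main_lemma} with $\lambda=m^{-1/2}$ (which, as you anticipated, is stated in the $L^2$ norm for the full range $0\le r\le 1$ and, with $q=\pi$ and $M_0=0$, gives exactly the $C_\kappa C_f m^{-r}$ rate), while the ``in particular'' claim follows by the same Jensen/Cauchy--Schwarz step used in the paper. No gaps.
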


\begin{theorem}[CF in the ``Partial" case]\label{mainCF0-1}
Suppose Assumption \ref{noise1} holds and take an RLS estimate with $\lambda = m^{-\frac{1}{2}}$. Let $m=\alpha n$ where $0< \alpha< 1$. 
If $\bar{f} \in \text{Range}(L_\pi^{r})$ ($0\le r \le 1$), then 
$\mathbb{E}_\pi [(\hat{\theta}_{CF}-\theta)^2] \le C_1(C_{f} n^{-1-r}+M_0 n^{-1})$
where $C_{f}=\|L_\pi^{-r}\bar{f}\|^2_{L^2(\pi_X)}$ (which is a constant indicating the regularity of $\bar{f}$ in $\mathcal{H}_+$), and $C_1$ only depends on $\alpha, \kappa$. In particular, $\mathbb{E}_\pi [(\mu_X(s_m)-\theta)^2] \le C_1(C_{f} m^{-r}+M_0)$.

\end{theorem}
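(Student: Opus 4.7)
The plan is to condition on the training subset $D_0$, use the conditional unbiasedness of $\hat\theta_{CF}$ under $\pi$-sampling, and reduce the mean squared error to an expected conditional variance controlled by the KRR $L^2(\pi_X)$ error of $s_m$ together with the intrinsic noise $\epsilon$. This mirrors the strategy one would use for Theorem \ref{mainCF0-2} (the ``Standard'' case), except that we must track an additional stochastic contribution from $\epsilon$ in two places: the CF average over $D_1$, and the KRR training step on $D_0$.

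First observe that $s_m\in\mathcal{H}_+$ is $D_0$-measurable with exactly computable mean $\mu_X(s_m)$ under $\pi_X$, so $\mathbb{E}_\pi[f_m(X,Y)\mid D_0]=\theta$. Since $\{(x_j,y_j)\}_{j=m+1}^n$ are i.i.d.\ under $\pi$ and independent of $D_0$, $\hat\theta_{CF}$ is, conditional on $D_0$, a sample mean of $n-m$ i.i.d.\ random variates with conditional mean $\theta$; therefore
\[
\mathbb{E}_\pi[(\hat\theta_{CF}-\theta)^2]=\frac{1}{n-m}\,\mathbb{E}_{D_0}\!\left[\mathrm{Var}_\pi\!\left(f(X,Y)-s_m(X)\,\middle|\,D_0\right)\right].
\]
Using the orthogonal decomposition $f=\bar f+\epsilon$ with $\mathbb{E}_\pi[\epsilon\mid X]=0$, the cross term drops and the conditional variance is at most $\|\bar f-s_m\|_{L^2(\pi_X)}^2+\mathbb{E}_\pi[\epsilon^2]\le \|\bar f-s_m\|_{L^2(\pi_X)}^2+M_0$ by Assumption \ref{noise1}.

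Second, invoke the RLS convergence machinery of Section \ref{sec:RLS} for the KRR estimator $s_m$ trained on noisy targets $f(x_i,y_i)=\bar f(x_i)+\epsilon_i$. With $\lambda=m^{-1/2}$ and source condition $\bar f\in\text{Range}(L_\pi^r)$, $0\le r\le 1$, this should give a bias--variance split
\[
\mathbb{E}_{D_0}\!\left[\|\bar f-s_m\|_{L^2(\pi_X)}^2\right]\le C\big(\|L_\pi^{-r}\bar f\|^2_{L^2(\pi_X)}\,m^{-r}+M_0\,m^{-1/2}\big),
\]
where the first summand is the regularization bias under the source condition and the second captures the stochastic error with $\mathrm{Var}(\epsilon_i)\le M_0$. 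Plugging this bound back in and using $m=\alpha n$ yields
\[
\mathbb{E}_\pi[(\hat\theta_{CF}-\theta)^2]\le C_1\big(C_f\,n^{-1-r}+M_0\,n^{-3/2}+M_0\,n^{-1}\big)=O\big(C_f n^{-1-r}+M_0 n^{-1}\big),
\]
which is the claimed rate since the $M_0 n^{-3/2}$ term is absorbed into $M_0 n^{-1}$. The ``in particular'' statement then follows from Jensen's inequality, $(\mu_X(s_m)-\theta)^2=\big(\mathbb{E}_{\pi_X}[s_m(X)-\bar f(X)\mid D_0]\big)^2\le \|s_m-\bar f\|^2_{L^2(\pi_X)}$, combined with the same KRR bound and the trivial estimate $M_0 m^{-1/2}\le M_0$.

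The main obstacle is the second step: lifting the noise-free KRR approximation rate (which is what suffices for Theorem \ref{mainCF0-2}) to the stochastic setting where the training targets carry mean-zero errors with only a second-moment bound $M_0$, rather than a uniform sup-norm bound. The crux is that the Section \ref{sec:RLS} machinery must deliver a variance term that scales exactly like $M_0/(\lambda m)$; this is what ultimately causes the extra noise contribution to collapse into the single $M_0 n^{-1}$ correction on top of the ``Standard'' case rate. Once the KRR bound is established, combining it with the $1/(n-m)$ averaging factor from Step~1 is routine.
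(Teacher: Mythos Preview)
Your approach is correct and matches the paper's: condition on $D_0$, use the unbiasedness of $f_m$ under $\pi$, decompose the conditional variance via $f=\bar f+\epsilon$ (the paper packages this step as Lemma~\ref{split}), and invoke Corollary~\ref{main_lemma}. One clarification regarding what you flag as the main obstacle: Corollary~\ref{main_lemma} with $\lambda=m^{-1/2}$ actually delivers a noise contribution of order $M_0/(\lambda^2 m)=M_0$, not $M_0/(\lambda m)=M_0\,m^{-1/2}$ as you state, because the Section~\ref{sec:RLS} analysis passes through the $\mathcal H_+$-norm via Proposition~\ref{first1} and pays a factor $\lambda^{-2}$. But this coarser bound is already sufficient: after dividing by $n-m$ it contributes $O(M_0 n^{-1})$, which is exactly the target term and merges with the $M_0/(n-m)$ you already obtain from the direct $\epsilon$-variance in the CF average. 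So the ``crux'' you worry about is not a real obstacle---the cruder rate that the machinery actually delivers is enough.
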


Theorem \ref{mainCF0-1} shows that in the ``Partial" case, even if the noise $Y$ is fully unknown, CF applied on only $X$ still improves the Monte Carlo rate except that the noise part remains at the canonical rate. 
The same choice of $\lambda$ is also suggested by Theorem 2 in \citet{CF1}. \footnotemark\footnotetext{In general, $\lambda>0$ is required to prevent overfitting and stabilize the inverse numerically by bounding the smallest eigenvalues away from zero \citep{hastie2009elements,welling2013kernel}.}  Our refined study in Section \ref{sec:RLS} contributes to obtaining a better rate in Theorem \ref{mainCF0-1} compared with \citet{CF1}. 

\begin{theorem}[CF in the ``Biased" case]\label{mainCF0-3}
Suppose Assumption \ref{lrupper2} holds and take an RLS estimate with $\lambda = m^{-\frac{1}{2}}$. 
If $f \in \text{Range}(L_q^{r})$ ($0\le r \le 1$), then $\mathbb{E}_q [(\hat{\theta}_{CF}-\theta)^2] \le C_1C_{f} m^{-r}$ where $C_{f}=\|L_q^{-r}f\|^2_{L^2(q_X)}$ (which is a constant indicating the regularity of $f$ in $\mathcal{H}_+$), and $C_1$ only depends on $\kappa$, $\mathbb{E}_{x\sim \pi_X}[\frac{\pi_X(x)}{q_X(x)}]$. In particular, $\mathbb{E}_q [(\mu_X(s_m)-\theta)^2] \le C_1C_{f} m^{-r}$.
\end{theorem}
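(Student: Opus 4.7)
The plan is to condition on the training subset $D_0$ (which determines the regression function $s_m$), decompose the error of $\hat\theta_{CF}$ into a conditional bias and a conditional variance, and reduce both to the $L^2(q_X)$-approximation error $\|f-s_m\|_{L^2(q_X)}^2$ so that the result follows from the kernel ridge regression rate proved in Section \ref{sec:RLS}. Writing $\hat\theta_{CF}=\mu_X(s_m)+\frac{1}{n-m}\sum_{j=m+1}^n(f(x_j)-s_m(x_j))$, I get
\begin{equation*}
\mathbb E_q\!\bigl[(\hat\theta_{CF}-\theta)^2\bigr]=\mathbb E_q\!\bigl[(\mathbb E_q[\hat\theta_{CF}\mid D_0]-\theta)^2\bigr]+\mathbb E_q\!\bigl[\mathrm{Var}_q(\hat\theta_{CF}\mid D_0)\bigr].
\end{equation*}

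For the conditional bias, a direct computation gives
\begin{equation*}
\mathbb E_q[\hat\theta_{CF}\mid D_0]-\theta=\mathbb E_{q_X}[f-s_m]-\mathbb E_{\pi_X}[f-s_m]=-\mathbb E_{q_X}\!\Bigl[(f-s_m)\bigl(\tfrac{\pi_X}{q_X}-1\bigr)\Bigr],
\end{equation*}
so Cauchy--Schwarz in $L^2(q_X)$ yields the bound $\|f-s_m\|_{L^2(q_X)}^2\cdot(\mathbb E_{\pi_X}[\pi_X/q_X]-1)$ on its square, where Assumption \ref{lrupper2} guarantees the likelihood-ratio factor is finite. For the conditional variance I simply bound $\mathrm{Var}_q(\hat\theta_{CF}\mid D_0)\le\frac{1}{n-m}\mathrm{Var}_{q_X}(f-s_m)\le\frac{1}{n-m}\|f-s_m\|_{L^2(q_X)}^2$. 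Combining, the MSE is at most $\bigl(\mathbb E_{\pi_X}[\pi_X/q_X]-1+\frac{1}{n-m}\bigr)\cdot\mathbb E\bigl[\|f-s_m\|_{L^2(q_X)}^2\bigr]$, and I invoke the RLS bound from Section \ref{sec:RLS} which, under the source condition $f\in\mathrm{Range}(L_q^r)$ and the choice $\lambda=m^{-1/2}$, gives $\mathbb E\bigl[\|f-s_m\|_{L^2(q_X)}^2\bigr]\le C'\|L_q^{-r}f\|_{L^2(q_X)}^2 m^{-r}$ with $C'$ depending only on $\kappa$. This delivers the claimed $C_1 C_f m^{-r}$ rate.

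For the ``in particular'' clause about the simplified estimator, I would handle it directly: $(\mu_X(s_m)-\theta)^2=(\mathbb E_{\pi_X}[s_m-f])^2=(\mathbb E_{q_X}[(s_m-f)\,\pi_X/q_X])^2\le \|s_m-f\|_{L^2(q_X)}^2\cdot\mathbb E_{\pi_X}[\pi_X/q_X]$ by Cauchy--Schwarz, and then apply the same RLS bound. The only nontrivial ingredient is the KRR convergence rate in $L^2(q_X)$ under the regularity $f\in\mathrm{Range}(L_q^r)$ with $0\le r\le 1$; everything else is elementary bias/variance decomposition and a single Cauchy--Schwarz step. That regularity-to-rate implication is what Section \ref{sec:RLS} establishes as a black box, and it is the main technical hurdle in obtaining the explicit dependence on $C_f$ and the uniform constant $C_1$.
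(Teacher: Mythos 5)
Your proof is correct, and it relies on the same two key ingredients as the paper's: a Cauchy--Schwarz step involving the likelihood ratio $\pi_X/q_X$ (made finite by Assumption \ref{lrupper2}) and the KRR rate $\mathbb{E}_q[\nu_X((f-s_m)^2)]\le C_\kappa\,m^{-r}\|L_q^{-r}f\|^2_{L^2(q_X)}$ from Corollary \ref{main_lemma} with $M_0=0$ and $\lambda=m^{-1/2}$; your ``in particular'' bound for $\mu_X(s_m)$ is literally the paper's. Where you differ is the aggregation over $D_1$: the paper notes that, conditional on $D_0$, $f_m(x_j)$ is biased under $q$, so it simply bounds the squared average by the per-sample mean squared error $\nu((f_m-\theta)^2)$ via Jensen/Cauchy--Schwarz (accepting that cross terms do not vanish) and then splits $f_m-\theta$ into $(f-s_m)$ and $(\mu_X(s_m)-\theta)$, whereas you perform a conditional bias--variance decomposition given $D_0$, bounding the squared conditional bias by $(\mathbb{E}_{\pi_X}[\pi_X/q_X]-1)\,\|f-s_m\|^2_{L^2(q_X)}$ (using that $\pi_X/q_X-1$ has $q_X$-mean zero) and the conditional variance by $\tfrac{1}{n-m}\|f-s_m\|^2_{L^2(q_X)}$. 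Your route is marginally sharper and more informative: it isolates the bias term as the one responsible for the non-improving $m^{-r}$ rate and shows the $D_1$-average only contributes at order $(n-m)^{-1}m^{-r}$, which makes explicit the paper's remark that the second subset $D_1$ cannot improve the rate in the ``Biased'' case (and why SimCF with $m=n$ is preferable); the paper's cruder bound reaches the same $C_1C_f m^{-r}$ conclusion with $C_1$ depending only on $\kappa$ and $\mathbb{E}_{\pi_X}[\pi_X/q_X]$, exactly as yours does after noting $\tfrac{1}{n-m}\le 1$.
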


Theorem \ref{mainCF0-3} implies that the CF estimator retains consistency regardless of the generating distribution of $X$, as long as this distribution is not too ``far from" the target distribution in the sense of a controllable likelihood ratio. This shows that CF can partially reduce the bias in addition to variance reduction, yet it may be less favorable since a supercanonical rate is not guaranteed theoretically.
Note that the above bound has nothing to do with
$n-m$ since in this case, a single $f_m(X,Y)$ is not necessarily an unbiased estimator of $\theta$ and hence taking the average of $f_m(x_j,y_j)$ may not improve the rate. Therefore it is reasonable to take $m=n$ to minimize the upper bound and use the simplified CF estimator (Algorithm \ref{SimCFalgorithm}) in this case.




\begin{theorem}[CF in the ``Both" case]\label{mainCF0}
Suppose Assumptions \ref{CSA}, \ref{noise1}, and \ref{lrupper2} hold. Take an RLS estimate with $\lambda = m^{-\frac{1}{2}}$. Let $m=\alpha n$ where $0< \alpha< 1$. 
If $\bar{f} \in \text{Range}(L_q^{r})$ ($0\le r \le 1$), then $\mathbb{E}_q [(\hat{\theta}_{CF}-\theta)^2] \le C_1(C_{f} n^{-r}+M_0)$ where $C_{f}=\|L_q^{-r}\bar{f}\|^2_{L^2(q_X)}$ (which is a constant indicating the regularity of $\bar{f}$ in $\mathcal{H}_+$), and $C_1$ only depends on $\alpha$, $\kappa$, $\mathbb{E}_{x\sim \pi_X}[\frac{\pi_X(x)}{q_X(x)}]$. In particular, $\mathbb{E}_q [(\mu_X(s_m)-\theta)^2] \le C_1(C_{f} m^{-r}+M_0)$.\\
\end{theorem}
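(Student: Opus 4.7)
The plan is to condition on the first data split $D_0$ (so that $s_m$ is fixed), perform a bias-variance decomposition of the average over $D_1$, and reduce both pieces to bounding the regression error $\|h_m\|_{L^2(q_X)}^2$ of the KRR estimator $s_m$, where $h_m := \bar f - s_m$. By Assumption \ref{CSA} I can write $f(X,Y) = \bar f(X) + \epsilon(X,Y)$ with $\mathbb{E}_q[\epsilon \mid X] = 0$ and $\mathbb{E}_q[\epsilon^2] \le M_0$. Since $(x_j,y_j)_{j>m}$ are iid from $q$ given $D_0$, a direct computation using $\theta = \mu_X(\bar f)$ gives
\[
\mathbb{E}_q[\hat\theta_{CF} - \theta \mid D_0] = \nu_X(h_m) - \mu_X(h_m),\qquad
\mathbb{E}_q[(\hat\theta_{CF} - \theta)^2 \mid D_0] = \bigl(\nu_X(h_m) - \mu_X(h_m)\bigr)^2 + \frac{\mathrm{Var}_q(f_m(X,Y))}{n-m}.
\]

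The next step is to dominate each piece by $\|h_m\|_{L^2(q_X)}^2$ and $M_0$. For the squared bias, I rewrite $\mu_X(h_m) = \mathbb{E}_{q_X}[h_m(X)\pi_X(X)/q_X(X)]$ and apply Cauchy-Schwarz with Assumption \ref{lrupper2}, yielding $\mu_X(h_m)^2 \le C_\pi \|h_m\|_{L^2(q_X)}^2$ for $C_\pi := \mathbb{E}_{\pi_X}[\pi_X/q_X]$; trivially $\nu_X(h_m)^2 \le \|h_m\|_{L^2(q_X)}^2$. For the variance, the decomposition $f_m(X,Y) = h_m(X) + \mu_X(s_m) + \epsilon(X,Y)$, together with the cross-term identity $\mathbb{E}_q[h_m(X)\epsilon] = \mathbb{E}_q[h_m(X)\mathbb{E}_q[\epsilon \mid X]] = 0$, gives $\mathrm{Var}_q(f_m) \le \|h_m\|_{L^2(q_X)}^2 + M_0$. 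Combining,
\[
\mathbb{E}_q[(\hat\theta_{CF} - \theta)^2 \mid D_0] \le 2(1 + C_\pi)\,\|h_m\|_{L^2(q_X)}^2 + \frac{\|h_m\|_{L^2(q_X)}^2 + M_0}{n-m}.
\]

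Taking expectation over $D_0$ reduces everything to bounding $\mathbb{E}_{D_0}[\|h_m\|_{L^2(q_X)}^2]$, which is precisely the expected $L^2(q_X)$ excess risk of the KRR estimator $s_m$ targeting $\bar f$ from noisy responses drawn under $q$. The regularized least-squares theory developed in Section \ref{sec:RLS}, applied with $\lambda = m^{-1/2}$ and $\bar f \in \text{Range}(L_q^r)$ for $r \in [0,1]$, supplies a bound of the form $\mathbb{E}_{D_0}[\|h_m\|_{L^2(q_X)}^2] \le C_\kappa(C_f m^{-r} + M_0)$. Setting $m = \alpha n$ absorbs the ratio into the constant $C_1$ and produces the stated $C_1(C_f n^{-r} + M_0)$. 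The ``in particular'' claim for $\mathbb{E}_q[(\mu_X(s_m) - \theta)^2]$ follows immediately, since $\mu_X(s_m) - \theta = -\mu_X(h_m)$ and the Cauchy-Schwarz bound plus the same RLS estimate give the same right-hand side (without the variance contribution).

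The main obstacle will be invoking the correct RLS bound in the last step: the permitted range $0 \le r \le 1$ includes values below $1/2$, at which point $\bar f$ need not lie in the RKHS $\mathcal{H}_+$, so I cannot rely on the classical in-RKHS KRR analysis on compact domains and must instead use the misspecified, non-compact analysis developed in Section \ref{sec:RLS}. Once that machinery delivers the desired $C_f m^{-r} + M_0$-type bound, the rest of the argument is routine bias-variance bookkeeping, and it is crucial to note that the $M_0$ term does \emph{not} inherit a $1/(n-m)$ factor here (in contrast to the ``Partial'' case of Theorem \ref{mainCF0-1}), because under sampling from $q$ the conditional mean $\nu(f_m) - \theta$ is nonzero, so uniform averaging on $D_1$ cannot cancel the noise-driven regression error baked into $s_m$.
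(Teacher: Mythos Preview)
Your proof is correct and follows essentially the same approach as the paper: both reduce the problem to bounding $\mathbb{E}_q[\|\bar f - s_m\|_{L^2(q_X)}^2]$ via Corollary \ref{main_lemma}, use the Cauchy--Schwarz/likelihood-ratio trick $\mu_X(h_m)^2 \le C_\pi\|h_m\|_{L^2(q_X)}^2$ from Assumption \ref{lrupper2}, and handle the noise $\epsilon$ through its zero conditional mean. The only cosmetic difference is that the paper splits $\hat\theta_{CF}-\theta$ additively into the $\bar f_m$-part and the $\epsilon$-part before squaring, whereas you condition on $D_0$ and perform an exact bias--variance decomposition; your organization is slightly cleaner but the key ingredients and resulting bound are identical.
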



Another version of Theorem \ref{mainCF0} to replace the requirement $\bar{f} \in \text{Range}(L_q^{r})$ is to assume that there exists a $\varepsilon>0$ and $g \in \text{Range}(L_q)$, such that $\|\bar{f}-g\|^2_{L^2(q_X)}\le \varepsilon$. This assumption holds, for instance, if $\bar{f} \in \overline{\mathcal{H}_+}^q$ (the closure of $\mathcal{H}_+$ in the space $L^2(q_X)$) (Proposition \ref{ease} in Section \ref{sec:RLS}).
Then under this assumption, $\mathbb{E}_q [(\hat{\theta}_{CF}-\theta)^2] \le C_1(\|L_q^{-1}g\|^2_{L^2(q_X)} n^{-1} +M_0 +\varepsilon)$ where $C_1$ only depends on $\kappa$, $\mathbb{E}_{x\sim \pi_X}[\frac{\pi_X(x)}{q_X(x)}]$. 

Detailed proofs of the above theorems can be found in Section \ref{sec:CF}.


\subsection{Convergence of Black-Box Importance Sampling} \label{sec:BBIS1}
We present the main theorems for BBIS (Algorithm \ref{ISalgorithm})  as follows. The first theorem in the ``Standard" and ``Biased" cases (where the noise $Y$ does not exist) is proved by \citet{LL}.
\begin{theorem}[BBIS in ``Standard" \& ``Biased" cases] \label{mainIS0-1}
Suppose $f\in \mathcal{H}_+$. BBIS $\hat{\theta}_{IS}$ satisfies the following bounds (with $B_0=+\infty$):\\
(a) Suppose Assumption \ref{lrupper3} holds. Then $\mathbb{E}_q[(\hat{\theta}_{IS}-\theta)^2] =O(n^{-1})$.\\
(b) Suppose Assumptions \ref{lrupper} and \ref{lrupper4} hold. Then $\mathbb{E}_q[(\hat{\theta}_{IS}-\theta)^2] =o(n^{-1})$.
\end{theorem}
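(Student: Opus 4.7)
The plan is to exploit the structure $f \in \mathcal{H}_+$ to turn the estimation error into an inner product on $\mathcal{H}_0$, and then control that inner product via the defining quadratic program of BBIS together with the kernelized Stein identity. First I would decompose $f = c + \psi$ with $c \in \mathcal{C}$ a constant and $\psi \in \mathcal{H}_0$, and observe that $\theta = \mathbb{E}_{\pi_X}[f] = c$ because every element of $\mathcal{H}_0$ has mean zero under $\pi_X$. Using $\sum_j \hat w_j = 1$, the error collapses to
\begin{equation*}
\hat\theta_{IS} - \theta = \sum_{j=1}^n \hat w_j \psi(x_j) = \Bigl\langle \psi, \sum_{j=1}^n \hat w_j k_0(\cdot, x_j)\Bigr\rangle_{\mathcal{H}_0},
\end{equation*}
and Cauchy--Schwarz plus the reproducing identity gives the key deterministic bound
\begin{equation*}
(\hat\theta_{IS}-\theta)^2 \;\le\; \|\psi\|_{\mathcal{H}_0}^2\, \hat w^\top K_0 \hat w \;=\; \|\psi\|_{\mathcal{H}_0}^2\, \mathbb{S}(\{x_j,\hat w_j\},\pi_X).
\end{equation*}
It therefore suffices to bound $\mathbb{E}_q[\hat w^\top K_0 \hat w]$.

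For part (a), since $\hat w$ minimizes $w^\top K_0 w$ over the simplex (we have $B_0 = +\infty$), I upper bound it by the value at the self-normalized likelihood-ratio weights $w_j^{\ast} = \tilde w_j / \sum_k \tilde w_k$ with $\tilde w_j = \pi_X(x_j)/(n\, q_X(x_j))$, which are feasible. Then
\begin{equation*}
\hat w^\top K_0 \hat w \;\le\; (w^{\ast})^\top K_0 w^{\ast} \;=\; \frac{\tilde w^\top K_0 \tilde w}{(\sum_j \tilde w_j)^2}.
\end{equation*}
Taking expectations in the numerator and splitting into diagonal and off-diagonal terms, the off-diagonal terms vanish identically by independence of samples together with the kernelized Stein identity $\mathbb{E}_{\pi_X}[k_0(X,x')] = 0$; the $n$ diagonal terms produce $n^{-1}\mathbb{E}_q[(\pi_X/q_X)^2(X)\, k_0(X,X)]$, which is $O(n^{-1})$ under Assumption~\ref{lrupper3} combined with $\sup_x k_0(x,x) < \infty$. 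Passing this bound through the random denominator $(\sum_j \tilde w_j)^2$, whose mean is $1$ and which concentrates around $1$ by the law of large numbers (Chebyshev on a good event, trivial bound on its complement), preserves the $O(n^{-1})$ rate in expectation.

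For part (b), I would sharpen to $o(n^{-1})$ using the Mercer-type expansion $k_0(x,x') = \sum_l \lambda_l \phi_l(x)\phi_l(x')$ granted by Assumption~\ref{lrupper4} to write
\begin{equation*}
\hat w^\top K_0 \hat w \;=\; \sum_l \lambda_l \Bigl(\sum_j \hat w_j \phi_l(x_j)\Bigr)^2.
\end{equation*}
Each eigenfunction $\phi_l$ has mean zero under $\pi_X$ (being orthonormal in $L^2(\pi_X)$ to the constant), so each inner sum is a weight-constrained estimator of zero. I would split at a truncation level $L$: for high frequencies $l > L$, use the uniform bound $|\sum_j \hat w_j \phi_l(x_j)| \le M_2$ coming from $\sum_j \hat w_j = 1$ and $|\phi_l| \le M_2$, giving a tail contribution $M_2^2 \sum_{l > L} \lambda_l$ that can be made arbitrarily small by $\tr(k_0) < \infty$; for the low frequencies $l \le L$, apply the optimality-against-$w^\ast$ argument of part (a) mode-by-mode, which gives expectation $O(n^{-1})$ per mode. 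Letting $L = L(n)\to\infty$ slowly yields $o(n^{-1})$ by a dominated-convergence-style bookkeeping. The main obstacle I expect is precisely this $o(n^{-1})$ improvement, since the naive bound $\sum_l \lambda_l \cdot O(n^{-1}) = O(n^{-1})$ does not break the canonical rate; the gain must come from the uniform high-frequency tail rather than from any per-mode improvement, and the argument has to thread truncation, the random denominator in $w^\ast$, and the use of Assumption~\ref{lrupper4} simultaneously.
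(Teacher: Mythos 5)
Your reduction to the kernelized Stein discrepancy is sound: writing $f=c+\psi$ with $\psi\in\mathcal{H}_0$, noting $\theta=c$, and using $\sum_j\hat w_j=1$ together with the reproducing property to get $(\hat\theta_{IS}-\theta)^2\le\|\psi\|_{\mathcal{H}_0}^2\,\hat w^\top K_0\hat w$ is exactly the bound the paper relies on (it defers the proof of this theorem to \citet{LL}; the in-paper analogue is the decomposition used for Theorem \ref{mainIS0}). Your part (a) is also essentially the correct argument: compare $\hat w$ against the feasible self-normalized likelihood-ratio weights, kill the off-diagonal terms with the Stein identity \eqref{equ:stein}, bound the diagonal by $n^{-1}\mathbb{E}_q[(\pi_X/q_X)^2k_0(X,X)]$, and handle the random normalizer on a good event; using Chebyshev rather than Hoeffding is the right move since only the second-moment Assumption \ref{lrupper3} is available here (the paper's Lemma \ref{varifyweight1} uses Hoeffding because it works under the stronger Assumption \ref{lrupper}).

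Part (b), however, has a genuine gap, and it is the one you yourself flag. Comparing $\hat w$ against the plain self-normalized IS weights can never give better than $O(n^{-1})$, because those weights have expected KSD of exact order $n^{-1}$ (their diagonal contribution $\tfrac{1}{n}\mathbb{E}_q[(\pi_X/q_X)^2k_0(X,X)]$ does not vanish). Your proposed repair does not close this: (i) ``mode-by-mode optimality'' is not available, since $\hat w$ minimizes the aggregate quadratic form $w^\top K_0w$, not each $(\sum_j w_j\phi_l(x_j))^2$ separately; and (ii) the truncation split cannot produce the gain either, because the low-frequency block is stuck at $\sum_{l\le L}\lambda_l\cdot O(n^{-1})=O(n^{-1})$ no matter how $L(n)$ is chosen (the high-frequency tail $M_2^2\sum_{l>L}\lambda_l$ can indeed be made negligible by letting $L(n)$ grow, so it is not where the improvement must come from). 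The missing idea is to exhibit a \emph{better feasible comparison weight} whose KSD is already $o(n^{-1})$: \citet{LL} construct a one-step-corrected importance weight of the form $w_i^*=\tfrac{1}{n}\tfrac{\pi_X(x_i)}{q_X(x_i)}\bigl(1-\tfrac{2}{n}\sum_{j\in\text{other fold}}\tfrac{\pi_X(x_j)}{q_X(x_j)}k_L(x_j,x_i)\bigr)$ with the rank-$L$ truncated kernel $k_L=\sum_{l\le L}\lambda_l^{0}\phi_l\phi_l$ (in the paper's notation, $k_L(x,x')=\sum_{l\le L}\phi_l(x)\phi_l(x')$), $L=n^{1/4}$, and sample splitting; this correction cancels the leading diagonal term, and Assumptions \ref{lrupper} and \ref{lrupper4} are used precisely to show these weights are feasible with overwhelming probability and have expected KSD $o(n^{-1})$. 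This is the construction reproduced in the paper's Lemma \ref{varifyweight2} for the modified BBIS, and without it (or an equivalent device) your argument cannot break the canonical rate.
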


In the ``Partial" case and ``Both" case, we have an extra noise term $Y$. Note that the weights constructed in the BBIS only depends on the $X$ factor, free of $Y$. Therefore the noise cannot be controlled by the weights. To address this issue, we impose an upper bound on each weight in (\ref{COP0}) to ensure that the noise will not blow up.

\begin{theorem}[BBIS in ``Partial" \& ``Both" cases]\label{mainIS0}
Suppose Assumptions \ref{CSA}, \ref{noise1}, and \ref{lrupper} hold and $\bar{f}\in \mathcal{H}_+$. BBIS $\hat{\theta}_{IS}$ satisfies the following bounds:\\
(a) Take $B_0=2B$ in (\ref{COP0}). Then $\mathbb{E}_q[(\hat{\theta}_{IS}-\theta)^2] =O(n^{-1})$.\\
(b) Suppose Assumption \ref{lrupper4} holds in addition. Take $B_0=4B$ in (\ref{COP0}).  Then $\mathbb{E}_q[(\hat{\theta}_{IS}-\theta)^2] \le o(n^{-1})+2M_0 B_0^2n^{-1}$.
\end{theorem}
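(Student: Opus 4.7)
The plan is to reduce Theorem~\ref{mainIS0} to Theorem~\ref{mainIS0-1} by the noise decomposition and then handle the noise term through the weight cap $B_0/n$. Using Assumption~\ref{CSA} to write $f(X,Y)=\bar f(X)+\epsilon(X,Y)$ with $\mathbb{E}_q[\epsilon(X,Y)\mid X]=0$, and since $\theta=\mathbb{E}_\pi[\bar f(X)]$, I decompose
\begin{equation*}
\hat\theta_{IS}-\theta=\Bigl(\sum_{j=1}^n\hat w_j\bar f(x_j)-\mathbb{E}_\pi[\bar f(X)]\Bigr)+\sum_{j=1}^n\hat w_j\epsilon(x_j,y_j),
\end{equation*}
and apply $(a+b)^2\le 2a^2+2b^2$. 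The two halves are handled by quite different arguments.

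For the first ("smooth") term, the idea is to show that the feasible set of \eqref{COP0} still contains a weight vector as good as the self-normalized likelihood ratios used in Liu--Lee. Set $w_j^\ast=\frac{1}{n}\frac{\pi_X(x_j)}{q_X(x_j)}$ and $\tilde w_j=w_j^\ast/\sum_k w_k^\ast$. Under Assumption~\ref{lrupper} we have $w_j^\ast\le B/n$, while $\sum_k w_k^\ast$ concentrates at $1$ (mean $1$, variance $O(1/n)$), so with high probability $\tilde w_j\le 2B/n=B_0/n$ for part (a), and for part (b) the slack $B_0=4B$ absorbs the same event with a much better tail probability (used to turn $O(n^{-1})$ into $o(n^{-1})$ after matching with Assumption~\ref{lrupper4}). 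Consequently $\hat w$ is at least as good as $\tilde w$ in the empirical KSD sense: $\mathbb{S}(\{x_j,\hat w_j\},\pi_X)\le \mathbb{S}(\{x_j,\tilde w_j\},\pi_X)$. From here I invoke the core Liu--Lee estimate used in Theorem~\ref{mainIS0-1}, which via Stein's identity \eqref{equ:stein} and a $U$-statistic variance computation gives $\mathbb{S}(\{x_j,\tilde w_j\},\pi_X)=O(n^{-1})$ under Assumption~\ref{lrupper} and $o(n^{-1})$ under Assumption~\ref{lrupper4}. Since $\bar f\in\mathcal{H}_+$, the reproducing property and Cauchy--Schwarz yield
\begin{equation*}
\Bigl(\sum_j\hat w_j\bar f(x_j)-\mathbb{E}_\pi[\bar f(X)]\Bigr)^2\le \|\bar f\|^2_{\mathcal{H}_+}\,\mathbb{S}(\{x_j,\hat w_j\},\pi_X),
\end{equation*}
delivering the $O(n^{-1})$ and $o(n^{-1})$ rates respectively after taking $\mathbb{E}_q$.

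For the noise term I condition on $\mathcal{F}=\sigma(\{x_j\}_{j=1}^n)$; the weights $\hat w$ are $\mathcal{F}$-measurable (they depend on $x_j$'s only through $K_0$), and the samples are i.i.d., so $\epsilon(x_j,y_j)$ are conditionally independent with conditional mean zero. Hence
\begin{equation*}
\mathbb{E}_q\Bigl[\bigl(\textstyle\sum_j\hat w_j\epsilon_j\bigr)^2\,\Big|\,\mathcal{F}\Bigr]=\sum_j\hat w_j^2\,\mathbb{E}_q[\epsilon_j^2\mid x_j].
\end{equation*}
Using $\hat w_j\le B_0/n$, $\sum_j\hat w_j^2\mathbb{E}_q[\epsilon_j^2\mid x_j]\le (B_0/n)^2\sum_j\mathbb{E}_q[\epsilon_j^2\mid x_j]$, and taking expectations gives the bound $B_0^2 M_0/n$ by Assumption~\ref{noise1}. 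Multiplying by the factor $2$ from the decomposition yields the $2M_0 B_0^2 n^{-1}$ term in the statement. Assembling the two bounds gives (a) and (b).

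The main obstacle is the first part: specifically, justifying that the normalized IS proxy $\tilde w$ lies in the box $[0,B_0/n]^n$ with sufficiently high probability and that this high-probability event can be converted into a bound on $\mathbb{E}_q[\cdot]$ without losing the supercanonical rate in (b). Handling the rare event $\{\tilde w_j>B_0/n\}$ requires a tail bound on $\sum_k w_k^\ast$ (via Bernstein or a second-moment bound under Assumption~\ref{lrupper}), and the separation between $B_0=2B$ and $B_0=4B$ is precisely what makes this quantitative tradeoff work, respectively, for the canonical and supercanonical regimes.
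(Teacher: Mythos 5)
Your overall architecture matches the paper's proof: the same split of $\hat\theta_{IS}-\theta$ into a smooth part and a noise part via $f=\bar f+\epsilon$, the same treatment of the noise term (condition on the $x_j$'s, use that $\hat w$ is a function of the $x_j$'s only, conditional independence and conditional mean zero of the $\epsilon_j$'s, then the cap $\hat w_j\le B_0/n$ to get $M_0B_0^2 n^{-1}$; this is exactly Lemma \ref{noiselemma}), and for part (a) the same comparison argument: the self-normalized likelihood-ratio weights are feasible for (\ref{COP0}) with $B_0=2B$ on a high-probability event (the paper uses Hoeffding on $\frac1n\sum_j\pi_X(x_j)/q_X(x_j)$), optimality of $\hat w$ gives $\mathbb S(\hat w)\le\mathbb S(\tilde w)$ there, the bad event is absorbed using the deterministic bound $\mathbb S(\hat w)\le B_0^2\kappa_0^2$ (available because $\hat w$ itself obeys the box constraint—worth stating explicitly), and Liu--Lee's Theorem B.2 gives $O(n^{-1})$. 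Two small remarks: the Cauchy--Schwarz constant should be $\|\bar f-\theta\|^2_{\mathcal H_0}$ since the KSD is built from $k_0$; because $\bar f-\theta\in\mathcal H_0$ when $\bar f\in\mathcal H_+$ and $\|\bar f-\theta\|_{\mathcal H_0}\le\|\bar f\|_{\mathcal H_+}$, your version only loses a constant. Also, for (b) the bad-event probability must be $o(n^{-1})$, so a Chebyshev/second-moment tail is not enough; an exponential (Hoeffding/Bernstein) bound is needed, as in the paper.

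The genuine gap is in part (b). You claim that the plain self-normalized likelihood-ratio weights $\tilde w$ satisfy $\mathbb E_q[\mathbb S(\{x_j,\tilde w_j\},\pi_X)]=o(n^{-1})$ under Assumptions \ref{lrupper} and \ref{lrupper4}, but this is false in general: with $w^*_j=\frac1n\pi_X(x_j)/q_X(x_j)$, Stein's identity kills the off-diagonal terms and leaves $\mathbb E_q[\mathbb S(w^*)]=\frac1n\mathbb E_q\big[(\pi_X/q_X)^2k_0(x,x)\big]$, which is of exact order $n^{-1}$ (and self-normalization does not change this order). The supercanonical rate in Liu--Lee, and in the paper's Lemma \ref{varifyweight2}, comes from comparing $\hat w$ to a \emph{different} weight vector: split the sample into two halves, correct each likelihood ratio by subtracting its projection onto the first $L=n^{1/4}$ eigenfunctions through the truncated kernel $k_L(x,x')=\sum_{l\le L}\phi_l(x)\phi_l(x')$, then take positive parts and renormalize. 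It is these corrected, renormalized weights $w^+$ that achieve KSD $o(n^{-1})$, and it is because they are only guaranteed to lie in $[0,4B/n]$ on the good event (the normalizer can be as small as $1/2$) that part (b) takes $B_0=4B$; the factor of two is not tail-probability slack for the same event as in (a). As written, your argument for (b) therefore does not go through; you would need to import this corrected-weight construction (or an equivalent comparator whose KSD is provably $o(n^{-1})$) together with the exponential tail bounds that keep the bad-event contribution $o(n^{-1})$.
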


Note that $\bar{f}\in \mathcal{H}_+$ is essentially the same as $\bar{f}\in \text{Range}(L_q^{\frac{1}{2}})$ in our setting (see Section \ref{sec:RLS} for details). Assuming $\bar{f}$ in a more ``regular" space such as $ \text{Range}(L_q^{r})$ ($r>\frac{1}{2}$) does not improve the above rate since the construction of BBIS weights is independent of this function. Consider a case where $M_0$ is a relatively small number. Then the bound in Theorem \ref{mainIS0-1} part (b) essentially gives us a supercanonical rate
$$\mathbb{E}_q[(\hat{\theta}_{IS}-\theta)^2] =o(n^{-1}).$$

There is a difference regarding the construction of weights between our Theorem \ref{mainIS0} and the original BBIS work, on the imposition of the upper bound $\frac{B_0}{n}$ on the weights in the quadratic program. This is to guarantee that the error term $\epsilon$ is controlled to induce at least the canonical rate, otherwise the error may blow up. 
This modification leads us to redevelop results for BBIS. The detailed proof of Theorem \ref{mainIS0} can be found in Section \ref{sec:IS}.

\subsection{Proof Outline} \label{sec:outline}
We close this section by briefly outlining our proofs for the three estimators in the ``Both" case. 
Write $\bar{f}_m(x_j):=\bar{f}(x_j)-s_m(x_j)+\mu_X(s_m)$. To see Theorems \ref{mainDRSK0} and \ref{mainDRSK0-2} (and obtain Theorem \ref{mainIS0} along the way), we first express $(\hat{\theta}_{DRSK}-\theta)^2$ as

\begin{align*}
(\hat{\theta}_{DRSK}-\theta)^2&=\left(\sum_{j=m+1}^{n} \hat{w}_j (\bar{f}_m(x_j)+\epsilon(x_j,y_j)-\theta)\right)^2\\
&\le 2\left(\left(\sum_{j=m+1}^{n} \hat{w}_j (\bar{f}_m(x_j)-\theta)\right)^2+\left(\sum_{j=m+1}^{n} \hat{w}_j \epsilon(x_j,y_j)\right)^2\right)\\
&\le 2\left(\|\bar{f}_m-\theta\|^2_{\mathcal{H}_0} \cdot \mathbb{S}(\{\hat{w}_j,x_j\},\pi_X)+\left(\sum_{j=m+1}^{n} \hat{w}_j \epsilon(x_j,y_j)\right)^2\right)
\end{align*}
where we have used the Cauchy-Schwarz inequality in both inequalities and additionally the reproducing kernel property in the last inequality. By the construction of the RKHS, we can readily see that
$$\|\bar{f}_m-\theta\|^2_{\mathcal{H}_0}\le \|\bar{f}-s_m\|^2_{\mathcal{H}_+}.$$
By the construction of $\hat{w}$ and Assumption \ref{noise1}, we can prove that
\begin{align}
\mathbb{E}_q[(\hat{\theta}_{DRSK}-\theta)^2]&\le 2\left(\mathbb{E}_q[\|\bar{f}_m-\theta\|^2_{\mathcal{H}_0}] \cdot \mathbb{E}_q[\mathbb{S}(\{\hat{w}_j,x_j\},\pi_X)]+M_0B_0^2(n-m)^{-1}\right).\nonumber\\
&\le 2\left(\mathbb{E}_q[\|\bar{f}-s_m\|^2_{\mathcal{H}_+}] \cdot \mathbb{E}_q[\mathbb{S}(\{\hat{w}_j,x_j\},\pi_X)]+M_0B_0^2(n-m)^{-1}\right).\label{equ:outline1}
\end{align} 
Therefore, the main task is to analyze two terms:
$$\mathbb{E}_q[\|\bar{f}-s_m\|^2_{\mathcal{H}_+}] \text{\ \ and\ \ } \mathbb{E}_q[\mathbb{S}(\{\hat{w}_j,x_j\},\pi_X)].$$
Note that the first term measures the learning error between the true regression function and the KRR function under the $\mathcal{H}_+$ norm. This term can be analyzed using the theory of KRR in Section \ref{sec:RLS}. The theory indicates that
\begin{equation} \label{equ:outline2}
\mathbb{E}_q[\|\bar{f}-s_m\|^2_{\mathcal{H}_+}]=O(m^{-r+\frac{1}{2}}).    
\end{equation}
The second term is about the performance guarantee of black-box importance weights, and could be analyzed by applying similar techniques from \citet{LL}. However, since we have modified the original BBIS algorithm by adding an additional upper bound on each weight, we need to re-establish new results. This term will be analyzed in Section \ref{sec:IS}. Note that analyzing this term provides us with the result for BBIS, Theorem \ref{mainIS0}, at the same time. We will show that
\begin{equation} \label{equ:outline3}
\mathbb{E}_q[\mathbb{S}(\{\hat{w}_j,x_j\},\pi_X)] = O((n-m)^{-1}) \ (\text{Ass. } \ref{lrupper2})  \quad \text{or}  \quad o((n-m)^{-1}) \ (\text{Ass. } \ref{lrupper},\ref{lrupper4}).
\end{equation}
Plugging \eqref{equ:outline2} and \eqref{equ:outline3} into \eqref{equ:outline1}, we obtain Theorems \ref{mainDRSK0} and \ref{mainDRSK0-2}.

Next, to see Theorem \ref{mainCF0}, we look at one item $f_m(x_j,y_j)-\theta$ in the summation and separate the error term $\epsilon$:
$$\mathbb{E}_q[\nu((f_m - \theta)^2)] \le 3\left(\mathbb{E}_q[\nu_X((\bar{f} - s_m)^2)]+\mathbb{E}_q [(\mu_X(s_m)-\theta)^2])+\mathbb{E}_q[\epsilon^2]\right)$$
For the second term, by applying Cauchy–Schwarz inequality, we have that 
$$\mathbb{E}_q [(\mu_X(s_m)-\theta)^2]\le  \mathbb{E}_q[\nu_X((f - s_m)^2)] \mathbb{E}_{q}\left[\left(\frac{\pi_X}{q_X}\right)^2\right].$$ 
Hence, in order to analyze $f_m(x_j,y_j)-\theta$, we only need to estimate
$$
\mathbb{E}_q[\nu_X((\bar{f} - s_m)^2)].$$
This measures the learning error between the true regression function and the KRR function under the $L^2$ norm, which can be analyzed using the theory in Section \ref{sec:RLS}. The theory indicates that
\begin{equation} \label{equ:outline4}
\mathbb{E}_q[\nu_X((\bar{f} - s_m)^2)] = O(m^{-r}).
\end{equation}
showing that a single $f_m(X,Y)$ is an asymptotically unbiased estimator of $\theta$ (at the rate of $O(m^{-r})$ when $m\to \infty$). Nevertheless, 
$f_m(X,Y)$ with finite $m$ is in general a biased estimator of $\theta$ under the biased generating distribution $q$ so it is not necessary that taking the average in the final step of
$\hat{\theta}_{CF}$ can enhance a single $f_m(X,Y)$. In this case, Theorem \ref{mainCF0} follows from \eqref{equ:outline4}.


\section{Numerical Experiments}\label{sec:numerics}
We conduct extensive numerical experiments to demonstrate the effectiveness of our method. 
In addition to the CF, modified BBIS (which includes the original BBIS by setting $B_0=+\infty$, and which we refer to simply as BBIS in this section), and DRSK estimators described in Algorithms \ref{CFalgorithm}, \ref{ISalgorithm}, \ref{DRSKalgorithm} respectively, we consider two more estimators:
\begin{enumerate}
\item The DRSK-Reuse (DRSK-R) estimator: This estimator is similar to the DRSK estimator except that it reuses the entire dataset $D$ (not only $D_0$ or $D_1$ in DRSK) to construct the CF-adjusted sample $f_n(x,y)=f(x,y)-s_n(x)+\mu_X(s_n)$ and the importance weights $\hat{w}_j$ ($j=1,\cdots,n$), so that the final DRSK-R estimator is given by
$$\hat{\theta}_{DRSK-R}:=\sum_{j=1}^{n} \hat{w}_j f_n(x_j,y_j).$$
The reuse of $D$ will cause some dependency between the CF-adjusted sample and the weights. \footnotemark\footnotetext{For DRSK-R, although it is hard to theoretically analyze the mean squared error $\mathbb{E}_q[(\hat{\theta}_{DRSK-R}-\theta)^2]$ due to the dependency, it is indeed feasible to analyze the mean absolute error $\mathbb{E}_q[|\hat{\theta}_{DRSK-R}-\theta|]$. In fact, similarly as in \eqref{equ:outline1}, we can show that $$
(\mathbb{E}_q[|\hat{\theta}_{DRSK-R}-\theta|])^2
\le 2\left(\mathbb{E}_q[\|\bar{f}_n-\theta\|^2_{\mathcal{H}_0}] \cdot \mathbb{E}_q[\mathbb{S}(\{\hat{w}_j,x_j\},\pi_X)]+M_0B_0^2n^{-1}\right)$$
where we use the fact that $(\mathbb{E}_q[\|\bar{f}_n-\theta\|_{\mathcal{H}_0}\cdot \sqrt{\mathbb{S}(\{\hat{w}_j,x_j\},\pi_X)}])^2 \le \mathbb{E}_q[\|\bar{f}_n-\theta\|^2_{\mathcal{H}_0}] \cdot \mathbb{E}_q[\mathbb{S}(\{\hat{w}_j,x_j\},\pi_X)]$ by Cauchy-Schwarz inequality regardless of the dependency in $\bar{f}_n$ and $\mathbb{S}(\{\hat{w}_j,x_j\},\pi_X)$. Therefore, the upper bounds in Theorems \ref{mainDRSK0} and \ref{mainDRSK0-2} apply to $(\mathbb{E}_q[|\hat{\theta}_{DRSK-R}-\theta|])^2$.} Nevertheless, we will observe in experiments the high effectiveness of the DRSK-R estimator in terms of reducing MSE. 

\item The Simplified CF (SimCF) estimator, described in Algorithm \ref{SimCFalgorithm}.
\end{enumerate}



Note that the ground-truth population MSE, i.e.,
$$\text{MSE}:=\mathbb{E}[(\hat{\theta}-\theta)^2]$$
where $\hat{\theta}$ is the estimator, cannot be computed in closed form due to the sophisticated expression of $\hat\theta$. Therefore, we use the following alternative to compute the MSE. For each data distribution and each size $n$, we simulate the whole procedure 50 times: At each repetition $j=1,\cdots,50$, we generate a new dataset of size $n$ drawn from the data distribution, and derive estimators $\hat{\theta}_{i,j}$ of the target parameter $\theta$ based on this dataset where $i=1,\cdots,5$ indicates the five considered approaches. Thus $(\hat{\theta}_{i,j}-\theta)^2$ represents the squared error in the $j$-th repetition. Then we regard the average of all squared errors (empirical MSE) as the proxy for the population MSE, i.e.,
$$\hat{\text{MSE}}_i:=\frac{1}{50} \sum_{j=1}^{50} (\hat{\theta}_{i,j}-\theta)^2, \qquad i=1,\cdots,5.$$

\textbf{Kernel Selection.} Throughout this section, the reproducing kernel of the ``primary" RKHS (i.e., the $\mathcal{H}$ in Section \ref{sec:Stein}) is chosen to be the widely used radial basis function (RBF) kernel (also known as the Gaussian kernel): 
$$k(x,x') = \exp\left(-\frac{1}{h_1}\|x-x'\|^2_2\right).$$
This kernel satisfies the conditions in Section \ref{sec:Stein} for any continuously differentiable densities supported on $\mathbb{R}^d$ whose score function $\bm{u}(x)$ is of polynomial growth rate. Therefore, it is an ideal kernel to be used.

\textbf{Hyperparameter Selection.} As suggested by our theorems, we select the following hyperparameters:
\begin{enumerate}
    \item $m=0.5n$. Theorems \ref{mainDRSK0}-\ref{mainCF0} suggest that $m$ should be taken as $\alpha n$ where $0<\alpha<1$. $\alpha=0.5$ is a simple middle-ground choice and has also been used in \citet{CF1}.
    \item $\lambda=0.01m^{-\frac{1}{2}}$. Theorems \ref{mainDRSK0}-\ref{mainCF0} suggest that $\alpha$ should be taken as $\Theta(m^{-\frac{1}{2}})$.  We choose a small multiplier $0.01$ since a small regularization term under the premise of stabilizing the inverse numerically is preferrable in practice \citep{CF1,CF2}. Other larger or smaller choices of $\lambda$ such as $\lambda=m^{-\frac{1}{2}}$ or $\lambda=0$ may have less satisfactory performance. We will illustrate this in Section \ref{sec:illustration}.
    \item $B_0=50$. $B_0$ is explicitly provided by Theorems \ref{mainDRSK0},\ref{mainDRSK0-2},\ref{mainIS0}. A large value of $B_0$ obviously satisfies the conditions therein. In our experiments, we observe that the performance of all estimators is robust to the choice of $B_0$ (including the ``$+\infty$" in the original BBIS) as long as $B_0$ is relatively large. We will illustrate this in Section \ref{sec:illustration}. 
    \item The bandwidth $h_1$ in the kernel is typically chosen to be the median of the pairwise square distance of the input data, as suggested by \citet{LL,gretton2012kernel}. We follow this approach in our experiments. 

\end{enumerate} 
We emphasize that the same hyperparameters are used in all approaches for a fair comparison.

Experimental results are displayed using plots of log MSE against the sample size $n$. Log MSE is used as it allows easy observation on the polynomial decay. In the following, we conduct experiments on a wide range of problem settings.

\subsection{Basic Illustration} \label{sec:illustration}
In this section, we consider a synthetic
problem setting borrowed from \citet{CF1}. Our goal is to estimate the expectation of $f(X,Y) = \sin( \frac{\pi}{d} \sum_{i=1}^d X_i)+Y$ under the target distribution $\pi$ where $\pi_X=\mathcal{N}(0,I_d)$ is a $d$-dimensional standard Gaussian distribution, and $\pi_{Y}$ is a zero-mean distribution. By symmetry, the ground-true expectation is $\mathbb{E}_\pi[f(X,Y)] = 0$.
We consider the dimension $d = 4$.

\textbf{Illustration of Different Scenarios.}
We consider 9 different scenarios as introduced in Section \ref{sec:discussion}, using 3 noise settings and 3 biased distribution settings as described below:
\begin{enumerate}
    \item Noise settings: (1) $\pi_{Y|X}=0$ (no noise), (2) $\pi_{Y|X}=\mathcal{N}(0,0.1^2)+\sum_{i=1}^d X_i$, 
    (3) $\pi_{Y|X}=\mathcal{N}(0,0.1^2)$ .
    \item Biased distribution settings: (A) $q_X=\pi_X$ (no bias), (B) $q_X=  \mathcal{N}(0.5,1)$, (C) $q_X=\mathcal{N}(1,1)$.
\end{enumerate}

\begin{figure*}
\begin{center}
\makebox[\textwidth][c]{\includegraphics[width=1.5\textwidth]{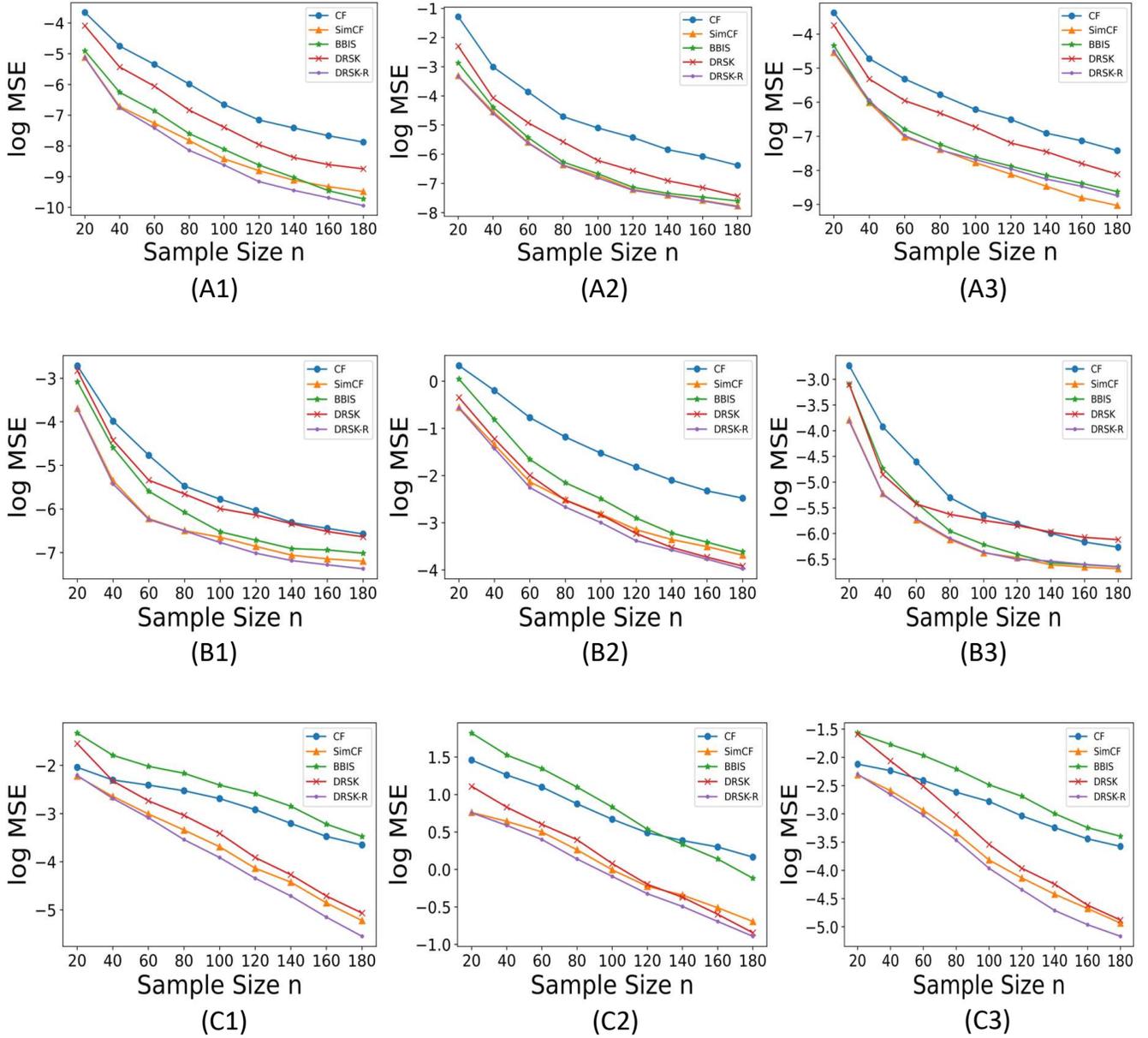}}
\caption{Illustration of different scenarios: The title of the Plot (A1) means the combination of noise setting (1) and biased distribution setting (A). The rest is similar. (1) $\pi_{Y|X}=0$, (2) $\pi_{Y|X}=\mathcal{N}(0,0.1^2)+\sum_{i=1}^d X_i$, (3) $\pi_{Y|X}=\mathcal{N}(0,0.1^2)$.
(A) $q_X=\pi_X$, (B) $q_X=  \mathcal{N}(0.5,1)$, (C) $q_X=\mathcal{N}(1,1)$. }\label{illustration1}
\end{center}
\end{figure*}

The results are shown in Figure \ref{illustration1}. We observe that

\begin{enumerate}
\item DRSK-R and SimCF are the top two approaches in most scenarios. Empirically, DRSK-R appears to be a better alternative to DRSK, and SimCF appears to be a better alternative to CF. 
\item DRSK-R is the best when the sampling distribution is biased (e.g., biased distribution setting (C)) and the noise is small (e.g., noise setting (1)); See Plots (A1), (B1), (B2), (C1), (C2), (C3). In these scenarios, it can outperform SimCF (the second-best approach) by up to 25 percent.
\item The superior performance of DRSK-R decreases when the sampling distribution is exact or the noise becomes larger; See Plots (A2), (A3), (B3). In (A2) and (B3), DRSK-R and SimCF perform similarly. (A3) is the only scenario where DRSK-R is less effective than SimCF.
\end{enumerate}

These observations coincide with our theory. DRSK (DRSK-R) performs well, especially when the noise is small and the sampling distribution is not exact, consistent with our Table \ref{Pro}.
As discussed in Section \ref{sec:discussion}, CF (SimCF) is resistant to noise since it takes advantage of the functional approximation of $f$ while the importance weight in BBIS ignores $f$. Thus, increasing noise typically hurts the performance of BBIS, but DRSK (DRSK-R) and CF (SimCF) can maintain some similarly good performance. On the other hand, CF is not resistant to bias because the uniform weight in the final step of constructing CF cannot reduce the bias effectively. Thus SimCF by omitting the final step is a better alternative to CF. Note that this observation holds in almost all scenarios regardless of the bias (including the subsequent experiments). In fact, \citet{CF1} indicates a similar empirical result: In the ``standard" case, although the CF-adjusted sample $f_m(x_j)$ is constructed for the unbiasedness, the actual bias in $s_m(x_j)$ can be negligible for practical purposes, and thus they recommended SimCF for use in applications. In the ``Biased" case, even the unbiasedness of $f_m(x_i)$ is no longer valid so SimCF stays preferable.

\textbf{Illustration of Different Hyperparameters.} We study the effect of different choices of hyperparameters $h_1$, $\lambda$ and $B_0$ in the scenario (C3): $\pi_{Y|X}=\mathcal{N}(0,0.1^2)$ and $q_X=\mathcal{N}(1,1)$. Let $PSD$ denote the median of the pairwise square distance of the input data. Let $\lambda=a_1\times m^{-\frac{1}{2}}$, $h_1=a_2\times PSD$.

\begin{figure*}
\begin{center}
\makebox[\textwidth][c]{\includegraphics[width=1.15\textwidth]{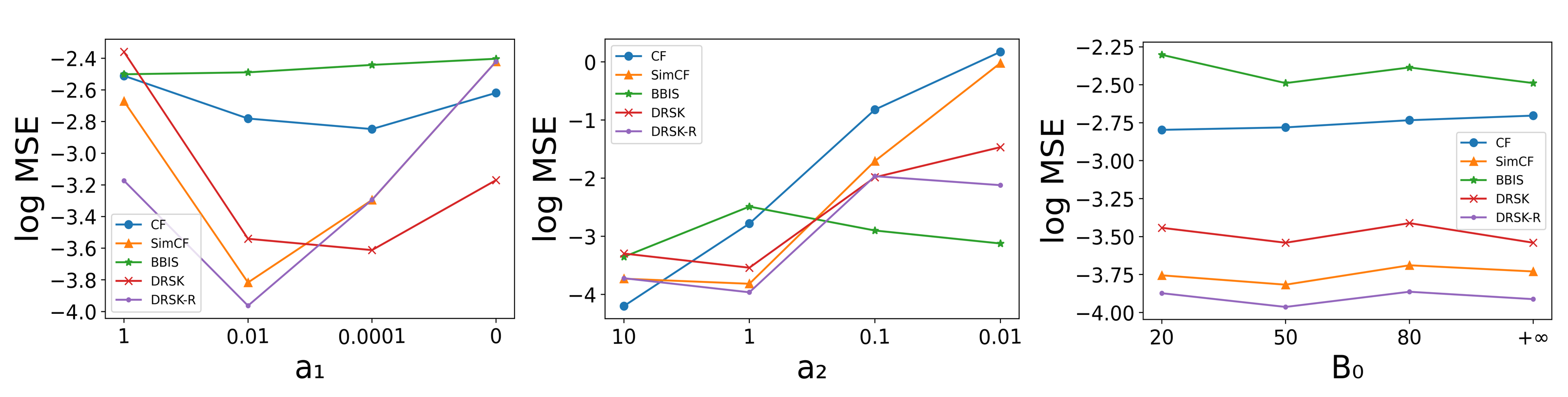}}
\caption{Illustration of different hyperparameters: $\lambda=a_1\times m^{-\frac{1}{2}}$, $h_1=a_2\times PSD$. }\label{illustration2}
\end{center}
\end{figure*}

The results are shown in Figure \ref{illustration2}. We observe that

\begin{enumerate}
\item $\lambda=0.01m^{-\frac{1}{2}}$ with $a_1=0.01$ appears to be a reasonable choice.
Larger or smaller choices of $\lambda$ such as $\lambda=m^{-\frac{1}{2}}$ or $\lambda=0$ produce less satisfactory performance. In particular, $\lambda > 0$ is recommended to prevent overfitting and stabilize the inverse numerically.
\item $h_1=a_2 \times PSD$ with $a_2=1$ is the choice recommended by previous studies \citep{LL,gretton2012kernel}. We recognize that other choices of $h_1$ may be better. However, tunning $h_1$ is not easy in practice. Note that unlike \citet{CF1}, $h_1$ cannot be selected via cross validation in our problem setting because the biased generated distribution does not allow accurate estimation of $\theta$ on the validation data. We have tried some alternative validation approaches such as minimizing variance from different subset divisions but did not see any consistent gain from validation. Therefore, we follow the choice from these previous studies \citep{LL,gretton2012kernel}.
\item $B_0$ is an insensitive hyperparameter. Different choices of $B_0$, including the ``$+\infty$" in the original
BBIS, produce very similar results as long as $B_0$ is relatively large.
\end{enumerate}

\subsection{Results on a Range of Problems} \label{sec:multipledistributions}
In this section, we conduct experiments on 
a variety of data distributions ranging from light-tailed (e.g., the mixture of Gaussian distribution) to heavy-tailed distributions (e.g., t-distribution). Moreover, we consider different dimensions of the input data to show the dimensionality effect on each estimator: $d=1,2,4$.

Throughout the following experiments, the ground-truth $\theta$ is obtained by drawing $10^6$ i.i.d. samples $(x_i,y_i)$ from the target distribution $\pi$ and calculating the sample average $\theta = \frac{1}{10^6}\sum_{i=1}^{10^6}f(x_i,y_i)$. The hyperparemeters are chosen based on the discussion at the beginning of Section \ref{sec:numerics}. \\  

\noindent\textbf{Mixture of Gaussian Distributions:} Suppose $\pi_X$ is the pdf of $(0.7\times \mathcal{N}(2,1)+0.3\times\mathcal{N}(1,1))^{\otimes d}$ ($^{\otimes d}$ represents d-dimensional independent component), $q_X$ is the pdf of $(\mathcal{N}(1,1))^{\otimes d}$.  $\pi_{Y|X}=q_{Y|X}$ is the distribution of $\mathcal{N}(0,0.001^3)$. Our goal is to compute the expectation of $f(X,Y) = \sin( \frac{\pi}{d} \sum_{i=1}^d X_i)+Y$ under the distribution of $\pi$.
The results are displayed in Figure \ref{exp:mixture}.
\\


\begin{figure*}
\begin{center}
\makebox[\textwidth][c]{\includegraphics[width=1.15\textwidth]{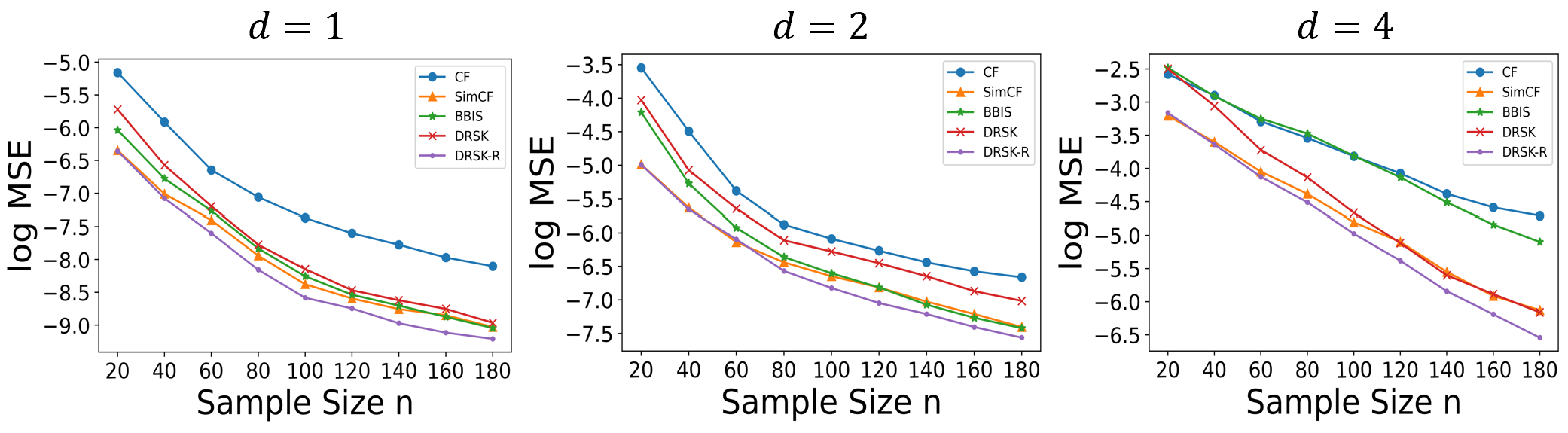}}
\end{center}
\caption{Comparisons of different methods on mixture of Gaussian distributions.} \label{exp:mixture}
\end{figure*}

\noindent\textbf{Generalized Student's T-distributions:} Suppose $\pi_X$ is the pdf of $(t_3(1,1))^{\otimes d}$ ($t_3(1,1)=1+1\times t_3$ where $t_3$ is the standard t-distribution with $3$ degrees of freedom) and $q_X$ is the pdf of $\mathcal{N}(\textbf{1},I_d)$.  $\pi_{Y|X}=q_{Y|X}$ is the distribution of $\mathcal{N}(0,0.001^3)$. Our goal is to compute the expectation of $f(X,Y)=\cos( \frac{\pi}{d} \sum_{i=1}^d X_i)+Y$ with respect to the distribution of $\pi$. 
The results are displayed in Figure \ref{exp:t}. 
\\

\begin{figure*}
\begin{center}
\makebox[\textwidth][c]{\includegraphics[width=1.15\textwidth]{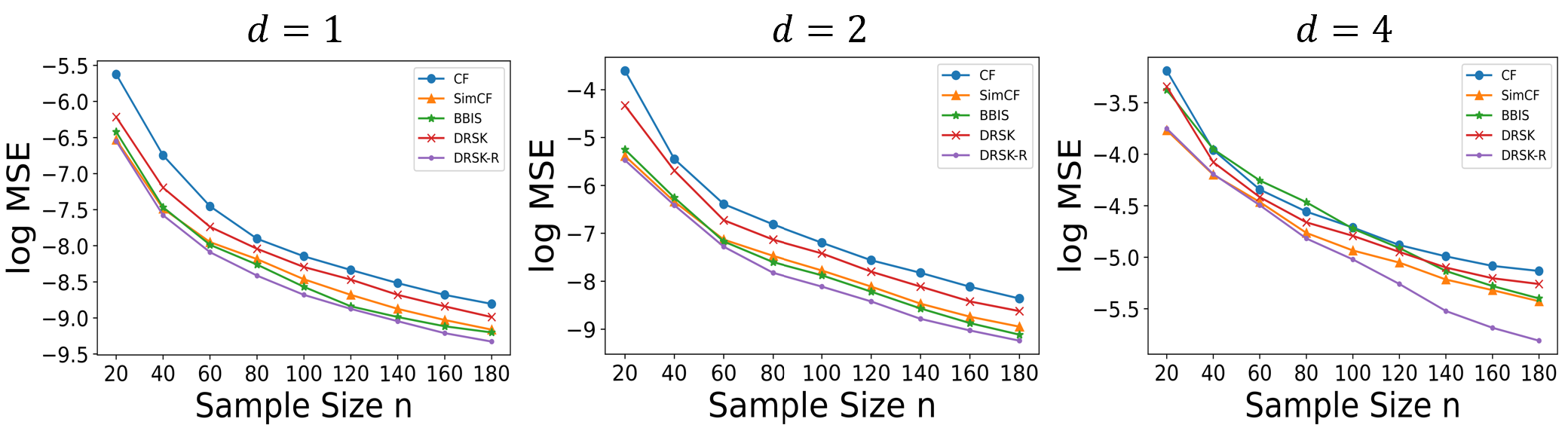}}
\end{center}
\caption{Comparisons of different methods on generalized Student’s t-distributions.} \label{exp:t}
\end{figure*}

Next, we consider a couple of Bayesian problems. Here, for each problem, we are interested in computing an expectation under the posterior distribution of an unknown input parameter. Typically in a realistic Bayesian setting, we do not have the exact form of the posterior distribution 
but can obtain it up to a normalizing constant. To simulate the posterior, we draw samples by running the parallel Metropolis–Hastings MCMC algorithm \citep{LL,rosenthal2000parallel}. To be specific, suppose we want to draw $n$ samples $x_i$ ($i\in[n]$): 
\begin{enumerate}
\item Draw $n$ i.i.d. samples $x^0_i$ ($i\in[n]$) from the prior distribution;
\item For each $i$ ($i\in[n]$), run Metropolis-Hastings MCMC algorithm with a symmetric Gaussian proposal kernel starting from $x^0_i$, which produces $n$ independent Markov chains;
\item For each $i$ ($i\in[n]$), take the endpoint after a finite number of iterations (e.g., 50 iterations) from the $i$-th Markov chains as $x_i$.
\end{enumerate}
Obviously, this algorithm will end up with a biased distribution $x_i \sim q_X$ instead of $\pi_X$. Moreover, $x_1,\cdots,x_n$ are independent because they are from different Markov chains. 

In the following experiments, to be able to validate our estimators against the ground truth, we consider conjugate distributions so that the posterior distribution can be derived explicitly. Note that this information is only used to obtain the ground-truth $\theta$ for evaluation, but not used for constructing the estimators. \\


\noindent\textbf{Gamma Conjugate Distributions:} Suppose $x$ is an unknown input parameter. Let $x$ have a prior distribution $p_0(x)=(\text{Gamma}(2,2))^{\otimes d}$ (a Gamma distribution with a shape parameter $2$ and a rate parameter $2$). To estimate $x$, collect a set of data
$\Xi=\{\xi^{(l)}=(\xi_{1}^{(l)},\cdots,\xi_{d}^{(l)}): l=1,\cdots,L\}$ which are i.i.d. drawn from the likelihood $p(\Xi|x)= (\text{Gamma}(4,x))^{\otimes d}$ (a Gamma distribution with a shape parameter $4$ and a rate parameter $x$). The distribution of interest is the posterior distribution, which is $\text{Gamma}(4L+2,\sum_{l=1}^L\xi_1^{(l)}+2)\otimes \cdots \otimes \text{Gamma}(4L+2,\sum_{l=1}^L\xi_d^{(l)}+2)$ as it is a conjugate distribution. To mimic $\pi_X$, $q_X$ is obtained by running the parallel Metropolis–Hastings MCMC algorithm as described in this Section.
$\pi_{Y|X}=q_{Y|X}$ is the distribution of $\mathcal{N}(0,0.001^3)$. Suppose $L=12$ and $\sum_{l=1}^L \xi_i^{(l)}=3+5i$. Our goal is to compute the expectation of $f(X,Y) = \sin( \frac{\pi}{d} \sum_{i=1}^d X_i)+Y$ under the distribution $\pi$. The results are displayed in Figure \ref{exp:gamma}.  
\\

\begin{figure*}
\begin{center}
\makebox[\textwidth][c]{\includegraphics[width=1.15\textwidth]{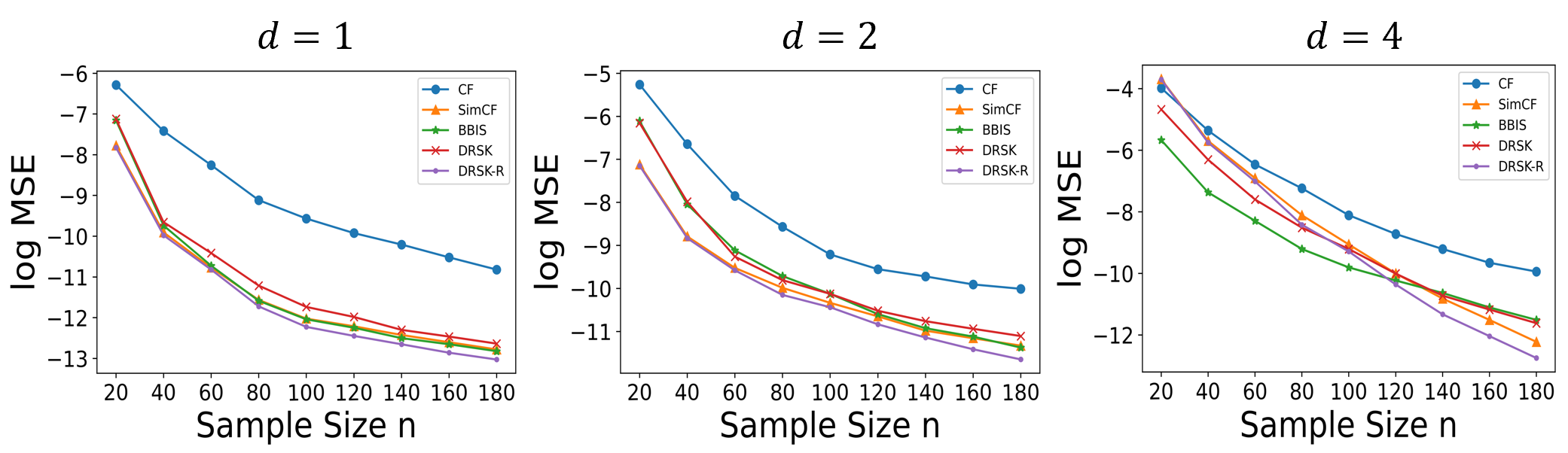}}
\end{center}
\caption{Comparisons of different methods on Gamma conjugate distributions.} \label{exp:gamma}
\end{figure*}



\noindent\textbf{Beta Conjugate Distribution:} Suppose $x$ is an unknown input parameter. Let $x$ have a prior distribution $p_0(x)=(\text{Beta}(1,1))^{\otimes d}$ (a Beta distribution with two shape parameters $1$ and $1$). To estimate $x$, collect a set of data $\Xi=\{\xi^{(l)}=(\xi_{1}^{(l)},\cdots,\xi_{d}^{(l)}): l=1,\cdots,L\}$ which are i.i.d. drawn from the likelihood $p(\Xi|x)=\text{Bernoulli}(x)$ where the ``success" parameter is $x$. The distribution of interest is the posterior distribution $\pi_X(x) \propto p(\Xi|x)p_0(x)$, which is $\text{Beta}(\sum_{l=1}^Ly_1^{(l)}+1,L-\sum_{l=1}^Ly_1^{(l)}+1)\otimes \cdots \otimes\text{Beta}(\sum_{l=1}^Ly_d^{(l)}+1,L-\sum_{l=1}^Ly_d^{(l)}+1)$ as it is a conjugate distribution. To mimic $\pi_X$, $q_X$ is obtained by running the parallel Metropolis–Hastings MCMC algorithm as described in this Section. $\pi_{Y|X}=q_{Y|X}$ is the distribution of $\mathcal{N}(0,0.001^3)$. Suppose $L=11$ and $\sum_{l=1}^Ly_i^{(l)}=1+i$. Our goal is to compute the expectation of $f(X,Y) = \cos( \frac{\pi}{d} \sum_{i=1}^d X_i)+Y$ under the distribution $\pi$. The results are displayed in Figure \ref{exp:beta}. 
\\

\begin{figure*}
\begin{center}
\makebox[\textwidth][c]{\includegraphics[width=1.15\textwidth]{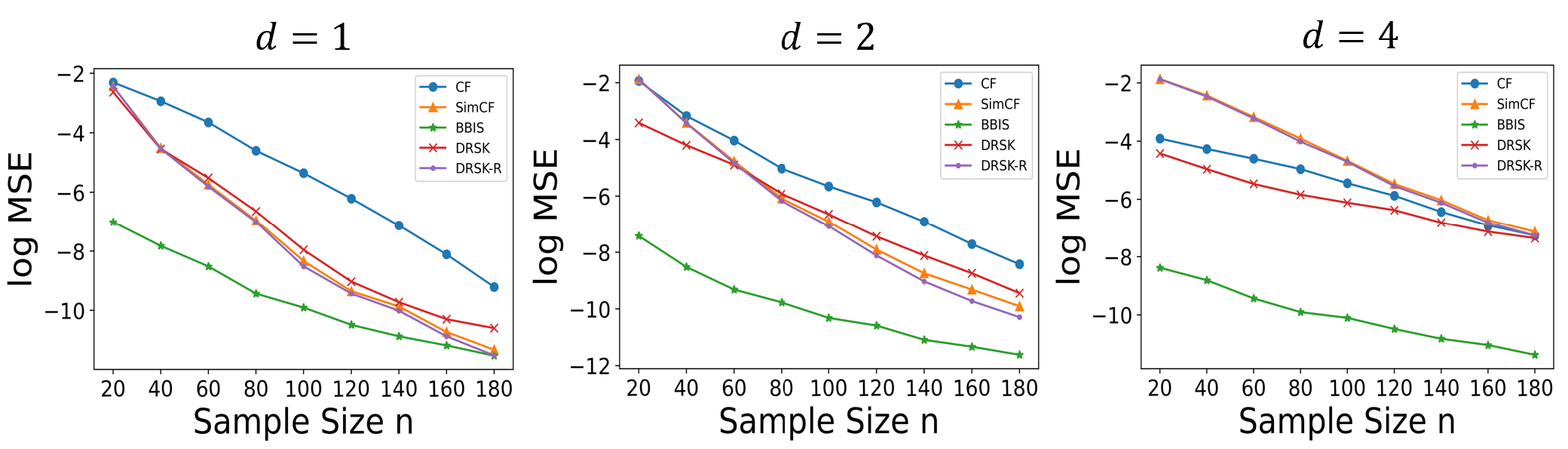}}
\end{center}
\caption{Comparisons of different methods on Beta conjugate distributions.} \label{exp:beta}
\end{figure*}



Finally, we consider a stochastic simulation experiment, under input uncertainty where the modeler needs to provide performance measure estimate that accounts for both the aleatory noise from the simulation model and the epistemic noise from the input parameter that is handled via Bayesian posterior.\footnotemark\footnotetext{Note that here we focus on the estimation of performance measures when the inputs parameters are informed directly via input-level data. This is in contrast to Bayesian inverse problems \citep{stuart2010inverse} that involve possibly data at the output level. It is unclear, though possible, that our estimator can apply to this latter setting, in which case it would warrant a separate future work.}
\\

\noindent\textbf{Computer Communication Network:} 
We conduct experiments on a computer communication network example borrowed from \citet{lam2021subsampling,lin2015single}.  Figure \ref{fig1} illustrates the structure of this computer communication network. It consists of 4 message-processing units (nodes) which are connected by 4 transport channels (edges). There are external messages that will enter the network. We assume that the lengths of the external messages are i.i.d. and follow an exponential distribution with rate $\frac{1}{300}$. For every pair $(i, j)$ of nodes ($i \ne j$), 
external messages arriving at node $i$ that are to be transmitted to node $j$ follow a Poisson arrival process with an unknown rate $\lambda_{i,j}$ where their transmission path is fixed and known. 
Suppose that each node takes a constant time of $0.001$ seconds to process a message with unlimited storage capacity, and each edge has a capacity of 275000 bits. All messages transmit through the edges with a constant velocity of 150000 miles per second, and the $i$-th edge has
a length of $100 i$ miles for $i = 1, 2, 3, 4$. Therefore, the total time that a message of length $l$ bits occupies the $i$-th edge is
$\frac{l}{275000} +
\frac{100i}{150000}$ seconds. Suppose that the computer network is empty at time zero. 

\begin{figure}
    \centering
    \includegraphics[width=0.7\textwidth]{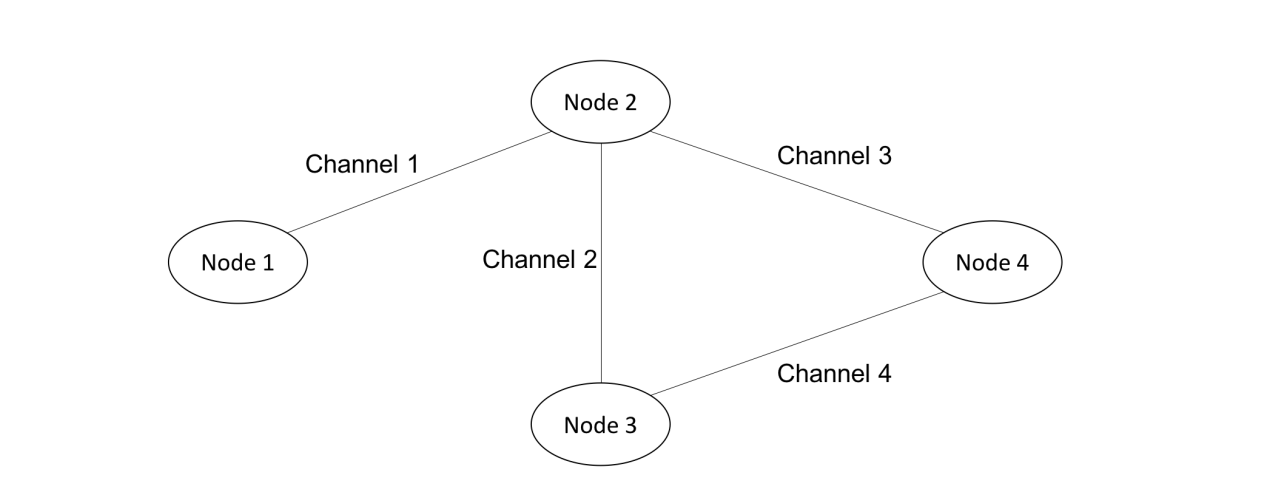}
    \caption{A computer communication network with four nodes and four channels.}
    \label{fig1}
\end{figure}

\begin{table*}[t]
\centering
\makebox[\textwidth][c]{{\small
\begin{tabular}{|c|cccc|}
\multicolumn{5}{c}{Setting (A)}\\
\hline
\diagbox{node $i$}{node $j$} & 1 & 2 & 3 & 4 \\
 \hline
 1 & n.a. & 0.5 & 0.7 & 0.6 \\
 2 & 0.4 & n.a. & 0.4 & 1.2 \\
 3 & 0.3 & 1.2 & n.a. & 1.0 \\
 4 & 0.8 & 0.7 & 0.5 & n.a. \\
 \hline
\end{tabular}
\
\begin{tabular}{|c|cccc|}
\multicolumn{5}{c}{Setting (B)}\\
\hline
\diagbox{node $i$}{node $j$} & 1 & 2 & 3 & 4 \\
 \hline
 1 & n.a. & 0.3 & 0.5 & 0.4 \\
 2 & 0.2 & n.a. & 0.3 & 1.1 \\
 3 & 0.2 & 1.0 & n.a. & 0.9 \\
 4 & 0.5 & 0.4 & 0.3 & n.a. \\
 \hline
\end{tabular}
\
\begin{tabular}{|c|cccc|}
\multicolumn{5}{c}{Setting (C)}\\
\hline
\diagbox{node $i$}{node $j$} & 1 & 2 & 3 & 4 \\
 \hline
 1 & n.a. & 0.4 & 0.6 & 0.5 \\
 2 & 0.3 & n.a. & 0.4 & 1.0 \\
 3 & 0.3 & 1.2 & n.a. & 0.7 \\
 4 & 0.6 & 0.5 & 0.4 & n.a. \\
 \hline
\end{tabular}}}
\caption{Interarrival times for estimating the arrival rates $\lambda_{i,j}$. These tables report the cumulative value of $0.1+\sum_{l=1}^L\xi^{(l)}_{i,j}$ where $L=10$ and $i,j=1,\cdots,4$ in three different settings of ground-truth $\lambda$.}
\label{arrivalrates}
\end{table*}
\begin{figure*}[t]
\begin{center}
\makebox[\textwidth][c]{\includegraphics[width=1.15\textwidth]{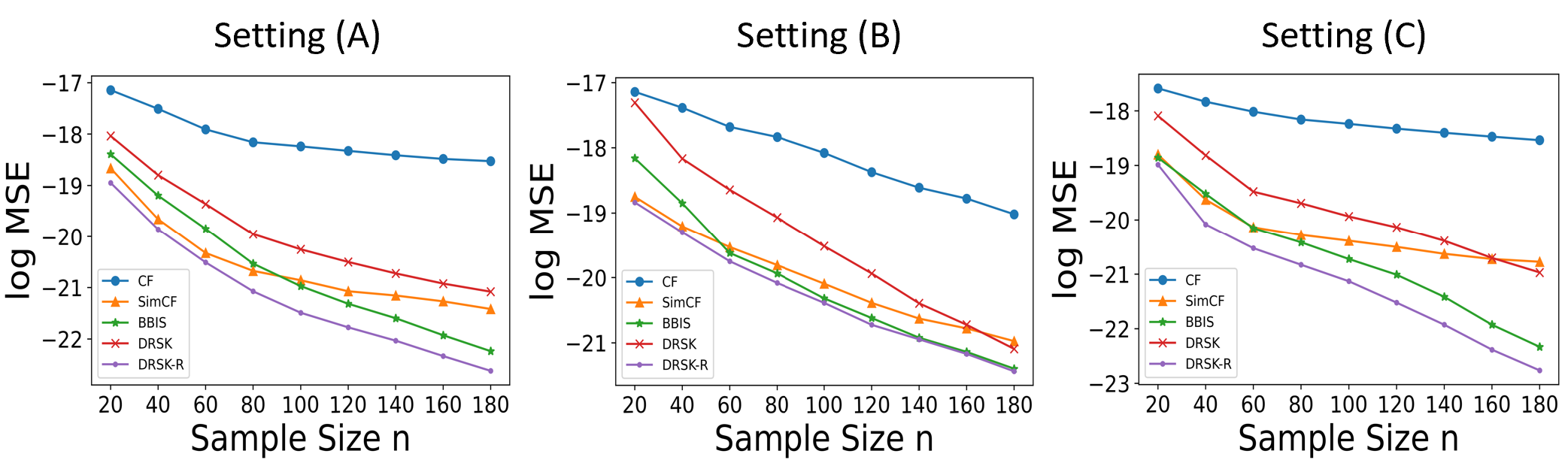}}
\end{center}
\caption{Comparisons of different methods on the computer communication network. Settings (A),(B),(C) correspond to the different data in Table \ref{arrivalrates} with different settings of ground-truth $\lambda$.} \label{exp:real}
\end{figure*}

The value of
interest is the delay of the first 30 messages that arrive in the system, or $\frac{1}{30}\sum_{i=1}^{30} T_k$, where $T_k$ is the time that the $k$-th message takes to be transmitted from its entering node to destination node. To estimate this value, for each input rate vector $\lambda=(\lambda_{i,j})_{i,j=1,2,3,4; i\ne j}\in \mathbb{R}_+^{12}$, we simulate the system $100$ times and take the average $f(\lambda, Y):= \frac{1}{100}\sum_{j=1}^{100} \frac{1}{30}\sum_{i=1}^{30} T^{(j)}_k(\lambda)$ as our simulation output. Here, $T^{(j)}_k(\lambda)$ is the delay of the $k$-th message in the $j$-th simulation when the input rate vector is given by $\lambda$. Obviously, $f(\lambda, Y)$ depends on not only $\lambda$, but also the stochasticity in the simulation model, denoted as $Y$ (since $T^{(j)}_k$ is a random output from the simulation model). Let $\pi_{Y|\lambda}$ denote the distribution of $Y$ given $\lambda$.

We apply Bayesian inference to estimate the unknown arrival rate vector $\lambda$. 
First we endow $\lambda$ with a prior distribution $(\text{Gamma}(10,0.1))^{\otimes d}$. By the definition of $\lambda$, the interarrival time of the external messages arriving at node $i$ that are to be transmitted to node $j$, denoted as $\xi_{i,j}$, follow an exponential distribution with rate $\lambda_{i,j}$. To infer $\lambda$, we collect some historical data of the interarrival times $\xi^{(1)}_{i,j},\cdots,\xi^{(L)}_{i,j}$ and use them to update the posterior distribution of $\lambda_{i,j}$. Suppose $L=10$. These data are summarized in Table \ref{arrivalrates}, each obtained from a different setting of ground-truth $\lambda$.

The exact posterior distribution of $\lambda$, denoted as $\pi_\lambda$, is known to be $\text{Gamma}(10+L,\sum_{l=1}^L\xi_{1,2}^{(l)}+0.1)\otimes \cdots \otimes \text{Gamma}(10+L,\sum_{l=1}^L\xi_{4,3}^{(l)}+0.1)$. Let $\pi:=\pi_{\lambda}\pi_{Y|\lambda}$. However, this information is only used to obtain the ground-truth $\theta:=\mathbb{E}_\pi[f(\lambda,Y)]$ (by drawing $10^6$ i.i.d. samples as we discussed at the beginning of Section \ref{sec:multipledistributions}), but not used for constructing the estimators. To construct the estimators, we run the parallel Metropolis–Hastings MCMC algorithm as described in this Section to mimic the posterior, obtain i.i.d. samples $\lambda^{(1)},\cdots,\lambda^{(n)}\sim q_\lambda$ with sample size $n$, and plug into $f(\lambda^{(i)},Y)$ ($i\in[n]$). The results are displayed in Figure \ref{exp:real}. 

We conclude our findings from the numerical results. 
\begin{enumerate}
\item Overall, our experiments illustrate the high effectiveness of our estimators DRSK-R (DRSK) in terms of reducing the MSE in Monte Carlo computation. DRSK-R appears to be a better alternative to DRSK, outperforming DRSK in almost all experiments. 

\item The relative performance of DRSK-R compared with other methods is stable. DRSK-R maintains superior performance regardless of the dimension of the input data, up to 12 in the real-world computer communication network example. In some cases (e.g., Figures \ref{exp:mixture} and \ref{exp:t}), its superiority becomes more distinct when the dimension increases. In addition, DRSK-R achieves better results than CF and SimCF in almost all experiments. DRSK-R outperforms BBIS in most experiments while the relative performance of BBIS tends to be variable among different experiments.

\item In the realistic computer communication network example consisting of 12 estimated input parameters and a sophisticated black-box simulation model, the experiments indicate that our DRSK-R achieves the best results consistently across multiple model settings and also shows its better relative performance against CF (SimCF).

\item DRSK-R may be less effective when the sampling distribution is exact (e.g., Plot (A3) in Figure \ref{illustration1}), or for some specific target distributions (e.g., Beta distributions in Figure \ref{exp:beta}). It may sometimes perform similarly to BBIS (e.g., Setting (A2) in Figure \ref{exp:real}) or SimCF (e.g., Setting (B3) in Figure \ref{illustration1}). Nevertheless, by taking advantage of both CF and BBIS, it achieves the smallest MSE in most experiments. 

\end{enumerate}

\section{Theory for Regularized Least Square Regression}  \label{sec:RLS}

In this section, we first review basic facts about RLS regression (aka KRR), whose framework follows \citet{SZ2} and \citet{SW}. Next, we develop some theoretical results for the purpose of our analysis in Section \ref{sec:analysis}. 

Suppose we have i.i.d. samples $\{(x_j,f(x_j,y_j)):j=1,\cdots,m\}$ where $(x_i,y_i)$ are i.i.d. drawn from \textit{the sampling distribution} $q$. 
We denote $\hat{z}=(f(x_1,y_1),\cdots, f(x_m,y_m))^T$. 
We call $f_q$ the \textit{(ground-truth) regression function} defined by
$$f_q(x)=\mathbb{E}_q [f(X,Y)|X=x].$$
	
The goal of KRR is to learn the regression function by constructing a ``good" approximating function $s_m$ from the data. Let $\mathcal{H}$ be a \textit{generic} RKHS with the associated kernel $k(x,y)$\footnotemark\footnotetext{In this Section, $\mathcal{H}$ is \textit{generic}, not necessarily associated with the ones introduced in Section \ref{sec:mainres}. We will specify $\mathcal{H}$ in Section \ref{sec:analysis} for the proofs of our theorems in Section \ref{sec:DRSK main}.}. Let $\|\cdot\|_{\mathcal{H}}$ denote the norm on $\mathcal{H}$. Note that $k_x=k(x,\cdot)$ is a function in $\mathcal{H}$.

For this section, we only impose the following assumption:
\begin{assumption}
$\kappa:=\sup_{x\in \Omega} \sqrt{k(x,x)}<\infty$, $M_0:=\mathbb{E}_q [(z-f_q(x))^2]<\infty$. For any  $g\in \mathcal{H}$, $g(x)$ is $q$-measurable. $f_q\in L^2(q_X)$. \label{A0}
\end{assumption}

It follows that for any  $g\in \mathcal{H}$,
$$\sup_{t\in \Omega}|g(t)|^2=\sup_{t\in \Omega}|\langle g,k_t\rangle|^2\le \sup_{t\in \Omega} \|g\|^2_\mathcal{H} \|k_t\|^2_\mathcal{H} \le \kappa^2 \|g\|^2_\mathcal{H}. $$
So under Assumption \ref{A0}, any $g\in \mathcal{H}$ is a bounded function. We point out the following inequality that we will use frequently:
\begin{equation} \label{equ:metric}
\|g\|_{L^p(q_X)}\le \kappa \|g\|_\mathcal{H}, \quad \forall 1\le p\le \infty.     
\end{equation}

The RLS problem is given by
$$s_m(x):= \mathop{\arg\min}_{g\in \mathcal{H}} \left\{ \frac{1}{m} \sum_{j=1}^{m} (f(x_j,y_j)-g(x_j))^2 +\lambda \|g\|^2_{\mathcal{H}} \right\}$$
where $\lambda>0$ is a regularization parameter which may depend on the cardinality of the training data set. 
Note that the data needed in this problem are only the values $\{x_j\}_{j=1,\cdots,m}$ and $\{z_j:=f(x_j,y_j)\}_{j=1,\cdots,m}$, so $y_j$ and $f$ can be in a black box. A nice property of RLS is that there is an closed-form formula for its solution, as stated below.
\begin{lemma}\label{RLS1}
	Let $K=(k(x_i,x_j))_{m\times m}\in \mathbb{R}^{m\times m}$ be the kernel Gram matrix and
	$$\hat{k}(x)=(k(x_1,x), \cdots,  k(x_m,x))^T.$$ 
	Then the RLS solution is given as $s_m(x)= \beta^T \hat{k}(x)$ where $\beta =(K +\lambda m I)^{-1} \hat{z}$.
\end{lemma}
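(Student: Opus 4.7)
The plan is to solve the RLS optimization via a first-order optimality condition in the RKHS, exploit the reproducing property to reduce the infinite-dimensional problem to a finite-dimensional linear system in $\mathbb{R}^m$, and then invert that system. First, observe that the objective $J(g) = \frac{1}{m}\sum_{j=1}^{m}(z_j - g(x_j))^2 + \lambda\|g\|_{\mathcal{H}}^2$ is strictly convex and coercive on $\mathcal{H}$ (because $\lambda > 0$ and the fit term is convex), so it admits a unique minimizer $s_m$.

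Next, I would compute the Fréchet derivative of $J$ in $\mathcal{H}$. Using the reproducing property $g(x_j) = \langle g, k_{x_j}\rangle_{\mathcal{H}}$, one finds that for every $h\in\mathcal{H}$,
\begin{equation*}
\tfrac{1}{2}\,DJ(g)[h] \;=\; \bigl\langle \lambda g - \tfrac{1}{m}\sum_{j=1}^{m}(z_j - g(x_j))\,k_{x_j},\; h \bigr\rangle_{\mathcal{H}}.
\end{equation*}
Setting the derivative to zero and using that this must hold for all $h\in\mathcal{H}$, the unique minimizer satisfies
\begin{equation*}
s_m \;=\; \tfrac{1}{\lambda m}\sum_{j=1}^{m}\bigl(z_j - s_m(x_j)\bigr)\,k_{x_j},
\end{equation*}
which immediately shows $s_m$ lies in the finite-dimensional subspace $\mathrm{span}\{k_{x_1},\ldots,k_{x_m}\}$ (this is essentially the representer theorem applied in situ).

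Writing $s_m = \sum_{j=1}^{m}\beta_j k_{x_j}$, the reproducing property yields $s_m(x_i) = \sum_{j}\beta_j k(x_j,x_i) = (K\beta)_i$, so $s_m(x) = \beta^T \hat{k}(x)$. Substituting into the optimality equation and reading off coefficients of $k_{x_j}$ gives $\lambda m\,\beta_j = z_j - (K\beta)_j$ for each $j$, i.e.\ $(K + \lambda m I)\beta = \hat{z}$. Since $K$ is positive semi-definite and $\lambda m > 0$, the matrix $K + \lambda m I$ has all eigenvalues at least $\lambda m$, hence is invertible, so $\beta = (K + \lambda m I)^{-1}\hat{z}$ as claimed.

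The only mild subtlety is the step where I equate coefficients of $k_{x_j}$: if the vectors $\{k_{x_j}\}$ are linearly dependent in $\mathcal{H}$ (i.e., $K$ is singular), the coefficient representation is non-unique. The cleanest way to handle this is to avoid reading off coefficients and instead argue directly: plug the ansatz $s_m = \sum_j \beta_j k_{x_j}$ with $\beta = (K+\lambda m I)^{-1}\hat{z}$ into the optimality condition and verify the identity using $K\beta = \hat{z} - \lambda m\beta$; uniqueness of the minimizer then forces this $s_m$ (and hence the predictions $s_m(x)=\beta^T\hat{k}(x)$) to be the solution, regardless of which representer coefficients are chosen.
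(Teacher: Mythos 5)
Your proposal is correct: the first-order optimality condition in $\mathcal{H}$, the reproducing-property reduction to the span of $\{k_{x_1},\dots,k_{x_m}\}$, and the verification step that sidesteps non-uniqueness of coefficients when $K$ is singular together give a complete and sound derivation. The paper itself states this lemma without proof, treating it as the standard closed form of kernel ridge regression from the cited literature, and your argument is exactly that standard derivation, so it matches the intended justification.
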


Lemma \ref{RLS1} shows a direct way to calculate $s_m$. 
From a theoretical point of view, to derive the property of $s_m$, we need to use some tools from functional analysis. To begin, we give an equivalent form of $s_m$ in terms of linear operators. Define the sampling operator $S_x : \mathcal{H} \to \mathbb{R}^m$ associated with a discrete subset $\{x_i\}^{m}_{i=1}$ by
$$S_x(g)=(g(x_i))^{m}_{i=1} , \ g\in \mathcal{H}.$$
The adjoint of the sampling operator,  $S^T_x : \mathbb{R}^m \to \mathcal{H}$, is given by
$$S^T_x(c)=\sum_{i=1}^{m} c_i k_{x_i} , c\in \mathbb{R}^m.$$

Note that the compound mapping $S^T_x S_x$ is a positive self-adjoint operator on $\mathcal{H}$.
Let $I$ denote the identity mapping on $\mathcal{H}$. We have:

\begin{lemma}\label{RLS2}
	The RLS solution can be written as follows:
	$$s_m= \left(\frac{1}{m} S^T_x S_x +\lambda I\right)^{-1} \frac{1}{m} S^T_x(\hat{z}).$$\end{lemma}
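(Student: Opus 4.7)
The plan is to obtain the stated formula by computing the Fréchet derivative of the RLS objective, viewed as a functional on $\mathcal{H}$, and setting it equal to zero. First I would rewrite the objective in operator form using the sampling operator: the empirical squared loss becomes $\frac{1}{m}\|S_x g - \hat{z}\|^2_{\mathbb{R}^m}$, and the RLS functional is then
$$J(g) = \frac{1}{m}\|S_x g - \hat{z}\|^2_{\mathbb{R}^m} + \lambda \|g\|^2_{\mathcal{H}}.$$
This is a strictly convex, coercive quadratic in $g\in\mathcal{H}$ (since $\lambda>0$), so a unique minimizer exists and is characterized by the first-order condition.

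Next I would compute, for arbitrary direction $h\in\mathcal{H}$, the directional derivative
$$\left.\frac{d}{dt}J(g+th)\right|_{t=0} = \frac{2}{m}\langle S_x g - \hat{z},\, S_x h\rangle_{\mathbb{R}^m} + 2\lambda\langle g, h\rangle_{\mathcal{H}}.$$
Applying the definition of the adjoint $S_x^T$ to move $S_x$ off of $h$, this becomes
$$2\left\langle \left(\tfrac{1}{m} S_x^T S_x + \lambda I\right)g - \tfrac{1}{m}S_x^T \hat{z},\; h\right\rangle_{\mathcal{H}}.$$
Since the minimizer $s_m$ satisfies this derivative equal to zero for every $h\in\mathcal{H}$, the bracketed element of $\mathcal{H}$ must vanish, yielding
$$\left(\tfrac{1}{m} S_x^T S_x + \lambda I\right) s_m = \tfrac{1}{m}S_x^T \hat{z}.$$

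Finally, I would note that $\tfrac{1}{m}S_x^T S_x$ is positive self-adjoint (as already observed just before the lemma statement), so $\tfrac{1}{m}S_x^T S_x + \lambda I$ is a bounded, self-adjoint, strictly positive operator with spectrum bounded below by $\lambda>0$, hence boundedly invertible on $\mathcal{H}$. Inverting gives the claimed closed form. The argument is largely routine; the only step that requires any care is justifying that the Fréchet derivative calculation in a (possibly infinite-dimensional) RKHS truly yields the minimizer, which is handled by strict convexity plus coercivity. No deep obstacle is anticipated, and consistency with Lemma \ref{RLS1} can be checked by applying the reproducing property $k_x(\cdot)=k(\cdot,x)$ to convert the operator expression into the explicit Gram-matrix form already given.
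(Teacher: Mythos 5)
Your proof is correct: the variational argument (rewrite the empirical loss as $\frac{1}{m}\|S_x g-\hat z\|^2_{\mathbb{R}^m}$, take the directional derivative, use the adjoint to obtain the normal equation $(\frac{1}{m}S_x^TS_x+\lambda I)s_m=\frac{1}{m}S_x^T\hat z$, and invert the strictly positive operator) is exactly the standard derivation, and it is the one given in the reference the paper cites for this lemma, so no gap remains; strict convexity alone already guarantees that the unique solution of the stationarity equation is the minimizer, so the coercivity remark is harmless but not needed. The paper itself offers no written proof; besides citing the reference it suggests an alternative route, namely deriving the operator identity directly from the Gram-matrix formula of Lemma \ref{RLS1} (i.e., verifying that $g=(\frac{1}{m}S_x^TS_x+\lambda I)^{-1}\frac{1}{m}S_x^T\hat z$, which lies in the span of the $k_{x_i}$ by construction, has coefficient vector $\beta=(K+\lambda m I)^{-1}\hat z$). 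That algebraic route buys a quick consistency check between the two closed forms but presupposes Lemma \ref{RLS1}; your variational route is self-contained, proves existence and uniqueness along the way, and in fact yields Lemma \ref{RLS1} as a corollary via the reproducing property, as you note at the end.
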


A proof can be found in \citet{SZ2}. It is also easy to derive this result directly from Lemma \ref{RLS1}.

Denote 
$$\mathcal{H}_0^q:=\{g\in\mathcal{H}: g=0 \ \text{ a.e. with respect to } q\} \ \text{and}$$ $$\mathcal{H}_1^{q}:=(\mathcal{H}_0^{q})^\perp \text{, the orthogonal complement of } \mathcal{H}_0^{q} \text{ in } \mathcal{H}.$$
If $q$ is clear from the context, we write $\mathcal{H}_0=\mathcal{H}_0^q$,  $\mathcal{H}_1=\mathcal{H}_1^q$ for simplicity. Note that both $\mathcal{H}_0$ and $\mathcal{H}_1$ are closed subspaces in $\mathcal{H}$ with respect to the norm $\|\cdot\|_{\mathcal{H}}$. It is well-known that $\mathcal{H}/\mathcal{H}_0$ is isometrically isomorphic to $\mathcal{H}_1$. So  $\mathcal{H}_1$ is essentially the quotient space of $\mathcal{H}$ induced by the equivalence relation ``a.e. with respect to $q$", the same equivalence relation in $L^2(q_X)$. If $f\in \mathcal{H}$, we may replace the original $f\in \mathcal{H}$ with a equivalent $\Tilde{f}\in \mathcal{H}_1$ ($f=\Tilde{f}$ a.e. with respect to $q$) since this will not effect the estimation of the parameter. For this purpose, we may treat $\mathcal{H}_1$ as $\mathcal{H}$. (But they are substantially different in some way, see \citet{SW}.) Let $\overline{\mathcal{H}_1}^q$ (which is the same as $\overline{\mathcal{H}}^q$) be the closure  of $\mathcal{H}_1$ in $L^2(q_X)$. 

Next, we introduce a standard result from functional analysis. This result can be found in Theorem 2.4 and Proposition 2.10 in \citet{SO}.

\begin{theorem} \label{CFC}
Let $A$ be a bounded self-adjoint linear operator. Let $C(\sigma(A))$ be the set of real-valued continuous functions defined on the spectrum of $A$. Then for any $g\in C(\sigma(A))$, $g(A)$ is self-adjoint and $\|g(A)\|=\sup_{x\in \sigma(A)}|g(x)|.$
\end{theorem}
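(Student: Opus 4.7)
The plan is to build the functional calculus in two stages: first on polynomials, then extend by density. The only structural facts I need are that $A$ bounded self-adjoint implies $\sigma(A)$ is a nonempty compact subset of $\mathbb{R}$, so $C(\sigma(A))$ is a Banach space under the sup norm, and that $p \mapsto p(A)$ has nice algebraic properties when $p$ is a polynomial.

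First step: establish the identity on polynomials. For a real polynomial $p(x) = \sum_{k=0}^n a_k x^k$, I define $p(A) = \sum_{k=0}^n a_k A^k$ directly; self-adjointness is immediate since $A$ is self-adjoint and the coefficients are real. The key identity to prove is
\[
\|p(A)\| = \sup_{x \in \sigma(A)} |p(x)|.
\]
I would obtain this by combining two classical ingredients: (i) the polynomial spectral mapping theorem $\sigma(p(A)) = p(\sigma(A))$, and (ii) the fact that for any bounded self-adjoint operator $T$, the operator norm equals the spectral radius, $\|T\| = \sup\{|\lambda| : \lambda \in \sigma(T)\}$. Applying (ii) to $T = p(A)$ (which is self-adjoint by construction) and invoking (i) gives the equality immediately.

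Second step: extend by density. Since $\sigma(A)$ is compact in $\mathbb{R}$, the Stone--Weierstrass theorem shows that restrictions of real polynomials are uniformly dense in $C(\sigma(A))$. The identity from the first step says $p \mapsto p(A)$ is an isometry from the polynomials (in sup norm) into $\mathcal{L}(\mathcal{H})$ (in operator norm), so it has a unique continuous linear extension $\Phi : C(\sigma(A)) \to \mathcal{L}(\mathcal{H})$. Concretely, for $g \in C(\sigma(A))$, choose polynomials $p_n \to g$ uniformly on $\sigma(A)$; then $\{p_n(A)\}$ is Cauchy in operator norm by the polynomial isometry, so I define $g(A) := \lim_n p_n(A)$. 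The limit is well-defined (independence of the approximating sequence follows from the isometry) and self-adjoint, since self-adjoint operators form a norm-closed subspace of $\mathcal{L}(\mathcal{H})$. The norm identity passes through the limit: $\|g(A)\| = \lim_n \|p_n(A)\| = \lim_n \sup_{x \in \sigma(A)} |p_n(x)| = \sup_{x \in \sigma(A)} |g(x)|$.

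The main obstacle sits in the first step, in establishing the two classical ingredients. The norm-equals-spectral-radius identity for self-adjoint $T$ reduces, via Gelfand's formula $r(T) = \lim_n \|T^n\|^{1/n}$, to the repeated squaring identity $\|T^{2^k}\| = \|T\|^{2^k}$; this in turn follows from the variational characterization $\|T\| = \sup_{\|x\| = 1} |\langle Tx, x\rangle|$ specific to self-adjoint operators, giving $\|T\|^2 = \sup_{\|x\|=1} \langle T^2 x, x\rangle \le \|T^2\|$ with the reverse inequality trivial. The polynomial spectral mapping follows by factoring $p(\lambda) - \mu = c \prod_i (\lambda - \alpha_i)$ over $\mathbb{C}$ and noting that $p(A) - \mu I = c \prod_i (A - \alpha_i I)$ is invertible if and only if each commuting factor is. These are exactly the points at which the self-adjointness of $A$ enters in a nontrivial way; everything else is formal extension and limit-passing.
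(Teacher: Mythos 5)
Your proof is correct: the two-stage construction (polynomial functional calculus via the spectral mapping theorem together with the norm-equals-spectral-radius identity $\|T\|=r(T)$ for self-adjoint $T$, then isometric extension to $C(\sigma(A))$ by Stone--Weierstrass, with self-adjointness and the norm identity passing through the operator-norm limit) is the standard argument, and the sketched proofs of the two classical ingredients are sound. The paper does not prove this statement at all --- it cites it as a standard fact from functional analysis (Theorem 2.4 and Proposition 2.10 of its reference) --- and the textbook proof it defers to is essentially the one you have written, so there is no divergence to reconcile.
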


Define $L_q:L^2(q_X)\to L^2(q_X)$ as the integral operator
$$(L_q g)(x) :=\int_{\Omega} k(x,x')g(x')q(x')dx', \ x\in\Omega,\ g\in L^2(q_X).$$
This operator can be viewed as a linear operator on $L^2(q_X)$ or on $\mathcal{H}$. Unless specified otherwise, we always assume the domain of $L_q$ is $L^2(q_X)$. \citet{SW} shows that $L_q$ is a compact and positive self-adjoint operator on $L^2(q_X)$. Theorem \ref{CFC} shows that $L_q^{r}$ is a well-defined self-adjoint operator on $L^2(q_X)$ for $0\le r\le 1$. Denote $\text{Range}(L_q^r)$ the range of $L_q^r$ on the domain $L^2(q_X)$. When we write $L_q^{-r}g\in L^2(q_X)$, it should be understood that (1) $g\in \text{Range}(L_q^r)$, (2) $L_q^{-r}g$ is an element in the preimage set of $g$ under the operator $L_q^{r}$ on the domain $L^2(q_X)$.

We will frequently use the following lemma, which indicates a useful property of the integral operator $L_q$ (a proof can be found in \citet{SW}).
\begin{lemma} \label{isomorphism}
$L_q^{\frac{1}{2}} f \in \mathcal{H}_1$ for any $f\in L^2(q_X)$, and $L_q^{\frac{1}{2}}$ is an isometric isomorphism from $(\overline{\mathcal{H}_1}^q, \|\cdot\|_{L^2(q_X)})$ onto $(\mathcal{H}_1, \|\cdot\|_{\mathcal{H}})$.
\end{lemma}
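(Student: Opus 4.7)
The plan is to realize $L_q=II^*$, where $I:\mathcal{H}\to L^2(q_X)$ is the natural inclusion (injective on $\mathcal{H}_1$ with kernel $\mathcal{H}_0^q$). First I would compute the adjoint explicitly via the reproducing property: for any $f\in L^2(q_X)$ and $x\in\Omega$,
$$(I^*f)(x)=\langle I^*f, k_x\rangle_\mathcal{H}=\langle f, Ik_x\rangle_{L^2(q_X)}=\int_\Omega k(x,x')f(x')q_X(x')dx'=(L_qf)(x).$$
So $I^*=L_q$ as a map into $\mathcal{H}$, which supplies the key fact that $L_q f$ is a genuine element of $\mathcal{H}$ for every $f\in L^2(q_X)$. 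Combined with the standard identities $(\ker I)^\perp=\overline{\text{Range}(I^*)}$ and $(\ker I^*)^\perp=\overline{\text{Range}(I)}$, this yields $\overline{\text{Range}(L_q)}=\overline{\mathcal{H}_1}^q$ and $\ker L_q=(\overline{\mathcal{H}_1}^q)^\perp$ in $L^2(q_X)$.

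Next, since $L_q$ is compact, positive, and self-adjoint on $L^2(q_X)$, the spectral theorem furnishes an orthonormal basis $\{\phi_i\}$ of $\overline{\mathcal{H}_1}^q$ with $L_q\phi_i=\mu_i\phi_i$, $\mu_i>0$. The identity $\phi_i=\mu_i^{-1}L_q\phi_i=\mu_i^{-1}I^*\phi_i$ shows each $\phi_i$ has a canonical representative lying in $\mathcal{H}_1$. Setting $e_i:=\sqrt{\mu_i}\phi_i$, a short computation using the adjoint relation gives
$$\langle e_i,e_j\rangle_\mathcal{H}=\tfrac{1}{\sqrt{\mu_i\mu_j}}\langle I^*\phi_i,I^*\phi_j\rangle_\mathcal{H}=\tfrac{1}{\sqrt{\mu_i\mu_j}}\langle\phi_i,L_q\phi_j\rangle_{L^2(q_X)}=\delta_{ij},$$
so $\{e_i\}$ is orthonormal in $\mathcal{H}$. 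To see it spans $\mathcal{H}_1$, I would take $g\in\mathcal{H}_1$ with $\langle g,e_i\rangle_\mathcal{H}=0$ for all $i$; then $\langle Ig,\phi_i\rangle_{L^2(q_X)}=\mu_i^{-1/2}\langle g,I^*\phi_i\rangle_\mathcal{H}=0$, forcing $Ig\in\overline{\mathcal{H}_1}^q\cap\ker L_q=\{0\}$, and hence $g\in\mathcal{H}_0^q\cap\mathcal{H}_1=\{0\}$.

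With both orthonormal bases in hand the conclusion is bookkeeping. For any $f\in L^2(q_X)$, decompose $f=f_0+\sum_i c_i\phi_i$ with $f_0\in\ker L_q$; spectral calculus gives $L_q^{1/2}f=\sum_i c_i\sqrt{\mu_i}\phi_i=\sum_i c_i e_i$, and Bessel's inequality $\sum_i c_i^2\le\|f\|_{L^2(q_X)}^2$ shows this series converges in $(\mathcal{H}_1,\|\cdot\|_\mathcal{H})$, proving $L_q^{1/2}f\in\mathcal{H}_1$. Restricted to $f\in\overline{\mathcal{H}_1}^q$ the component $f_0$ vanishes, giving $\|L_q^{1/2}f\|_\mathcal{H}^2=\sum_i c_i^2=\|f\|_{L^2(q_X)}^2$, and surjectivity onto $\mathcal{H}_1$ follows by expanding an arbitrary $g=\sum_i a_i e_i \in\mathcal{H}_1$ and noting that $h:=\sum_i a_i\phi_i\in\overline{\mathcal{H}_1}^q$ satisfies $L_q^{1/2}h=g$.

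The one step requiring genuine care, and the main obstacle, is the consistent identification of each $\phi_i$ (a priori only an equivalence class in $L^2(q_X)$) with a specific function in $\mathcal{H}_1$ via the integral relation $\phi_i=\mu_i^{-1}L_q\phi_i$. This is what allows the $L^2$-spectral data and the reproducing kernel structure to interact, and it must be invoked consistently to distinguish $L_q^{1/2}f$ viewed as an $L^2$-object from $L_q^{1/2}f$ viewed as an element of $\mathcal{H}$.
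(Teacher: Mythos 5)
Your proposal is correct. The paper itself does not prove this lemma --- it simply cites \citet{SW} --- and your argument is in substance the standard one from that literature: factor $L_q=II^{*}$ through the inclusion $I:\mathcal{H}\to L^2(q_X)$, identify $I^{*}=L_q$ via the reproducing property (which is exactly what puts $\mathrm{Range}(L_q)$ inside $\mathcal{H}_1$, since $\mathrm{Range}(I^{*})\subset(\ker I)^{\perp}$), and then transfer the spectral decomposition of the compact positive operator $L_q$ into an orthonormal basis $\{e_i=\sqrt{\mu_i}\,\phi_i\}$ of $\mathcal{H}_1$. Two small points of bookkeeping: in the completeness step the correct identity is $\langle Ig,\phi_i\rangle_{L^2(q_X)}=\langle g,I^{*}\phi_i\rangle_{\mathcal{H}}=\sqrt{\mu_i}\,\langle g,e_i\rangle_{\mathcal{H}}$, so your factor $\mu_i^{-1/2}$ should be $\mu_i^{1/2}$ (the conclusion $=0$ is unaffected); and when you conclude $L_q^{1/2}f\in\mathcal{H}_1$ you should state explicitly that the $\mathcal{H}$-limit of $\sum_i c_i e_i$ coincides with the $L^2$ spectral-calculus limit because $\|\cdot\|_{L^2(q_X)}\le\kappa\|\cdot\|_{\mathcal{H}}$ (inequality \eqref{equ:metric}), which is the representative-identification issue you flag in your final paragraph. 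With these touches, your write-up is a complete, self-contained proof of the lemma that the paper outsources to the reference; its value here is precisely that it makes the isometry and the separation between $L^2$-classes and genuine $\mathcal{H}_1$-representatives explicit, which is also what Propositions \ref{ease} and \ref{ease2} rely on.
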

Note that Lemma \ref{isomorphism} implies that $\text{Range}(L_q^{\frac{1}{2}})=\text{Range}(L_q^{\frac{1}{2}}|_{\overline{\mathcal{H}_1}^{q}})=\mathcal{H}_1$.

Next, consider an oracle or a data-free limit of $s_m$ as
$$f_\lambda :=\mathop{\arg\min}_{g\in \mathcal{H}} \left\{ \|g-f_q\|^2_{L^2(q_X)} + \lambda \|g\|^2_{\mathcal{H}} \right\}.$$
We have the following explicit expression (a proof can be found in \citet{CS}):
\begin{lemma}\label{RLS3}
	The solution of $f_\lambda$ is given as $f_\lambda= (L_q+\lambda I)^{-1} L_qf_q$.
\end{lemma}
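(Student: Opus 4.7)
The plan is to derive the formula via a direct first-order (stationarity) argument, exploiting the adjoint relationship between $L_q$ and the inclusion/evaluation of functions from $\mathcal{H}$ into $L^2(q_X)$.

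First I would establish the key identity: for any $u \in L^2(q_X)$ and any $h \in \mathcal{H}$,
$$\langle u, h\rangle_{L^2(q_X)} = \langle L_q u, h\rangle_{\mathcal{H}}.$$
This follows from the reproducing property $h(x)=\langle h,k_x\rangle_{\mathcal{H}}$ together with Fubini/Bochner-style interchange of integration and inner product, noting that under Assumption \ref{A0} the integrand is absolutely integrable because $\|k_x\|_{\mathcal{H}}\le \kappa$ and $u\in L^2(q_X)\subset L^1(q_X)$ when $q_X$ has finite mass (or one localizes the argument). This identity says that, with the appropriate domain adjustments, $L_q$ is the adjoint of the inclusion $\mathcal{H}\hookrightarrow L^2(q_X)$; in particular, $L_q u \in \mathcal{H}$ for every $u\in L^2(q_X)$.

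Next, I would note that the functional
$$F(g) := \|g-f_q\|^2_{L^2(q_X)} + \lambda \|g\|^2_{\mathcal{H}}$$
is strictly convex and coercive on $\mathcal{H}$ (thanks to $\lambda>0$), so it admits a unique minimizer, and this minimizer is characterized by the vanishing of its G\^ateaux derivative. Computing
$$\tfrac{d}{dt}F(g+th)\Big|_{t=0} = 2\langle g - f_q, h\rangle_{L^2(q_X)} + 2\lambda \langle g, h\rangle_{\mathcal{H}}$$
and setting this to zero for every $h\in \mathcal{H}$, I would apply the identity from the previous paragraph (with $u=g-f_q$, which lies in $L^2(q_X)$ by \eqref{equ:metric} and Assumption \ref{A0}) to rewrite the stationarity condition as
$$\langle L_q(g-f_q) + \lambda g, h\rangle_{\mathcal{H}} = 0 \quad \text{for all } h\in \mathcal{H},$$
which is equivalent to $(L_q + \lambda I)g = L_q f_q$ as an identity in $\mathcal{H}$.

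To finish, I would invoke that $L_q$ viewed as an operator on $\mathcal{H}$ is positive self-adjoint (which one reads off from the identity $\langle L_q g, g\rangle_{\mathcal{H}}=\|g\|^2_{L^2(q_X)}$ obtained by specializing the key identity), so $L_q + \lambda I$ has a bounded inverse on $\mathcal{H}$ whenever $\lambda > 0$ by the spectral calculus of Theorem \ref{CFC}. Solving for $g$ then yields $f_\lambda = (L_q+\lambda I)^{-1} L_q f_q$, as claimed.

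The main obstacle I anticipate is bookkeeping around which space $L_q$ is acting on: the operator is originally defined on $L^2(q_X)$, but the first-order condition lives in $\mathcal{H}$, so one must carefully verify that $L_q f_q$ and $L_q g$ both land in $\mathcal{H}$ and that $(L_q+\lambda I)^{-1}$ is a valid bounded operator in this setting. This is essentially already addressed by the reproducing-kernel representation of $L_q$ and the standard fact that $L_q$ restricts to a positive self-adjoint operator on $\mathcal{H}$, but writing it down cleanly is the step that requires the most care.
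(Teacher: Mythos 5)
Your argument is correct: the adjoint identity $\langle u,h\rangle_{L^2(q_X)}=\langle L_q u,h\rangle_{\mathcal{H}}$, the G\^ateaux stationarity condition, and the invertibility of $L_q+\lambda I$ on $\mathcal{H}$ together give exactly $f_\lambda=(L_q+\lambda I)^{-1}L_q f_q$, and the domain bookkeeping you flag is handled correctly since $q_X$ is a probability density (so $L^2(q_X)\subset L^1(q_X)$ and the Bochner integral defining $L_q u$ lands in $\mathcal{H}$). This is essentially the same variational proof as in the reference \citet{CS} that the paper cites for Lemma \ref{RLS3}, so nothing further is needed.
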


To show that $s_m-f_q$ is small, we split it into two parts
\begin{equation}
s_m-f_q=(s_m-f_\lambda)+(f_\lambda-f_q).\label{decomposition}
\end{equation}
The first part in \eqref{decomposition} comes from the statistical noise in the RLS regression, whereas the second part can be viewed as the bias of the functional approximation. In terms of terminology in machine learning, the first term is called the estimation error (or sample error) and the second term called the approximation error. We study the asymptotic error of each part in the next set of results.
\begin{proposition}\label{second}
	Suppose that $L_q^{-r} f_q \in L^2(q_X)$ where $0 \le r\le 1$. Then\\
	(1) $\|f_\lambda-f_q\|_{L^2(q_X)}\le \lambda^{r} \|L_q^{-r} f_q\|_{L^2(q_X)}$.\\
	(2) $\|f_\lambda-f_q\|_{\mathcal{H}} \le \lambda^{r-\frac{1}{2}} \|L_q^{-r} f_q\|_{L^2(q_X)}$ for $\frac{1}{2}\le r\le 1$ only.
\end{proposition}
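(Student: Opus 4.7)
The starting point is Lemma \ref{RLS3}, which gives the clean identity
\[
f_\lambda - f_q = (L_q+\lambda I)^{-1}L_q f_q - f_q = -\lambda (L_q+\lambda I)^{-1} f_q.
\]
Using the hypothesis $L_q^{-r}f_q \in L^2(q_X)$, I would write $f_q = L_q^{r} g$ with $g := L_q^{-r}f_q$, so that
\[
f_\lambda - f_q = -\lambda (L_q+\lambda I)^{-1} L_q^{r} g.
\]
From here, both bounds reduce to controlling an operator norm on $L^2(q_X)$ via the spectral theorem for the compact, positive, self-adjoint operator $L_q$ (Theorem \ref{CFC}).

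For part (1), I would bound the operator norm of $\lambda(L_q+\lambda I)^{-1}L_q^{r}$ on $L^2(q_X)$ by the supremum of $\varphi_r(t) := \lambda t^{r}/(t+\lambda)$ over $t\in\sigma(L_q)\subset [0,\infty)$. A one-line estimate using Young's inequality in the form $\lambda^{1-r}t^{r}\le r t + (1-r)\lambda \le t+\lambda$ (valid for $0\le r\le 1$) yields $\varphi_r(t)\le \lambda^{r}$, and hence
\[
\|f_\lambda-f_q\|_{L^2(q_X)}\le \lambda^{r}\|g\|_{L^2(q_X)} = \lambda^{r}\|L_q^{-r}f_q\|_{L^2(q_X)}.
\]

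For part (2), the extra input is Lemma \ref{isomorphism}, which identifies $L_q^{1/2}$ as an isometric isomorphism from $\overline{\mathcal{H}_1}^{q}$ onto $\mathcal{H}_1$; in particular $\|h\|_{\mathcal{H}}=\|L_q^{-1/2}h\|_{L^2(q_X)}$ for $h\in\mathcal{H}_1$. The restriction $r\ge \tfrac{1}{2}$ enters precisely to ensure $f_q \in \text{Range}(L_q^{r})\subset \text{Range}(L_q^{1/2})=\mathcal{H}_1$, so that $f_\lambda - f_q \in \mathcal{H}_1$ and the $\mathcal{H}$-norm is well-defined (up to the a.e.-equivalence). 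Applying the isometry,
\[
\|f_\lambda-f_q\|_{\mathcal{H}} = \|L_q^{-1/2}(f_\lambda-f_q)\|_{L^2(q_X)} = \|\lambda (L_q+\lambda I)^{-1}L_q^{r-1/2} g\|_{L^2(q_X)},
\]
and the same Young-type bound applied with exponent $r-\tfrac{1}{2}\in[0,\tfrac{1}{2}]$ (which is in $[0,1]$, so the earlier estimate applies) gives the operator-norm bound $\lambda^{r-1/2}$, hence the claim.

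The main (modest) obstacle is the bookkeeping around domains and the identification $\mathcal{H}\equiv \mathcal{H}_1$ modulo $q$-null functions: one must check that the operator products $\lambda(L_q+\lambda I)^{-1}L_q^{r}$ and $\lambda(L_q+\lambda I)^{-1}L_q^{r-1/2}$ are genuinely well-defined bounded operators on $L^2(q_X)$, which they are because $L_q^{r}$ and $L_q^{r-1/2}$ are bounded self-adjoint on $L^2(q_X)$ for the exponents in question, and the remaining factor $\lambda(L_q+\lambda I)^{-1}$ is a bounded function of $L_q$ via continuous functional calculus. Once this is clear, both inequalities drop out immediately from the uniform bound $\sup_{t\ge 0}\lambda t^{s}/(t+\lambda)\le \lambda^{s}$ for $s\in[0,1]$.
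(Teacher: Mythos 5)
Your proposal is correct and follows essentially the same route as the paper: it starts from the identity $f_\lambda-f_q=-\lambda(L_q+\lambda I)^{-1}f_q$ of Lemma \ref{RLS3}, inserts $L_q^{r}L_q^{-r}$ and bounds the resulting operator by the spectral bound $\sup_{t\ge 0}\lambda t^{s}/(t+\lambda)\le\lambda^{s}$ (the paper states this as $\|h\|_\infty\le\lambda^{s-1}$ via Theorem \ref{CFC}; your weighted AM--GM step is just an explicit justification of the same estimate), and for part (2) converts the $\mathcal{H}$-norm to an $L^2(q_X)$-norm through the $L_q^{1/2}$ isometry of Lemma \ref{isomorphism}, with $r\ge\tfrac12$ ensuring membership in $\mathcal{H}_1$ exactly as in the paper. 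The only difference is presentational: you pull the factor $L_q^{1/2}$ out of $-\lambda(L_q+\lambda I)^{-1}L_q^{r}g$ rather than writing $L_q^{-1/2}(L_q+\lambda I)^{-1}f_q$, which slightly smooths over the same domain/closure caveat the paper itself flags and defers to \citet{SW}.
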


\begin{proof}
    We remark that $ \|L_q^{-r} f_q\|_{L^2(q_X)}$ measures a complexity of the regression function \citep{SZ3}. Using Lemma \ref{RLS3}, we write
	$$f_\lambda-f_q=(L_q+\lambda I)^{-1} L_qf_q -f_q=-\lambda(L_q+\lambda I)^{-1} f_q.$$
	(1) For the first part, we write
	$$\|f_\lambda-f_q\|_{L^2(q_X)}=\lambda \|(L_q+\lambda I)^{-1}L_q^{r} L_q^{-r} f_q\|_{_{L^2(q_X)}}\le \lambda \|(L_q+\lambda I)^{-1}L_q^{r}\| \|L_q^{-r} f_q\|_{_{L^2(q_X)}}.$$
	Here we regard $L_q$ as a bounded positive self-adjoint operator on $L^2(q_X)$. Then Theorem \ref{CFC} states that $\|(L_q+\lambda I)^{-1}L_q^r\|\le\|h\|_\infty$ 	where $h(x)=\frac{x^r}{x+\lambda}$ is defined on $x\in [0,\infty)$ ($L_q$ is positive so $\sigma(L_q) \subset [0,\infty)$). Note that $\|h\|_\infty\le \lambda^{r-1}$. So
   	$$ \|f_\lambda-f_q\|_{L^2(q_X)}\le \lambda \lambda^{r-1} \|L_q^{-r} f_q\|_{_{L^2(q_X)}}^2= \lambda^{r} \|L_q^{-r} f_q\|_{L^2(q_X)}.$$
   	(2) For the second part, we exhibit an intuitive proof here. We write
	$$\|f_\lambda-f_q\|_{\mathcal{H}} =\lambda \|L_q^{\frac{1}{2}} L_q^{-\frac{1}{2}}(L_q+\lambda I)^{-1} f_q\|_{\mathcal{H}}=\lambda \| L_q^{-\frac{1}{2}}(L_q+\lambda I)^{-1} f_q\|_{L^2(q_X)}.$$
	The last equality is intuitively correct due to Lemma \ref{isomorphism}. (However, this statement is not rigorous because $ L_q^{-\frac{1}{2}}(L_q+\lambda I)^{-1} f_q$ is not necessarily in $\overline{\mathcal{H}_1}^q$.  A rigorous argument can be found in \citet{SW}.) Next we notice that
	\begin{align*}
	\| L_q^{-\frac{1}{2}}(L_q+\lambda I)^{-1} f_q\|_{L^2(q_X)}&=\| L_q^{-\frac{1}{2}}(L_q+\lambda I)^{-1} L_q^{r} L_q^{-r} f_q\|_{L^2(q_X)}\\
	&\le \| L_q^{-\frac{1}{2}}(L_q+\lambda I)^{-1} L_q^{r}\| \|L_q^{-r} f_q\|_{L^2(q_X)} .   
	\end{align*}
	Theorem \ref{CFC} states that $\| L_q^{-\frac{1}{2}}(L_q+\lambda I)^{-1} L_q^{r}\|\le\|h\|_\infty$ 	where $h(x)=\frac{x^{r-\frac{1}{2}}}{x+\lambda}$ is defined on $x\in [0,\infty)$ ($L_q$ is positive so $\sigma(L_q) \subset [0,\infty)$). Note that $\|h\|_\infty\le \lambda^{r-\frac{3}{2}}$. So
   	$$\|f_\lambda-f_q\|_{\mathcal{H}}\le \lambda \lambda^{r-\frac{3}{2}} \|L_q^{-r} f_q\|_{L^2(q_X)}= \lambda^{r-\frac{1}{2}} \|L_q^{-r} f_q\|_{L^2(q_X)}.$$
\end{proof}

If we want to obtain a better bound for $f_\lambda-f_q$ by using this proposition, we may want $r$ to be as large as possible, but meanwhile $L_q^{-r} f_q \in L^2(q_X)$ becomes a more restrictive constraint. However, we have the following proposition that can bypass this tradeoff.


\begin{proposition}\label{ease}
	The range of $L_q$ satisfies
    $$\overline{\text{Range}(L_q)}^{q}=\overline{\text{Range}(L_q|_{\overline{\mathcal{H}_1}^{q}})}^{q}= \overline{\mathcal{H}_1}^{q}(=\overline{\mathcal{H}}^{q}). $$
\end{proposition}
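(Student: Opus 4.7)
The plan is to reduce the double equality to three standard Hilbert-space observations whose only nontrivial input is Lemma~\ref{isomorphism} together with the fact that $L_q$ is a bounded positive self-adjoint operator on $L^2(q_X)$.

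First I would dispose of the parenthetical identity $\overline{\mathcal{H}}^{q}=\overline{\mathcal{H}_1}^{q}$. Every $g\in\mathcal{H}$ splits uniquely as $g=g_0+g_1$ with $g_0\in\mathcal{H}_0^q$ and $g_1\in\mathcal{H}_1$, and since $g_0=0$ $q$-a.e., the $L^2(q_X)$ class of $g$ agrees with that of $g_1$. Hence $\mathcal{H}$ and $\mathcal{H}_1$ have the same image in $L^2(q_X)$, and therefore the same $L^2(q_X)$-closure.

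Next I would establish the key identification $\overline{\text{Range}(L_q)}^{q}=\overline{\mathcal{H}_1}^{q}$. Lemma~\ref{isomorphism} gives $\text{Range}(L_q^{1/2})=\mathcal{H}_1$, so $\overline{\text{Range}(L_q^{1/2})}^{q}=\overline{\mathcal{H}_1}^{q}$. Since $L_q^{1/2}$ is a bounded self-adjoint operator on $L^2(q_X)$, the standard identity $(\ker A)^{\perp}=\overline{\text{Range}(A)}$ for self-adjoint $A$ (applied inside $L^2(q_X)$) yields $(\ker L_q^{1/2})^{\perp}=\overline{\mathcal{H}_1}^{q}$. From $\|L_q^{1/2}f\|^2_{L^2(q_X)}=\langle f,L_qf\rangle_{L^2(q_X)}$ one reads off $\ker L_q^{1/2}=\ker L_q$; applying the same self-adjoint identity to $L_q$ itself then gives $\overline{\text{Range}(L_q)}^{q}=(\ker L_q)^{\perp}=\overline{\mathcal{H}_1}^{q}$.

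For the leftmost equality I would use orthogonal decomposition in $L^2(q_X)$: for any $f\in L^2(q_X)$, write $f=f_1+f_0$ with $f_1\in(\ker L_q)^{\perp}=\overline{\mathcal{H}_1}^{q}$ and $f_0\in\ker L_q$; then $L_qf=L_qf_1\in\text{Range}(L_q|_{\overline{\mathcal{H}_1}^{q}})$. Hence $\text{Range}(L_q)=\text{Range}(L_q|_{\overline{\mathcal{H}_1}^{q}})$ (the reverse inclusion being trivial), which combined with the identification of the previous paragraph delivers the middle equality as well. The only delicate point is to keep track of which closure and which inner product are in use — in particular, the use of the positivity of $L_q^{1/2}$ to identify $\ker L_q^{1/2}$ with $\ker L_q$, and the use of self-adjointness inside $L^2(q_X)$ to pass between $(\ker A)^{\perp}$ and $\overline{\text{Range}(A)}^{q}$; Lemma~\ref{isomorphism} is what feeds the RKHS information into the picture, after which everything is routine bookkeeping.
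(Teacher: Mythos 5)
Your argument is correct, but it is a genuinely different route from the paper's. The paper proves the nontrivial inclusion $\overline{\mathcal{H}_1}^{q}\subset\overline{\text{Range}(L_q|_{\overline{\mathcal{H}_1}^{q}})}^{q}$ by an explicit two-step approximation: given $f_1\in\overline{\mathcal{H}_1}^{q}$, it picks $f_2\in\mathcal{H}_1$ close to $f_1$, writes $f_2=L_q^{1/2}g_1$ via Lemma \ref{isomorphism}, approximates $g_1$ by $g_2\in\mathcal{H}_1$, writes $g_2=L_q^{1/2}h_1$, and then bounds $\|L_qh_1-f_1\|_{L^2(q_X)}\le 2\epsilon$ using the embedding $\|\cdot\|_{L^2(q_X)}\le\kappa\|\cdot\|_{\mathcal{H}}$; the reverse inclusion comes, as in your last step's trivial direction, from $\text{Range}(L_q)\subset\text{Range}(L_q^{1/2})=\mathcal{H}_1$. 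You instead invoke the spectral identity $\overline{\text{Range}(A)}=(\ker A)^{\perp}$ for bounded self-adjoint $A$ on $L^2(q_X)$, the observation $\ker L_q^{1/2}=\ker L_q$, and an orthogonal decomposition along $\ker L_q$ to get $\text{Range}(L_q)=\text{Range}(L_q|_{\overline{\mathcal{H}_1}^{q}})$ exactly (not merely up to closure). All the ingredients you use are available in the paper's setting ($L_q$ is compact, positive, self-adjoint on $L^2(q_X)$, and $\text{Range}(L_q^{1/2})=\mathcal{H}_1$ is recorded right after Lemma \ref{isomorphism}), and the kernel identifications are standard, so the proof goes through; as you note, one must consistently identify $\mathcal{H}_1$ with its image in $L^2(q_X)$, which the paper does implicitly as well. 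What each approach buys: yours is shorter, avoids the $\kappa$-dependent $2\epsilon$ bookkeeping, and yields the stronger set equality of ranges before taking closures; the paper's approximation template is more elementary (only Lemma \ref{isomorphism} plus Cauchy--Schwarz) and, importantly, transfers verbatim to Proposition \ref{ease2}, where the closure is taken in the $\mathcal{H}$-norm and the $L^2(q_X)$ range/kernel duality you rely on is not directly applicable.
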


\begin{proof}
    Take any $f_1\in \overline{\mathcal{H}_1}^{q}$. For any $\epsilon>0$, there exists $f_2\in \mathcal{H}_1$ such that $\|f_1-f_2\|_{L^2(q_X)}\le \epsilon$. It follows from Lemma \ref{isomorphism} that there exists $g_1\in \overline{\mathcal{H}_1}^{q}$ such that $L_q^{\frac{1}{2}} g_1=f_2$. There exists $g_2\in \mathcal{H}_1$ such that $\|g_1-g_2\|_{L^2(q_X)}\le \frac{\epsilon}{\kappa}$. Again, it follows from Lemma \ref{isomorphism} that there exists $h_1\in \overline{\mathcal{H}_1}^{q}$ such that $L_q^{\frac{1}{2}} h_1=g_2$. Then we have
    $$\|L_q h_1-f_2\|_{L^2(q_X)}\le \kappa\|L_q h_1-f_2\|_{\mathcal{H}}=\kappa\|L_q^{\frac{1}{2}} g_2-L_q^{\frac{1}{2}}g_1\|_{\mathcal{H}}=\kappa\|g_2-g_1\|_{L^2(q_X)}\le \epsilon$$
    and
    $$\|L_q h_1-f_1\|_{L^2(q_X)}\le \|L_q h_1-f_2\|_{L^2(q_X)}+ \|f_2-f_1\|_{L^2(q_X)}\le 2\epsilon.$$
    This implies that
    $f_1\in \overline{\text{Range}(L_q|_{\overline{\mathcal{H}_1}^{q}})}^{q}$ so
    $$\overline{\text{Range}(L_q|_{\overline{\mathcal{H}_1}^{q}})}^{q}\supset \overline{\mathcal{H}_1}^{q}.$$
    On the other hand, Lemma \ref{isomorphism} indicates that
    $$\text{Range}(L_q|_{\overline{\mathcal{H}_1}^{q}})\subset \text{Range}(L_q) \subset \text{Range}(L_q^{\frac{1}{2}})= \mathcal{H}_1.$$
    Hence we have
    $$\overline{\text{Range}(L_q)}^{q}=\overline{\text{Range}(L_q|_{\overline{\mathcal{H}_1}^{q}})}^{q}= \overline{\mathcal{H}_1}^{q}. $$
\end{proof}

Denote $\overline{\text{Range}(L_q)}^{\mathcal{H}}$ the closure of $\text{Range}(L_q)$ in $\mathcal{H}$ with respect to the norm of $\mathcal{H}$.
We have the following similar proposition in terms of the norm in $\mathcal{H}$.
\begin{proposition}\label{ease2}
	The range of $L_q$ satisfies
    $$\overline{\text{Range}(L_q)}^{\mathcal{H}}=\overline{\text{Range}(L_q|_{\overline{\mathcal{H}_1}^{q}})}^{\mathcal{H}} = \mathcal{H}_1. $$
\end{proposition}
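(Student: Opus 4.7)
The plan is to mirror the proof of Proposition \ref{ease}, but switching the ambient metric from $L^2(q_X)$ to the RKHS norm $\|\cdot\|_{\mathcal{H}}$, while invoking Lemma \ref{isomorphism} twice to shuttle between the two spaces via the isometric isomorphism $L_q^{1/2}:\overline{\mathcal{H}_1}^q\to\mathcal{H}_1$.

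For the ``$\subset$'' direction, I would first write $L_q=L_q^{1/2}\circ L_q^{1/2}$, so that $\mathrm{Range}(L_q)\subset\mathrm{Range}(L_q^{1/2})=\mathcal{H}_1$ by Lemma \ref{isomorphism}. Since $\mathcal{H}_1$ is a closed subspace of $\mathcal{H}$ (as the orthogonal complement of $\mathcal{H}_0$), its closure in $\mathcal{H}$ is itself, so $\overline{\mathrm{Range}(L_q)}^{\mathcal{H}}\subset\mathcal{H}_1$, and similarly for $\overline{\mathrm{Range}(L_q|_{\overline{\mathcal{H}_1}^{q}})}^{\mathcal{H}}$, which is contained in the former.

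For the ``$\supset$'' direction, fix $f\in\mathcal{H}_1$. By Lemma \ref{isomorphism} (surjectivity of $L_q^{1/2}$ onto $\mathcal{H}_1$), there exists $g\in\overline{\mathcal{H}_1}^{q}$ with $L_q^{1/2}g=f$. Since $\mathcal{H}_1$ is dense in $\overline{\mathcal{H}_1}^{q}$ in the $L^2(q_X)$ norm (by definition of the closure), pick $g_n\in\mathcal{H}_1$ with $\|g_n-g\|_{L^2(q_X)}\to 0$. Now apply Lemma \ref{isomorphism} a second time: for each $g_n\in\mathcal{H}_1$ there exists $h_n\in\overline{\mathcal{H}_1}^{q}$ with $L_q^{1/2}h_n=g_n$. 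Then
\[
L_q h_n=L_q^{1/2}(L_q^{1/2}h_n)=L_q^{1/2}g_n,
\]
and by the isometry property of $L_q^{1/2}$ we get
\[
\|L_q h_n-f\|_{\mathcal{H}}=\|L_q^{1/2}g_n-L_q^{1/2}g\|_{\mathcal{H}}=\|g_n-g\|_{L^2(q_X)}\to 0.
\]
Since $h_n\in\overline{\mathcal{H}_1}^{q}$, this shows $f\in\overline{\mathrm{Range}(L_q|_{\overline{\mathcal{H}_1}^{q}})}^{\mathcal{H}}$, completing the chain of inclusions.

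The only place that could cause trouble is the second application of Lemma \ref{isomorphism}: one has to be careful that the preimages $h_n$ actually live in $\overline{\mathcal{H}_1}^{q}$, which is indeed the domain on which $L_q^{1/2}$ is an isometric bijection onto $\mathcal{H}_1$; since $g_n\in\mathcal{H}_1$ (the codomain), surjectivity gives such $h_n$. All other steps are essentially the same as in Proposition \ref{ease}, with the observation that $L_q^{1/2}$ preserves the $L^2(q_X)$-to-$\mathcal{H}$ norm exactly, so no extra factor of $\kappa$ is required here.
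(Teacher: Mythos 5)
Your proposal is correct and follows essentially the same route as the paper's proof: both directions use Lemma \ref{isomorphism} exactly as the paper does (two applications of the isometric isomorphism $L_q^{1/2}$ for the inclusion $\mathcal{H}_1\subset\overline{\text{Range}(L_q|_{\overline{\mathcal{H}_1}^{q}})}^{\mathcal{H}}$, plus closedness of $\mathcal{H}_1$ in $\mathcal{H}$ for the reverse inclusion), with your sequence $g_n$ playing the role of the paper's $\varepsilon$-approximation $g_2$. Your closing remark that the isometry makes the factor $\kappa$ unnecessary here, in contrast to Proposition \ref{ease}, is also exactly the distinction the paper's two proofs exhibit.
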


\begin{proof}
    Take any $f_1\in \mathcal{H}_1$. It follows from Lemma \ref{isomorphism} that there exists $g_1\in \overline{\mathcal{H}_1}^{q}$ such that $L_q^{\frac{1}{2}} g_1=f_1$. For any $\epsilon>0$, there exists $g_2\in \mathcal{H}_1$ such that $\|g_1-g_2\|_{L^2(q_X)}\le \epsilon$. Again, it follows from Lemma \ref{isomorphism} that there exists $h_1\in \overline{\mathcal{H}_1}^{q}$ such that $L_q^{\frac{1}{2}} h_1=g_2$ and we have
    $$\|L_q h_1-f_1\|_{\mathcal{H}}= \|L_q h_1-L_q^{\frac{1}{2}}g_1\|_{\mathcal{H}}= \|L_q^{\frac{1}{2}}h_1-g_1\|_{L^2(q_X)}=\|g_2-g_1\|_{L^2(q_X)}\le\epsilon.$$
    This implies that
    $f_1\in \overline{\text{Range}(L_q|_{\overline{\mathcal{H}_1}^{q}})}^{\mathcal{H}}$ so
    $$\overline{\text{Range}(L_q|_{\overline{\mathcal{H}_1}^{q}})}^{\mathcal{H}} \supset \mathcal{H}_1. $$
    On the other hand, Lemma \ref{isomorphism} indicates that
    $$\text{Range}(L_q|_{\overline{\mathcal{H}_1}^{q}})\subset \text{Range}(L_q) \subset \text{Range}(L_q^{\frac{1}{2}})= \mathcal{H}_1.$$
    Note that $\mathcal{H}_1$ is a closed subspace in $\mathcal{H}$. Hence we have
    $$\overline{\text{Range}(L_q)}^{\mathcal{H}}=\overline{\text{Range}(L_q|_{\overline{\mathcal{H}_1}^{q}})}^{\mathcal{H}} = \mathcal{H}_1. $$
\end{proof}

Propositions \ref{ease} and \ref{ease2} give a theoretical explanation that if we have a result in the space $\text{Range}(L_q)$, we may anticipate that it is also (approximately) valid in $\overline{\mathcal{H}}^{q}$ or $\mathcal{H}_1$.   

\begin{proposition}\label{first1}
	We have
    $$\|s_m-f_\lambda\|_{\mathcal{H}}\le \frac{1}{\lambda} \|\Delta\|_{\mathcal{H}}$$
	where
	$$\Delta:=\frac{1}{m} \sum_{i=1}^{m} (z_i-f_\lambda(x_i))k_{x_i}-L_q(f_q-f_\lambda).$$
\end{proposition}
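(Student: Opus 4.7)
The plan is to use the operator forms of $s_m$ and $f_\lambda$ given in Lemmas \ref{RLS2} and \ref{RLS3}, derive an operator equation for the difference $s_m - f_\lambda$, and then bound the relevant resolvent using the functional calculus provided by Theorem \ref{CFC}.

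To begin, I would introduce the empirical integral operator $L_m := \frac{1}{m} S_x^T S_x$ on $\mathcal{H}$, which is the sample-based analogue of $L_q$ (with $L_m g = \frac{1}{m}\sum_i g(x_i) k_{x_i}$). Then Lemma \ref{RLS2} gives $(L_m + \lambda I)\, s_m = \frac{1}{m} S_x^T \hat z = \frac{1}{m}\sum_{i=1}^m z_i k_{x_i}$, while Lemma \ref{RLS3} rearranges to $(L_q + \lambda I) f_\lambda = L_q f_q$, i.e.\ $\lambda f_\lambda = L_q(f_q - f_\lambda)$.

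Next I would subtract and rewrite. Applying $(L_m + \lambda I)$ to $s_m - f_\lambda$ and substituting $\lambda f_\lambda = L_q(f_q - f_\lambda)$ for the $\lambda I$ contribution on $f_\lambda$, I get
\begin{equation*}
(L_m + \lambda I)(s_m - f_\lambda) \;=\; \frac{1}{m}\sum_{i=1}^m z_i k_{x_i} \;-\; L_m f_\lambda \;-\; L_q(f_q - f_\lambda).
\end{equation*}
Since $L_m f_\lambda = \frac{1}{m}\sum_{i=1}^m f_\lambda(x_i) k_{x_i}$, the first two terms combine into $\frac{1}{m}\sum_{i=1}^m (z_i - f_\lambda(x_i)) k_{x_i}$, so the right-hand side is exactly $\Delta$. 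Therefore $s_m - f_\lambda = (L_m + \lambda I)^{-1} \Delta$.

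Finally, I would bound the operator norm of $(L_m + \lambda I)^{-1}$. The operator $L_m = \frac{1}{m} S_x^T S_x$ is a (finite-rank) positive self-adjoint operator on $\mathcal{H}$, so $\sigma(L_m) \subset [0,\infty)$. Applying Theorem \ref{CFC} to the continuous function $h(t) = 1/(t+\lambda)$ gives $\|(L_m + \lambda I)^{-1}\| = \sup_{t \in \sigma(L_m)} \frac{1}{t+\lambda} \le \frac{1}{\lambda}$, whence $\|s_m - f_\lambda\|_{\mathcal{H}} \le \frac{1}{\lambda}\|\Delta\|_{\mathcal{H}}$, as claimed. The only slightly delicate step is recognizing that $\lambda f_\lambda = L_q(f_q - f_\lambda)$ lets us trade the $\lambda I$ term in $(L_m + \lambda I) f_\lambda$ for the $L_q$-expression appearing in the definition of $\Delta$; everything else is bookkeeping and a one-line spectral bound.
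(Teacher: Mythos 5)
Your proposal is correct and follows essentially the same route as the paper: both derive $s_m-f_\lambda=\left(\tfrac{1}{m}S_x^TS_x+\lambda I\right)^{-1}\Delta$ by combining Lemma \ref{RLS2} with the identity $\lambda f_\lambda=L_q(f_q-f_\lambda)$ from Lemma \ref{RLS3}, and then bound the resolvent by $1/\lambda$ via Theorem \ref{CFC}. No gaps; the argument matches the paper's proof step for step.
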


\begin{proof}
	By definition,
	$$s_m-f_\lambda= \left(\frac{1}{m} S^T_x S_x +\lambda I\right)^{-1} \left(\frac{1}{m} S^T_x(z)- \frac{1}{m} S^T_x S_x f_\lambda-\lambda f_\lambda\right).$$
	Direct computation leads to
	$$\frac{1}{m} S^T_x(z)- \frac{1}{m} S^T_x S_x f_\lambda=\frac{1}{m} \sum_{i=1}^{m} (z_i-f_\lambda(x_i))k_{x_i},\ \text{and} \  \lambda f_\lambda=L_q(f_q-f_\lambda)$$
	so
	$$s_m-f_\lambda= \left(\frac{1}{m} S^T_x S_x +\lambda I\right)^{-1} \Delta.$$
	View $S^T_x S_x$ as a positive self-adjoint operator on $\mathcal{H}$. By Theorem \ref{CFC}, we have
	$$\left\|\left(\frac{1}{m} S^T_x S_x +\lambda I\right)^{-1}\right\|_{\mathcal{H}\to \mathcal{H}}\le \frac{1}{\lambda}.$$
	Hence we obtain
	$$\|s_m-f_\lambda\|_{\mathcal{H}}\le \frac{1}{\lambda} \|\Delta\|_{\mathcal{H}}$$
	as desired.
\end{proof}

Next we have the following:
\begin{proposition}\label{first2}
We have
	$$\mathbb{E}_q[\|\Delta\|^2_{\mathcal{H}}]\le
	\frac{1}{m} \kappa^2 (\nu_X((f_q-f_\lambda)^2)+M_0). $$
\end{proposition}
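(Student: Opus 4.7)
The plan is to recognize $\Delta$ as an average of i.i.d.\ centered random elements of $\mathcal{H}$ and then bound its second moment by the second moment of a single summand. Define the $\mathcal{H}$-valued random variable $\xi := (z - f_\lambda(x))k_x$, so that $\Delta = \frac{1}{m}\sum_{i=1}^m \xi_i - L_q(f_q - f_\lambda)$ with $(\xi_i)$ i.i.d.\ copies of $\xi$ under $q$. First I would identify $\mathbb{E}_q[\xi] = L_q(f_q - f_\lambda)$: splitting $z - f_\lambda(x) = (z - f_q(x)) + (f_q(x) - f_\lambda(x))$ and conditioning on $x$ kills the first piece because $\mathbb{E}_q[z \mid x] = f_q(x)$, while the reproducing property gives $\langle \mathbb{E}_q[(f_q(x) - f_\lambda(x))k_x], k_t\rangle_{\mathcal{H}} = \mathbb{E}_q[(f_q(x) - f_\lambda(x))k(x,t)] = (L_q(f_q - f_\lambda))(t)$ for every $t$, identifying the Bochner integral with $L_q(f_q - f_\lambda)$ as elements of $\mathcal{H}$.

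With this identification, $\Delta$ is the average of the centered i.i.d.\ variables $\xi_i - \mathbb{E}_q[\xi_i]$, so by the standard $\mathcal{H}$-valued variance identity
\[
\mathbb{E}_q[\|\Delta\|^2_{\mathcal{H}}] = \frac{1}{m}\,\mathbb{E}_q\bigl[\|\xi_1 - \mathbb{E}_q[\xi_1]\|^2_{\mathcal{H}}\bigr] \le \frac{1}{m}\,\mathbb{E}_q[\|\xi_1\|^2_{\mathcal{H}}].
\]
I would then compute $\|\xi_1\|^2_{\mathcal{H}} = (z_1 - f_\lambda(x_1))^2 k(x_1,x_1) \le \kappa^2 (z_1 - f_\lambda(x_1))^2$ using the uniform bound $k(x,x)\le \kappa^2$ from Assumption~\ref{A0}.

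Finally, I would expand $(z - f_\lambda(x))^2 = (z - f_q(x))^2 + 2(z - f_q(x))(f_q(x) - f_\lambda(x)) + (f_q(x) - f_\lambda(x))^2$ and take expectation under $q$. The cross term vanishes after conditioning on $x$, since $\mathbb{E}_q[z - f_q(x)\mid x] = 0$, leaving $\mathbb{E}_q[(z - f_\lambda(x))^2] = M_0 + \nu_X((f_q - f_\lambda)^2)$ by the definitions in Assumption~\ref{A0}. Chaining these inequalities yields exactly the claimed bound.

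The only subtlety, rather than a genuine obstacle, is the rigorous justification that $\mathbb{E}_q[(f_q(x) - f_\lambda(x))k_x]$ exists as a Bochner integral in $\mathcal{H}$ and coincides with $L_q(f_q - f_\lambda)$; this follows from $\|(f_q - f_\lambda)k_x\|_{\mathcal{H}} \le \kappa |f_q(x) - f_\lambda(x)|$ together with $f_q, f_\lambda \in L^2(q_X)$ (noting $f_\lambda \in \mathcal{H}$ by Lemma~\ref{RLS3} and \eqref{equ:metric}), and then verified coordinate-wise against $k_t$ by the reproducing property as sketched above.
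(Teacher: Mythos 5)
Your proof is correct and follows essentially the same route as the paper's: both identify $\mathbb{E}_q[(z-f_\lambda(x))k_x]=L_q(f_q-f_\lambda)$ so that $\Delta$ is a centered i.i.d.\ average in $\mathcal{H}$, drop the nonpositive correction term (you via $\mathbb{E}\|\xi-\mathbb{E}\xi\|^2_{\mathcal{H}}\le\mathbb{E}\|\xi\|^2_{\mathcal{H}}$, the paper by discarding $-\|L_q(f_q-f_\lambda)\|^2_{\mathcal{H}}$ after explicit expansion), and then bound $k(x,x)\le\kappa^2$ and split $(z-f_\lambda(x))^2$ via conditioning to get $\nu_X((f_q-f_\lambda)^2)+M_0$. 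The only difference is that you invoke the Hilbert-space variance identity where the paper verifies the cross-term cancellation by hand, which is a matter of packaging rather than substance.
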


\begin{proof}
	Consider
    $$\|\Delta\|^2_{\mathcal{H}}=\langle\frac{1}{m} \sum_{i=1}^{m} ((z_i-f_\lambda(x_i))k_{x_i} -L_q(f_q-f_\lambda)), \frac{1}{m} \sum_{i=1}^{m} ((z_i-f_\lambda(x_i))k_{x_i}-L_q(f_q-f_\lambda))\rangle.$$
	Direct computation shows that, for the first term,
	$$\langle (z_i-f_\lambda(x_i))k_{x_i}, (z_j-f_\lambda(x_j))k_{x_j}\rangle= (z_i-f_\lambda(x_i))(z_j-f_\lambda(x_j))k(x_i,x_j)$$
	For $i\ne j$, we have
	\begin{align*}
&\mathbb{E}_q \langle (z_i-f_\lambda(x_i))k_{x_i}, (z_j-f_\lambda(x_j))k_{x_j}\rangle\\
=& \int_{\Omega}\int_{\Omega} (f_q(x_i)-f_\lambda(x_i)) (f_q(x_j)-f_\lambda(x_j)) k(x_i,x_j)q(x_i)q(x_j)dx_idx_j
	\end{align*}
	For $i= j$, we have
	$$\mathbb{E}_q \langle (z_i-f_\lambda(x_i))k_{x_i}, (z_j-f_\lambda(x_j))k_{x_j}\rangle= \mathbb{E}_q [(z_i-f_\lambda(x_i))^2  k(x_i,x_i)]$$
	For the cross item, we have
	$$\langle (z_i-f_\lambda(x_i))k_{x_i}, L_q(f_q-f_\lambda)\rangle= \int_{\Omega}k(x_i,x_j)(z_i-f_\lambda(x_i))(f_q(x_j)-f_\lambda(x_j))q(x_j)dx_j,$$
	so
	\begin{align*}
	&\mathbb{E}_q \langle (z_i-f_\lambda(x_i))k_{x_i}, L_q(f_q-f_\lambda)\rangle \\
	=& \int_{\Omega}\int_{\Omega} (f_q(x_i)-f_\lambda(x_i)) (f_q(x_j)-f_\lambda(x_j)) k(x_i,x_j)q(x_i)q(x_j)dx_idx_j.    
	\end{align*}
	For the last item, we have
	$$\langle L_q(f_q-f_\lambda), L_q(f_q-f_\lambda)\rangle= \int_{\Omega}\int_{\Omega} (f_q(x_i)-f_\lambda(x_i)) (f_q(x_j)-f_\lambda(x_j)) k(x_i,x_j)q(x_i)q(x_j)dx_idx_j,$$
	so
	\begin{align*}
	&\mathbb{E}_q \langle L_q(f_q-f_\lambda), L_q(f_q-f_\lambda)\rangle \\ 
	=& \int_{\Omega}\int_{\Omega} (f_q(x_i)-f_\lambda(x_i)) (f_q(x_j)-f_\lambda(x_j)) k(x_i,x_j)q(x_i)q(x_j)dx_idx_j.
	\end{align*}
	We observe that for $i\ne j$,
	$$\mathbb{E}_q \langle (z_i-f_\lambda(x_i))k_{x_i} -L_q(f_q-f_\lambda),  (z_j-f_\lambda(x_j))k_{x_j}-L_q(f_q-f_\lambda)\rangle \rangle=0.$$
	For $i= j$, we have
	\begin{align*}
	&\mathbb{E}_q \langle (z_i-f_\lambda(x_i))k_{x_i} -L_q(f_q-f_\lambda),  (z_j-f_\lambda(x_j))k_{x_j}-L_q(f_q-f_\lambda)\rangle \rangle\\
	=& \mathbb{E}_q [(z_i-f_\lambda(x_i))^2  k(x_i,x_i)] - \mathbb{E}_q \langle L_q(f_q-f_\lambda), L_q(f_q-f_\lambda)\rangle\\
	\le& \mathbb{E}_q [(z_i-f_\lambda(x_i))^2  k(x_i,x_i)]\\
	\le& \kappa^2 \mathbb{E}_q [(z_i-f_\lambda(x_i))^2]\\
	=& \kappa^2 (\mathbb{E}_q [(f_\lambda(x_i)-f_q(x_i))^2]+ \mathbb{E}_q [(z_i-f_q(x_i))^2])\\
	=& \kappa^2 (\nu_X((f_q-f_\lambda)^2)+M_0).
	\end{align*}
	Therefore
	$$\mathbb{E}_q[\|\Delta\|^2_{\mathcal{H}}]\le
	\frac{1}{m} \kappa^2 (\nu_X((f_q-f_\lambda)^2)+M_0).$$
\end{proof}

With this, we have the following estimate:
\begin{corollary}\label{cor1}
We have	
$$\mathbb{E}_q[\|s_m-f_\lambda\|^2_{\mathcal{H}}] \le \frac{\kappa^2 (\nu_X((f_q-f_\lambda)^2)+M_0)}{\lambda^2 m},$$
$$\mathbb{E}_q[\nu_X((s_m-f_\lambda)^2)]\le \frac{\kappa^4 (\nu_X((f_q-f_\lambda)^2)+M_0)}{\lambda^2 m}.$$
\end{corollary}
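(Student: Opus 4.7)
The proof plan is to chain together Propositions \ref{first1} and \ref{first2} and then invoke the embedding inequality \eqref{equ:metric} to transfer the $\mathcal{H}$-norm bound into an $L^2(q_X)$-norm bound.

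For the first inequality, I would square the deterministic bound in Proposition \ref{first1} to obtain $\|s_m - f_\lambda\|_{\mathcal{H}}^2 \le \lambda^{-2}\|\Delta\|_{\mathcal{H}}^2$ pointwise, then take expectation under $q$ and apply Proposition \ref{first2} directly, which yields
$$\mathbb{E}_q[\|s_m-f_\lambda\|^2_{\mathcal{H}}] \;\le\; \frac{1}{\lambda^2}\,\mathbb{E}_q[\|\Delta\|_{\mathcal{H}}^2] \;\le\; \frac{\kappa^2 (\nu_X((f_q-f_\lambda)^2)+M_0)}{\lambda^2 m}.$$

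For the second inequality, note that $s_m - f_\lambda \in \mathcal{H}$, so the reproducing-kernel embedding \eqref{equ:metric} gives the pointwise (in the random samples) bound $\nu_X((s_m-f_\lambda)^2) = \|s_m - f_\lambda\|_{L^2(q_X)}^2 \le \kappa^2 \|s_m - f_\lambda\|_{\mathcal{H}}^2$. Taking expectation over the training data and substituting the first inequality yields the extra factor of $\kappa^2$, producing $\kappa^4$ in the numerator as claimed.

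There is no real obstacle here: the result is a mechanical corollary that packages the two preceding propositions together with \eqref{equ:metric}. The only thing worth being careful about is that the bound from Proposition \ref{first1} holds almost surely (it is a deterministic operator-norm estimate conditional on the sample), which is what legitimates squaring and then taking expectation; and that \eqref{equ:metric} is applied sample-wise to the random function $s_m - f_\lambda$ before the outer expectation, which is valid because for each realization of $(x_i,y_i)$ we have $s_m - f_\lambda \in \mathcal{H}$.
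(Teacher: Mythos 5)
Your proposal is correct and follows exactly the paper's own argument: square the deterministic bound of Proposition \ref{first1}, take expectation and invoke Proposition \ref{first2} for the $\mathcal{H}$-norm bound, then apply the embedding inequality \eqref{equ:metric} sample-wise to obtain the $L^2(q_X)$ bound with the extra $\kappa^2$ factor. No gaps; nothing further is needed.
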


\begin{proof}
	Combining Proposition \ref{first1} and Proposition \ref{first2}, we obtain
	$$\mathbb{E}_q[\|s_m-f_\lambda\|^2_{\mathcal{H}}] \le \frac{\kappa^2 (\nu_X((f_q-f_\lambda)^2)+M_0)}{\lambda^2 m},$$
	and we also note that
	$$\nu_X((s_m-f_\lambda)^2)\le \kappa^2 \|s_m-f_\lambda\|^2_{\mathcal{H}}.$$
\end{proof}

Corollary \ref{cor1} shows that to estimate $s_m-f_q=(s_m-f_\lambda)+(f_\lambda-f_q)$, we only need to handle $f_\lambda-f_q$.
Finally, putting everything together, we establish the following two corollaries that will be frequently used in our analysis:
\begin{corollary}\label{main_lemma}
	Suppose that $L_q^{-r} f_q \in L^2(q_X)$ where $0\le r\le 1$. Then 
	$$\mathbb{E}_q[\nu_X((f_q - s_m)^2)] \le \left(\frac{2 \kappa^4}{\lambda^{2-2r} m}+2\lambda^{2r}\right) \nu_X(( L_q^{-r} f_q)^2)+\frac{2 \kappa^4 M_0}{\lambda^2 m}. $$
	In particular, taking $\lambda = m^{-\frac{1}{2}}$, we have
	$$\mathbb{E}_q[\nu_X((f_q - s_m)^2)] \le C_\kappa m^{-r} \nu_X(( L_q^{-r} f_q)^2)+ 2 \kappa^4 M_0 $$
	where $C_\kappa=2\kappa^4+2$ only depends on $\kappa$.\\
	Taking $\lambda = m^{-\frac{1}{2+2r}}$, we have
	$$\mathbb{E}_q[\nu_X((f_q - s_m)^2)] \le C_1 m^{-\frac{r}{1+r}} (\nu_X(( L_q^{-r} f_q)^2)+M_0)$$
	where $C_1$ only depends on $\kappa$.
\end{corollary}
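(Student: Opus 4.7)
The plan is to decompose the total error via the bias--variance-type splitting that was already set up in \eqref{decomposition},
\[
f_q - s_m \;=\; (f_q - f_\lambda) + (f_\lambda - s_m),
\]
apply the pointwise inequality $(a+b)^2 \le 2a^2 + 2b^2$, and integrate against $q_X$ to get
\[
\nu_X((f_q - s_m)^2) \;\le\; 2\,\nu_X((f_q - f_\lambda)^2) + 2\,\nu_X((f_\lambda - s_m)^2).
\]
Then I would take $\mathbb{E}_q$ on both sides and bound the two resulting terms separately using results already in hand.

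For the deterministic approximation-error term $\nu_X((f_q - f_\lambda)^2) = \|f_q - f_\lambda\|_{L^2(q_X)}^2$, the assumption $L_q^{-r}f_q \in L^2(q_X)$ lets me invoke Proposition~\ref{second}(1), giving
\[
\nu_X((f_q - f_\lambda)^2) \;\le\; \lambda^{2r}\,\nu_X((L_q^{-r}f_q)^2).
\]
For the stochastic estimation-error term I would invoke the second inequality of Corollary~\ref{cor1},
\[
\mathbb{E}_q[\nu_X((f_\lambda - s_m)^2)] \;\le\; \frac{\kappa^4\bigl(\nu_X((f_q-f_\lambda)^2) + M_0\bigr)}{\lambda^2 m},
\]
and then substitute the approximation-error bound on the right-hand side to get
\[
\mathbb{E}_q[\nu_X((f_\lambda - s_m)^2)] \;\le\; \frac{\kappa^4}{\lambda^{2-2r}\,m}\,\nu_X((L_q^{-r}f_q)^2) + \frac{\kappa^4 M_0}{\lambda^2 m}.
\]

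Summing the two pieces with the factor of $2$ yields exactly the first displayed inequality of the corollary. The two specializations are routine: with $\lambda = m^{-1/2}$, the coefficient $\frac{2\kappa^4}{\lambda^{2-2r}m} = 2\kappa^4 m^{-r}$ and $2\lambda^{2r} = 2m^{-r}$ combine into $(2\kappa^4+2)m^{-r}$, while $\frac{2\kappa^4 M_0}{\lambda^2 m} = 2\kappa^4 M_0$; with $\lambda = m^{-1/(2+2r)}$ a short exponent computation shows $\frac{1}{\lambda^{2-2r}m}, \lambda^{2r},$ and $\frac{1}{\lambda^2 m}$ are all $O(m^{-r/(1+r)})$, giving the second bound after collecting constants into $C_1$. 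There is no real obstacle here: the entire proof is an arithmetic assembly of Proposition~\ref{second}(1) and Corollary~\ref{cor1}, so the only thing to watch is the bookkeeping of the exponents of $\lambda$ and $m$ in the two optimization choices.
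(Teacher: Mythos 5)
Your proposal is correct and follows essentially the same route as the paper's own proof: the splitting $f_q-s_m=(f_q-f_\lambda)+(f_\lambda-s_m)$ from \eqref{decomposition}, the $2(a^2+b^2)$ bound, Proposition \ref{second}(1) for the approximation error, Corollary \ref{cor1} for the estimation error, and then the exponent bookkeeping for the two choices of $\lambda$. The arithmetic in both specializations checks out, so nothing is missing.
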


\begin{proof}
	Proposition \ref{second} shows that
    if $L_q^{-r} f_q \in L^2(q_X)$, then
    $$\nu_X((f_\lambda-f_q)^2)\le \lambda^{2r} \nu_X(( L_q^{-r} f_q)^2).$$
    We note that
	$$\nu_X((f_q - s_m)^2)\le 2(\nu_X((f_q - f_\lambda)^2)+\nu_X((f_\lambda - s_m)^2)).$$
	So taking the expectation and using Corollary \ref{cor1}, we have
	\begin{align*}
	\mathbb{E}_q[\nu_X((f_q - s_m)^2)] &\le \left(\frac{2 \kappa^4}{\lambda^2 m}+2\right) \nu_X((f_q-f_\lambda)^2)+\frac{2 \kappa^4 M_0}{\lambda^2 m}\\
	&\le \left(\frac{2 \kappa^4}{\lambda^{2-2r} m}+2\lambda^{2r}\right) \nu_X(( L_q^{-r} f_q)^2)+\frac{2 \kappa^4 M_0}{\lambda^2 m}.
	\end{align*}
    \end{proof}
 
 \begin{corollary}\label{main_lemma2}
	Suppose that $L_q^{-r} f_q \in L^2(q_X)$ where $\frac{1}{2}\le r\le 1$. Then
	$$\mathbb{E}_q[\|f_q-s_m\|^2_{\mathcal{H}}] \le \left(\frac{2 \kappa^2}{\lambda^{2-2r} m}+2\lambda^{2r-1}\right) \nu_X(( L_q^{-r} f_q)^2)+\frac{2 \kappa^2 M_0}{\lambda^2 m}.$$
	In particular, taking $\lambda = m^{-\frac{1}{2}}$, we have
	$$\mathbb{E}_q[\|f_q-s_m\|^2_{\mathcal{H}}]  \le C_\kappa m^{-r+\frac{1}{2}} \nu_X(( L_q^{-r} f_q)^2)+ 2 \kappa^2 M_0 $$
	where $C_\kappa=2\kappa^2+2$ only depends on $\kappa$.\\
	Taking $\lambda = m^{-\frac{1}{1+2r}}$, we have
	$$\mathbb{E}_q[\|f_q-s_m\|^2_{\mathcal{H}}]  \le C_1 m^{-\frac{2r-1}{2r+1}} (\nu_X(( L_q^{-r} f_q)^2)+M_0)$$
	where $C_1$ only depends on $\kappa$.
\end{corollary}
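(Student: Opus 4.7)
The plan is to decompose the error $f_q - s_m = (f_q - f_\lambda) + (f_\lambda - s_m)$ and bound each piece separately in the $\mathcal{H}$-norm, mirroring exactly the structure used to prove Corollary \ref{main_lemma} but measuring in $\|\cdot\|_{\mathcal{H}}$ instead of $\|\cdot\|_{L^2(q_X)}$. First I would apply $\|a+b\|^2_{\mathcal{H}} \le 2\|a\|^2_{\mathcal{H}} + 2\|b\|^2_{\mathcal{H}}$, which splits the target quantity into an approximation-error contribution and a sample-error contribution.

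For the approximation piece, I would invoke Proposition \ref{second}(2) directly, which is precisely the statement that requires $\tfrac{1}{2}\le r \le 1$, to obtain $\|f_\lambda - f_q\|_{\mathcal{H}}^2 \le \lambda^{2r-1}\nu_X((L_q^{-r}f_q)^2)$. For the sample-error piece, I would apply Corollary \ref{cor1} to get $\mathbb{E}_q[\|s_m-f_\lambda\|^2_{\mathcal{H}}] \le \kappa^2(\nu_X((f_q-f_\lambda)^2)+M_0)/(\lambda^2 m)$, and then control the leftover $\nu_X((f_q-f_\lambda)^2)$ inside it by Proposition \ref{second}(1), which gives $\nu_X((f_q-f_\lambda)^2)\le \lambda^{2r}\nu_X((L_q^{-r}f_q)^2)$. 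Assembling these three bounds yields the master inequality
$$\mathbb{E}_q[\|f_q-s_m\|^2_{\mathcal{H}}] \le \left(\frac{2\kappa^2}{\lambda^{2-2r}m} + 2\lambda^{2r-1}\right)\nu_X((L_q^{-r}f_q)^2) + \frac{2\kappa^2 M_0}{\lambda^2 m}.$$

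For the two concrete rate statements, I would just substitute the given $\lambda$'s and collect terms. With $\lambda=m^{-1/2}$ one has $\lambda^{2r-1}=m^{-r+1/2}$, $1/(\lambda^{2-2r}m)=m^{-r}\le m^{-r+1/2}$ since $m\ge 1$, and $1/(\lambda^2 m)=1$, so the bound collapses to $C_\kappa m^{-r+1/2}\nu_X((L_q^{-r}f_q)^2)+2\kappa^2 M_0$ with $C_\kappa=2\kappa^2+2$. With $\lambda=m^{-1/(1+2r)}$ the choice is the one that balances the approximation term $\lambda^{2r-1}$ against the noise term $1/(\lambda^2 m)$, giving $\lambda^{2r-1}=m^{-(2r-1)/(2r+1)}$ and $1/(\lambda^2 m)=m^{-(2r-1)/(2r+1)}$; the remaining term $1/(\lambda^{2-2r}m)=m^{(1-4r)/(1+2r)}$ is dominated by $m^{-(2r-1)/(2r+1)}$ for every $r\ge 1/2$ (a direct comparison of exponents), so everything aggregates into $C_1 m^{-(2r-1)/(2r+1)}(\nu_X((L_q^{-r}f_q)^2)+M_0)$.

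I do not expect any real obstacle: the entire corollary is a mechanical combination of the quadratic triangle inequality with Proposition \ref{second} and Corollary \ref{cor1}. The only point requiring care is that Proposition \ref{second}(2) is available only in the regime $r\ge 1/2$, which is why the hypothesis of this corollary restricts to that range, and routine but careful exponent arithmetic is needed to verify that the stated choices of $\lambda$ actually produce the advertised rates.
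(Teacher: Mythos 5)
Your proposal is correct and matches the paper's proof essentially line for line: the same decomposition $\|f_q-s_m\|^2_{\mathcal{H}}\le 2\|f_\lambda-s_m\|^2_{\mathcal{H}}+2\|f_\lambda-f_q\|^2_{\mathcal{H}}$, with Proposition \ref{second}(2) for the approximation piece, Corollary \ref{cor1} combined with Proposition \ref{second}(1) for the sample-error piece, and the same substitution of $\lambda$ to obtain the two rates. The exponent arithmetic you carried out (including the domination of $\lambda^{-(2-2r)}m^{-1}$ by the other terms) is exactly what the paper leaves implicit, so there is nothing to correct.
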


\begin{proof}
    Proposition \ref{second} shows that
    if $L_q^{-r} f_q \in L^2(q_X)$, then
    $$\|f_\lambda-f_q\|^2_{\mathcal{H}}\le \lambda^{2r-1} \nu_X(( L_q^{-r} f_q)^2),$$
    and
    $$\nu_X((f_\lambda-f_q)^2)\le \lambda^{2r} \nu_X(( L_q^{-r} f_q)^2).$$
    We note that
	$$\|s_m-f_q\|^2_{\mathcal{H}}\le 2(\|f_\lambda-s_m\|^2_{\mathcal{H}}+ \|f_\lambda-f_q\|^2_{\mathcal{H}}).$$
	So taking the expectation and using Corollary \ref{cor1}, we have
	\begin{align*}
	\mathbb{E}_q[\|f_q-s_m\|^2_{\mathcal{H}}] &\le \frac{2 \kappa^2}{\lambda^2 m} \nu_X((f_\lambda-f_q)^2) +\frac{2 \kappa^2 M_0}{\lambda^2 m}+2\|f_\lambda-f_q\|^2_{\mathcal{H}}\\
	&\le \left(\frac{2 \kappa^2}{\lambda^{2-2r} m}+2\lambda^{2r-1}\right) \nu_X(( L_q^{-r} f_q)^2)+\frac{2 \kappa^2 M_0}{\lambda^2 m}.
	\end{align*}
\end{proof}

Corollaries 20 and 21 show that $s_m$ computed through RLS approximates $f_q$ closely, measured by the expected $L^2$ norm under $q$ and by the expected distance in $\mathcal{H}$ respectively. Note that the latter metric is stronger than the former metric by (\ref{equ:metric}). The error bounds are related to the sample size $m$ and the regularization parameter $\lambda$. Corollary \ref{main_lemma} will be employed for CF and Corollary \ref{main_lemma2} for DRSK. We pinpoint that both corollaries are more refined and elaborate than the theory cited by \citet{CF1}. Accordingly, we will obtain better convergence results in this paper.

\section{Proofs of Theorems in Section \ref{sec:DRSK main}} \label{sec:analysis}

\subsection{Control Functionals} \label{sec:CF}
This section presents the properties of $\hat{\theta}_{CF}$. We first justify
the closed-form formula of $f_m(x,y)$ in Algorithm \ref{CFalgorithm} by applying the theory from Section \ref{sec:RLS}. Then we prove the theorems in Section \ref{sec:CF0}.

We check that the setting here accords with the conditions in Section \ref{sec:RLS}. Recall that the set of samples in Section \ref{sec:RLS} corresponds to $\{(x_j,z_j=f(x_j,y_j))\}_{j=1,\cdots,m}$ in $D_0$, and the $f_q(x)$ there corresponds to $\bar{f}(x)$ here under Assumption \ref{CSA}. Let the generic $\mathcal{H}$ in Section \ref{sec:RLS} be the $\mathcal{H}_+$ in Section \ref{sec:PS}. It follows from $\sup_{x\in \Omega} k_0(x,x)<\infty$ specified in Section \ref{sec:Stein} and $k_+(x,x') = 1 + k_0(x,x')$ that
$$\kappa:=\sup_{x\in \Omega} \sqrt{k_+(x,x)}<\infty.$$
Besides, $M_0$ in Assumption \ref{A0} is exactly $M_0$ in Assumption \ref{noise1} since by the definition, we have
$$M_0=\mathbb{E}_q [(z-f_q(x))^2]=\mathbb{E}_q [(f(x,y)-\bar{f}(x))^2]=\mathbb{E}_q [\epsilon(x,y)^2]<\infty.$$

\begin{lemma}\label{H+RLS1}
	Let
	$$\hat{z}=(f(x_1,y_1),\cdots, f(x_m,y_m))^T,$$
	$$K_+=(k_+(x_i,x_j))_{m\times m},$$
	$$\hat{k}_+(x)=(k_+(x_1,x),\cdots, k_+(x_m,x))^T.$$
	Then the RLS solution is given as  $s_m(x)= \beta^T \hat{k}_+(x)$  where  $\beta =(K_+ +\lambda m I)^{-1} \hat{z}$. Moreover, $\mu_X(s_m)=\beta^T \mathbf{1}$.
\end{lemma}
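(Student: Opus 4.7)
The plan is to derive this lemma as a direct instantiation of the generic kernel ridge regression (KRR) theory from Section \ref{sec:RLS}, applied to the specific RKHS $\mathcal{H}_+$ with kernel $k_+$, followed by a short calculation that uses the kernelized Stein's identity to evaluate $\mu_X(s_m)$.

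First I would verify that the setting of $\mathcal{H}_+$ satisfies the hypotheses of the generic theory (Assumption \ref{A0}). The bound $\kappa = \sup_{x \in \Omega} \sqrt{k_+(x,x)} < \infty$ has already been noted in Section \ref{sec:Stein}, because $k_+ = 1 + k_0$ and $\sup_x k_0(x,x) < \infty$ by assumption. The other ingredients, $M_0 < \infty$ and $\bar{f} \in L^2(q_X)$, follow from Assumption \ref{noise1} and the assumption $\mathbb{E}_\pi[f(X,Y)^2] < \infty$ together with the covariate shift Assumption \ref{CSA}. With these in hand, I can apply Lemma \ref{RLS1} verbatim with $\mathcal{H} = \mathcal{H}_+$, $k = k_+$, and the samples $(x_j, f(x_j, y_j))$ for $j = 1, \dots, m$ from $D_0$. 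This yields immediately that the RLS minimizer has the stated closed form $s_m(x) = \beta^T \hat{k}_+(x)$ with $\beta = (K_+ + \lambda m I)^{-1} \hat{z}$.

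For the second claim, I would expand $s_m(x) = \sum_{i=1}^m \beta_i k_+(x_i, x) = \sum_{i=1}^m \beta_i \bigl(1 + k_0(x_i, x)\bigr)$, and then take expectation under $\pi_X$:
$$\mu_X(s_m) = \sum_{i=1}^m \beta_i \Bigl(1 + \mathbb{E}_{X \sim \pi_X}[k_0(x_i, X)]\Bigr).$$
The kernel $k_0$ is symmetric (being a reproducing kernel), so $\mathbb{E}_{\pi_X}[k_0(x_i, X)] = \mathbb{E}_{\pi_X}[k_0(X, x_i)]$, and by the kernelized Stein identity \eqref{equ:stein} this equals $0$ for every fixed $x_i \in \Omega$. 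Hence $\mu_X(s_m) = \sum_{i=1}^m \beta_i = \beta^T \mathbf{1}$, as claimed.

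There is essentially no obstacle here beyond bookkeeping: the content of the lemma is really the combination of the standard KRR representer-theorem computation (which is Lemma \ref{RLS1} already established in Section \ref{sec:RLS}) with the zero-mean property of $k_0$ under $\pi_X$ that was the whole motivation for building $\mathcal{H}_0$ via the Stein operator. The only minor subtlety worth flagging is to ensure the symmetry argument for $k_0$ so that Stein's identity in the form of \eqref{equ:stein} applies, and to note that expectation commutes with the finite sum since each $\beta_i$ is a deterministic function of $D_0$ once conditioned on the training data.
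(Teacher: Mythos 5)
Your proposal is correct and follows essentially the same route as the paper: the closed form is obtained by instantiating Lemma \ref{RLS1} with $\mathcal{H}=\mathcal{H}_+$, and the identity $\mu_X(s_m)=\beta^T\mathbf{1}$ follows from writing $k_+ = 1 + k_0$ and using the zero-mean property $\mu_X(k_0(x_i,\cdot))=0$ from the kernelized Stein identity. The extra checks you include (Assumption \ref{A0}, symmetry of $k_0$, conditioning on $D_0$) are harmless and mirror the remarks the paper makes just before the lemma.
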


\begin{proof}
The first part of the expression of $s_m$ is a direct consequence of Lemma \ref{RLS1}.
By the definition of two reproducing kernel Hilbert spaces,
   $$\hat{k}_+(x)=\hat{k}_0(x)+\mathbf{1}$$
    so
     $$\mu_X(s_m(x))= \beta^T \mu_X(\hat{k}_0(x))+\beta^T \mathbf{1}.$$
     Note that $\mu_X(k_0(x_i,\cdot))=0$ for any given $x_i$. Therefore we conclude that
     $$\mu_X(s_m)=\beta^T \mathbf{1}.$$
 \end{proof}
 
For a given underlying function $f$, a more precise notation is to write $s^f_m$ as the solution to the RLS problem, but we will simply use $s_m$ for short if no confusion arises. We observe a fact that $s_m$ is a linear combination of $\hat{z}$ and thus a linear functional of $f$, that is, $s_m^{f_1}+s_m^{f_2}=s_m^{f_1+f_2}$ for any two functions $f_1, f_2$. We will leverage this fact several times later in the paper. Moreover, these linear coefficients only depend on the RKHS $\mathcal{H}_0$, free of the function of interest $f$. 

We explain some connections between $\pi$, $q$ and $\mathcal{H}_+$ when constructing the CF estimator in the ``Biased" case. Note that the theoretical results on RLS that we developed in Section \ref{sec:RLS} does not require any connection between $\mathcal{H}_+$ (which can be a general RKHS) and $q$ (the underlying distribution of the samples). In CF, we specify the choice of $\mathcal{H}_+$ which is constructed from the original distribution $\pi_X$. Meanwhile, $s_m$ is learned from the data drawn from $q$ (note that $s_m$ only depends on $\mathcal{H}_+$ and the data). Therefore the formula for $\mu_X(s_m)$ in Lemma \ref{H+RLS1} is also valid in the ``Biased" case.


We first establish the following lemma for CF when the extra component $Y$ appears.

\begin{lemma} \label{split}
Suppose $g\in L^2(\pi_X)$ and $\mathbb{E}_\pi[g(X,Y)]=\theta$. For any constant $a$, we have 
$$\mathbb{E}_\pi[(\bar{g}(X)-a)^2]=\mathbb{E}_\pi[(g(X,Y)-\theta)^2]+(\theta-a)^2-\mathbb{E}_\pi[\epsilon_g(X,Y)^2]$$
where $\bar{g}(X)=\mathbb{E}_\pi[g(X,Y)|X]$ and $\epsilon_g(X,Y)=g(X,Y)-\bar{g}(X)$. In particular,
$$\mathbb{E}_\pi[(\bar{g}(X)-a)^2]+\mathbb{E}_\pi[\epsilon_g(X,Y)^2]\ge \mathbb{E}_\pi[(g(X,Y)-\theta)^2].$$
\end{lemma}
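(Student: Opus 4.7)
The plan is to exploit the orthogonal decomposition $g(X,Y)=\bar g(X)+\epsilon_g(X,Y)$ implied by conditional expectation, combined with a bias-variance split about $\theta$.

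First I would verify two orthogonality facts that follow directly from the definition of $\bar g$ as the conditional expectation of $g$ given $X$: $\mathbb{E}_\pi[\epsilon_g(X,Y)\mid X]=0$, hence $\mathbb{E}_\pi[\epsilon_g]=0$ and $\mathbb{E}_\pi[\bar g(X)]=\theta$, and $\mathbb{E}_\pi[h(X)\epsilon_g(X,Y)]=0$ for every $h\in L^2(\pi_X)$ by the tower property. In particular, taking $h(X)=\bar g(X)-\theta$ gives $\mathbb{E}_\pi[(\bar g(X)-\theta)\epsilon_g(X,Y)]=0$.

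Next I would expand $\mathbb{E}_\pi[(g(X,Y)-\theta)^2]$ using $g-\theta=(\bar g(X)-\theta)+\epsilon_g(X,Y)$. The cross term vanishes by the orthogonality above, yielding the Pythagorean identity
\[
\mathbb{E}_\pi[(g(X,Y)-\theta)^2]=\mathbb{E}_\pi[(\bar g(X)-\theta)^2]+\mathbb{E}_\pi[\epsilon_g(X,Y)^2].
\]
Independently, for any constant $a$ I would expand $\mathbb{E}_\pi[(\bar g(X)-a)^2]$ about $\theta$: writing $\bar g(X)-a=(\bar g(X)-\theta)+(\theta-a)$ and using $\mathbb{E}_\pi[\bar g(X)-\theta]=0$, the cross term again vanishes, giving
\[
\mathbb{E}_\pi[(\bar g(X)-a)^2]=\mathbb{E}_\pi[(\bar g(X)-\theta)^2]+(\theta-a)^2.
\]
Eliminating $\mathbb{E}_\pi[(\bar g(X)-\theta)^2]$ between the two displays yields exactly the claimed identity. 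The ``in particular" inequality is immediate since $(\theta-a)^2\ge 0$.

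There is no real obstacle here: the lemma is a two-step application of the conditional-expectation orthogonality plus a bias-variance decomposition about the constant $a$. The only care required is to invoke the tower property correctly so that the cross terms vanish; once that is observed, the algebra is forced.
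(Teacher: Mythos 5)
Your proposal is correct and in substance the same as the paper's: both rest on the orthogonality relations $\mathbb{E}_\pi[\epsilon_g(X,Y)]=0$ and $\mathbb{E}_\pi[\epsilon_g(X,Y)\bar g(X)]=0$ coming from the tower property, followed by direct expansion of squares. The only difference is organizational: you carry out two Pythagorean expansions (of $g-\theta=(\bar g-\theta)+\epsilon_g$ and of $\bar g-a=(\bar g-\theta)+(\theta-a)$) and eliminate $\mathbb{E}_\pi[(\bar g(X)-\theta)^2]$, whereas the paper expands $\bar g(X)-a=(g(X,Y)-\theta)+(\theta-a-\epsilon_g(X,Y))$ in a single step; the resulting algebra and conclusion are identical.
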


\begin{proof} Note that, by definition, we have 
$$\mathbb{E}_\pi[\epsilon_g(X,Y)]=0, \quad \mathbb{E}_\pi[\epsilon_g(X,Y)\bar g(X)]=0.$$
Hence
\begin{align*}
&\mathbb{E}_\pi[(\bar{g}(X)-a)^2]\\
=&\mathbb{E}_\pi[(g(X,Y)-\theta)+(\theta-a-\epsilon_g(X,Y))^2]\\
=&\mathbb{E}_\pi[(g(X,Y)-\theta)^2]+\mathbb{E}_\pi[(\theta-a-\epsilon_g(X,Y))^2]
+2\mathbb{E}_\pi[(g(X,Y)-\theta)(\theta-a-\epsilon_g(X,Y))]\\
=&\mathbb{E}_\pi[(g(X,Y)-\theta)^2]+(\theta-a)^2+\mathbb{E}_\pi[\epsilon_g(X,Y)^2]
-2\mathbb{E}_\pi[\epsilon_g(X,Y)(g(X,Y)-\theta)]\\
=&\mathbb{E}_\pi[(g(X,Y)-\theta)^2]+(\theta-a)^2+\mathbb{E}_\pi[\epsilon_g(X,Y)^2]
-2\mathbb{E}_\pi[\epsilon_g(X,Y)^2]\\
=&\mathbb{E}_\pi[(g(X,Y)-\theta)^2]+(\theta-a)^2-\mathbb{E}_\pi[\epsilon_g(X,Y)^2].
\end{align*}
\end{proof}

Considering CF estimator applied to the ``Partial" case introduced in Section \ref{sec:discussion}, we have the following result:

\begin{theorem}[CF in the ``Partial" case]\label{mainCF1}
Suppose Assumption \ref{noise1} holds and take an RLS estimate with $\lambda = m^{-\frac{1}{2}}$. Let $m=\alpha n$ where $0< \alpha< 1$. The CF estimator $\hat{\theta}_{CF}$ is an unbiased estimator of $\theta$ that satisfies the following bound.\\
(a) If $\bar{f} \in \text{Range}(L_\pi^{r})$ ($0\le r \le 1$), then $\mathbb{E}_\pi [(\hat{\theta}_{CF}-\theta)^2] \le C_1(C_{f} n^{-1-r}+M_0 n^{-1})$ where $C_{f}=\|L_\pi^{-r}\bar{f}\|^2_{L^2(\pi_X)}$ (which is a constant indicating the regularity of $\bar{f}$ in $\mathcal{H}_+$), $C_1$ only depends on $\alpha, \kappa$. In particular, $\mathbb{E}_\pi [(\mu_X(s_m)-\theta)^2] \le C_1(C_{f} m^{-r}+M_0)$.\\
(b) 
Suppose that there exists a $\varepsilon>0$ and $g \in \text{Range}(L_\pi)$, such that $\|\bar{f}-g\|^2_{L^2(\pi_X)}\le \varepsilon$. This assumption holds, for instance, if $\bar{f} \in \overline{\mathcal{H}_+}^\pi$ (the closure of $\mathcal{H}_+$ in the space $L^2(\pi_X)$).
Then $\mathbb{E}_\pi [(\hat{\theta}_{CF}-\theta)^2] \le C_1(\|L_\pi^{-1}g\|^2_{L^2(\pi_X)} n^{-2}+M_0 n^{-1}+\varepsilon n^{-1})$ where $C_1$ only depends on $\alpha, \kappa$.
\end{theorem}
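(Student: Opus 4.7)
The plan is to condition on the first-stage data $D_0 = \{(x_i,y_i)\}_{i=1}^m$ (which determines $s_m$), reduce the MSE to the regression error of $s_m$ plus a bounded noise contribution, and then invoke Corollary \ref{main_lemma}. Under Assumption \ref{CSA}, $\mathbb{E}_\pi[f_m(X,Y)\mid D_0] = \mu(f) - \mu_X(s_m) + \mu_X(s_m) = \theta$, so the estimator is unbiased, and because $\{(x_j,y_j)\}_{j=m+1}^n$ are i.i.d.\ from $\pi$ given $D_0$,
\[
\mathbb{E}_\pi\!\left[(\hat{\theta}_{CF}-\theta)^2\right] = \frac{1}{n-m}\,\mathbb{E}_\pi\!\left[\mathrm{Var}_\pi[f_m(X,Y)\mid D_0]\right].
\]

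To bound the inner variance, I would decompose $f(X,Y) = \bar{f}(X) + \epsilon(X,Y)$ with $\mathbb{E}[\epsilon\mid X] = 0$. Since $\epsilon$ is independent of $D_0$, the cross term in the expansion of $\bigl(\bar{f}(X) - s_m(X) + \mu_X(s_m) - \theta + \epsilon(X,Y)\bigr)^2$ vanishes conditionally on $D_0$, giving
\[
\mathrm{Var}_\pi[f_m(X,Y)\mid D_0] = \mathrm{Var}_{\pi_X}[\bar{f}(X) - s_m(X) \mid D_0] + \mathbb{E}_\pi[\epsilon^2] \le \mu_X\!\left((\bar{f}-s_m)^2\right) + M_0.
\]
Taking outer expectation and applying Corollary \ref{main_lemma} with $q=\pi$, $f_q=\bar{f}$, and $\lambda=m^{-1/2}$ yields $\mathbb{E}_\pi[\mu_X((\bar{f}-s_m)^2)] \le C_\kappa m^{-r}\|L_\pi^{-r}\bar{f}\|^2_{L^2(\pi_X)} + 2\kappa^4 M_0$, and substituting $m=\alpha n$ gives part (a). The auxiliary bound on $(\mu_X(s_m)-\theta)^2$ then follows from Jensen's inequality applied to $\mu_X(\bar{f}-s_m)$ combined with the same Corollary \ref{main_lemma} estimate.

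For part (b), the plan is to exploit the linearity of the RLS functional (noted after Lemma \ref{H+RLS1}) by writing $z_j = g(x_j) + h(x_j) + \epsilon(x_j,y_j)$ with $h := \bar{f} - g$ and $\|h\|^2_{L^2(\pi_X)}\le\varepsilon$, so that $s_m = s_m^g + s_m^h + s_m^\epsilon$. A triangle-type inequality on $(\bar{f}-s_m)^2 = \bigl((g-s_m^g)+(h-s_m^h)-s_m^\epsilon\bigr)^2$ reduces everything to three pieces: $\mathbb{E}_\pi[\mu_X((g-s_m^g)^2)]$ is $O(m^{-1}\|L_\pi^{-1}g\|^2_{L^2(\pi_X)})$ by Corollary \ref{main_lemma} at $r=1$ (noise-free); $\mathbb{E}_\pi[\mu_X((h-s_m^h)^2)]$ is $O(\varepsilon)$ by the same corollary at $r=0$; and $\mathbb{E}_\pi[\mu_X((s_m^\epsilon)^2)]$ is $O(M_0)$ by applying Corollary \ref{cor1} with zero ground-truth regression and noise $\epsilon$. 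Combining these with the $1/(n-m)$ averaging factor produces the claimed bound.

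The main obstacle will be in part (b): I must track the three-piece decomposition cleanly, verifying that the RLS map is linear in the observation vector so that $s_m^g$, $s_m^h$, and $s_m^\epsilon$ can be analyzed as independent RLS problems, and that the noise term $s_m^\epsilon$ can legitimately be regarded as the RLS solution with zero ground-truth regression so that Corollary \ref{cor1} genuinely applies. The rest is routine bookkeeping using the machinery of Section \ref{sec:RLS}.
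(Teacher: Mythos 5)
Your proposal is correct and follows essentially the same route as the paper: conditional unbiasedness of $f_m$ given $D_0$ yields the $1/(n-m)$ factor, the per-sample error is bounded by $\mu_X((\bar f - s_m)^2)+M_0$, and Corollary \ref{main_lemma} with $\lambda=m^{-1/2}$ (plus Jensen for $\mu_X(s_m)$) finishes part (a). The only, harmless, deviation is in part (b): you split the observation vector three ways ($g$, $h=\bar f-g$, and the noise, treating the noise piece as an RLS problem with zero regression function via Corollary \ref{cor1}, which is legitimate since $f_q=0$ gives $f_\lambda=0$ and the bound $\kappa^4 M_0$ at $\lambda=m^{-1/2}$), whereas the paper uses a two-way split $f=g+(f-g)$ and absorbs the noise into the $r=0$ application of Corollary \ref{main_lemma}; both yield the same $O(m^{-1}\|L_\pi^{-1}g\|^2_{L^2(\pi_X)}+\varepsilon+M_0)$ control of $\mathbb{E}_\pi[\mu_X((\bar f-s_m)^2)]$ and hence the claimed rate.
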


\begin{proof}
(a) Suppose $\bar{f} \in \text{Range}(L_\pi^{r})$ ($0\le r \le 1$). We apply Corollary \ref{main_lemma} (with $r$) to the samples $\{(x_i, f(x_i,y_i))\}$ to obtain
$$\mathbb{E}_\pi[\mu_X((\bar{f} - s_m)^2)]\le C_\kappa m^{-r} \mu_X((L_\pi^{-r}\bar{f})^2)+ (C_\kappa-2) M_0\le C_\kappa C_{f}m^{-r}+(C_\kappa-2) M_0$$ where $C_\kappa=2\kappa^4+2$. Note that
\begin{align*}
(\mu_X(s_m)-\theta)^2=& \left|\int_{\Omega} (s_m(t)-\bar{f}(t))\pi_X(t)dt\right|^2\\
\le&\left(\int_{\Omega} |s_m(t)-\bar{f}(t)|\pi_X(t)dt\right)^2\\
\le &\int_{\Omega} |s_m(t)-\bar{f}(t)|^2\pi_X(t)dt   \quad \text{by Cauchy-Schwarz inequality}\\
= & \mu_X((\bar{f} - s_m)^2)
\end{align*}
Thus we obtain a bound for $\mu_X(s_m)$:
$$\mathbb{E}_\pi[(\mu_X(s_m)-\theta)^2]\le \mathbb{E}_\pi[\mu_X((\bar{f} - s_m)^2)]\le C_\kappa C_{f}m^{-r}+(C_\kappa-2) M_0.$$

In Lemma \ref{split}, we choose $g=f_m$ and $a=\mu_X(s_m)$. In this case, $\bar{g}=\bar{f}_m=\bar{f}-s_m+\mu_X(s_m)$ and $\epsilon_g=\epsilon$. So it follows from Lemma \ref{split} that
\begin{align*}
\mu((f_m-\theta)^2)&\le \mu_X((\bar{f}_m-\mu_X(s_m))^2)+\mathbb{E}_\pi[\epsilon(X,Y)^2]\\
& = \mu_X((\bar{f}-s_m)^2)+M_0
\end{align*}
and thus $\mathbb{E}_\pi[\mu((f_m-\theta)^2)] \le C_\kappa (C_{f}m^{-r}+ M_0)$. Furthermore, note that given $D_0$, $f_m(x_j,y_j)$ is an unbiased estimator of $\theta$ so
\begin{align*}
\mathbb{E}_{\pi}\left[\left(\frac{1}{n-m}\sum_{j=m+1}^{n}f_m(x_j,y_j)-\theta\right)^2\Bigg|D_0\right]&=\frac{1}{(n-m)^2}\sum_{j=m+1}^{n}\mathbb{E}_{\pi}\left[(f_m(x_j,y_j)-\theta)^2|D_0\right]\\
 &=\frac{1}{n-m}\mathbb{E}_\pi[\mu((f_m-\theta)^2)]
\end{align*}
Therefore,
$$\mathbb{E}_{\pi}\left[(\hat{\theta}_{CF}-\theta)^2\right]\le\frac{C_\kappa (C_{f}m^{-r}+ M_0)}{n-m},$$
which implies that
$$\mathbb{E}_\pi [(\hat{\theta}_{CF}-\theta)^2]\le C_1(C_{f} n^{-1-r}+M_0 n^{-1})$$
since $m=\alpha n$.

(b) We first note that Proposition \ref{ease} shows that $\overline{\mathcal{H}_+}^\pi = \overline{\text{Range}(L_\pi)}^\pi$. Therefore, the assumption holds if $\bar{f} \in \overline{\mathcal{H}_+}^\pi$. Note that $\bar{f} \in \overline{\mathcal{H}_+}^\pi$ is a mild assumption (which is weaker than the assumptions in \citet{CF1} and \citet{LL}) and in some cases, $\overline{\mathcal{H}_+}^\pi=L^2(\pi_X)$ (Lemma 4 in \citet{CF2}). This result shows that we can establish similar results even with a very weak assumption.

Let $h=f-g$ so $\bar{h}=\bar{f}-g$.  Let $s^{h}_m$, $s^{g}_m$ be the RLS functional approximation of $h$, $g$ respectively. As we point out after Lemma \ref{H+RLS1}, $s^{h}_m$ is a linear functional of $h$, so we have
$$h - s^{h}_m=(f - s_m)-(g - s^{g}_m).$$
Next we apply Corollary \ref{main_lemma} (with $r=1$) to the samples $\{(x_i, g(x_i))\}$: since $g\in\text{Range}(L_\pi)$ and $g$ is a function of $x$ only, $M^g_0=0$ and
$$\mathbb{E}_\pi[\mu_X((g - s^{g}_m)^2)] \le C_\kappa m^{-1} \mu_X((L_\pi^{-1}g)^2).$$
Again, we apply Corollary \ref{main_lemma} (with $r=0$) to the samples $\{(x_i, h(x_i,y_i))\}$: we note that $\bar{h}\in L^2(\pi_X)=\text{Range}(L_\pi^{0})$ and $\bar{g}=g$ so 
$$M^h_0:=\mathbb{E}_\pi [(h(x,y)-\bar{h}(x))^2]=\mathbb{E}_\pi [(f(x,y)-\bar{f}(x))^2]$$
which is the same as $M_0$ and thus
$$\mathbb{E}_\pi[\mu_X((\bar{h} - s^{h}_m)^2)]\le C_\kappa \mu_X(\bar{h}^2)+ (C_\kappa-2) M_0\le C_\kappa\varepsilon +(C_\kappa-2) M_0.$$
Combining the above inequalities, we obtain by 
Cauchy–Schwarz inequality that
\begin{align*}
\mathbb{E}_\pi[\mu_X((\bar{f} - s_m)^2)] &\le 2(\mathbb{E}_\pi[\mu_X((g - s^{g}_m)^2)]+\mathbb{E}_\pi[\mu_X((\bar{h} - s^{h}_m)^2)]) \\
&\le 2 C_\kappa (\varepsilon + M_0+ m^{-1}\mu_X((L_\pi^{-1}g)^2))\\
&= 2 C_\kappa (\varepsilon + M_0+ m^{-1} \|L_\pi^{-1}g\|^2_{L^2(\pi_X)}).
\end{align*}
The rest is similar to part (a). We finally conclude
$$\mathbb{E}_\pi [(\hat{\theta}_{CF}-\theta)^2] \le C_1(\|L_\pi^{-1}g\|^2_{L^2(\pi_X)} n^{-2}+M_0 n^{-1}+\varepsilon n^{-1}).$$
\end{proof}



Theorem \ref{mainCF0-1} in Section \ref{sec:DRSK main} is part (a) of Theorem \ref{mainCF1}. In addition, we remark that $\lambda=\Theta(m^{-\frac{1}{2}})$ is the best choice of $\lambda$ leading to the best rate in theory due to the following reasons: 
\begin{enumerate}
\item On the one hand, if there exists the noise $Y$, one might wish to diminish the effect of $M_0$ in the RLS regression by selecting another $\lambda$ such as $\lambda=m^{-\frac{1}{2+2r}}$ in Corollary \ref{main_lemma}. However, doing so will offer a worse rate than $O(m^{-r})$ for the term $\mathbb{E}_\pi[\mu_X((\bar{f} - s_m)^2)]$ and at the same time, the MSE bound of CF still contains the term $M_0n^{-1}$ because the effect of the error $\epsilon$ cannot be eliminated in the MSE. So $\lambda=\Theta(m^{-\frac{1}{2}})$ is preferred. 

\item On the other hand, if there does not exist the noise $Y$ (implying $M_0=0$), 
Corollary \ref{main_lemma} with $M_0=0$ shows that the best upper bound is achieved by setting $\lambda=\Theta(m^{-\frac{1}{2}})$.
\end{enumerate}


As a special case, we immediately get the following result when considering the CF estimator applied to the ``Standard" case:

\begin{theorem}[CF in the ``Standard" case]\label{mainCF2}
Take an RLS estimate with $\lambda = m^{-\frac{1}{2}}$. Let $m=\alpha n$ where $0< \alpha< 1$. The CF estimator $\hat{\theta}_{CF}$ is an unbiased estimator of $\theta$ that satisfies the following bound.\\
(a) If $f \in \text{Range}(L_\pi^{r})$ ($0\le r \le 1$), then $\mathbb{E}_\pi [(\hat{\theta}_{CF}-\theta)^2] \le C_1C_{f} n^{-1-r}$ where $C_{f}=\|L_\pi^{-r}f\|^2_{L^2(\pi_X)}$ (which is a constant indicating the regularity of $f$ in $\mathcal{H}_+$), $C_1$ only depends on $\alpha, \kappa$. In particular, $\mathbb{E}_\pi [(\mu_X(s_m)-\theta)^2] \le C_1C_{f} m^{-r}$.\\
(b) Suppose that there exists a $\varepsilon>0$ and $g \in \text{Range}(L_\pi)$, such that $\|f-g\|^2_{L^2(\pi_X)}\le \varepsilon$. This assumption holds, for instance, if $f \in \overline{\mathcal{H}_+}^\pi$ (the closure of $\mathcal{H}_+$ in the space $L^2(\pi_X)$).
Then $\mathbb{E}_\pi [(\hat{\theta}_{CF}-\theta)^2] \le C_1(\|L_\pi^{-1}g\|^2_{L^2(\pi_X)} n^{-2} +\varepsilon n^{-1})$ where $C_1$ only depends on $\alpha, \kappa$.
\end{theorem}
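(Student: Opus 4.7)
The plan is to treat Theorem \ref{mainCF2} as the specialization of Theorem \ref{mainCF1} to the setting where the noise $Y$ is absent, so that $\bar f = f$ and $M_0 = 0$. Since the derivation of Theorem \ref{mainCF1} has already been outlined above, the work here is mainly to verify that the simplification goes through cleanly and to handle part (b) via a linearity-in-$f$ splitting argument.

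For part (a), I would first apply Corollary \ref{main_lemma} with $\lambda = m^{-1/2}$ to the RLS problem driven by the samples $\{(x_i, f(x_i))\}_{i=1}^m$ drawn from $\pi_X$. Under the assumption $f \in \text{Range}(L_\pi^r)$ and using $M_0 = 0$, this yields $\mathbb{E}_\pi[\mu_X((f - s_m)^2)] \le C_\kappa C_f m^{-r}$ with $C_f = \|L_\pi^{-r} f\|_{L^2(\pi_X)}^2$. From here, two consequences follow. The first is the ``in particular'' claim: since $\mu_X(s_m) - \theta = \mu_X(s_m - f)$, Cauchy--Schwarz gives $(\mu_X(s_m) - \theta)^2 \le \mu_X((f - s_m)^2)$, and taking expectations delivers the rate $O(m^{-r})$ for the simplified CF estimator. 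The second is the bound on $\hat\theta_{CF}$ itself. Conditional on $D_0$ (and hence on $s_m$), the samples in $D_1$ are i.i.d.\ from $\pi$ and each term $f_m(x_j) = f(x_j) - s_m(x_j) + \mu_X(s_m)$ is unbiased for $\theta$, so
\begin{equation*}
\mathbb{E}_\pi\!\left[(\hat\theta_{CF} - \theta)^2 \,\middle|\, D_0\right] = \frac{1}{n-m}\,\text{Var}_\pi(f_m(X)\mid D_0) \le \frac{1}{n-m}\,\mu_X((f - s_m)^2),
\end{equation*}
where the inequality uses the variance-minimizing property of the mean with constant $\mu_X(s_m)$. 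Taking expectation over $D_0$ and using $m = \alpha n$ with $0 < \alpha < 1$ yields $\mathbb{E}_\pi[(\hat\theta_{CF} - \theta)^2] \le C_1 C_f n^{-1-r}$.

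For part (b), I would mimic the decomposition used in the proof of Theorem \ref{mainCF1}(b). Write $f = g + h$ with $g \in \text{Range}(L_\pi)$ and $\|h\|_{L^2(\pi_X)}^2 \le \varepsilon$. Because $s_m$ is a \emph{linear} functional of the response vector $\hat z$ (visible from the closed form $\beta = (K_+ + \lambda m I)^{-1}\hat z$ in Lemma \ref{H+RLS1}), we have $s_m = s_m^g + s_m^h$. Apply Corollary \ref{main_lemma} with $r = 1$ to $g$ to obtain $\mathbb{E}_\pi[\mu_X((g - s_m^g)^2)] \le C_\kappa m^{-1}\|L_\pi^{-1} g\|_{L^2(\pi_X)}^2$, and with $r = 0$ to $h$ (using $M_0 = 0$ and $h \in L^2(\pi_X) = \text{Range}(L_\pi^0)$) to obtain $\mathbb{E}_\pi[\mu_X((h - s_m^h)^2)] \le C_\kappa \mu_X(h^2) \le C_\kappa \varepsilon$. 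A Cauchy--Schwarz-type doubling inequality then bounds $\mathbb{E}_\pi[\mu_X((f - s_m)^2)]$ by the sum of these two terms (up to constants). Plugging this into the variance argument from part (a) gives $\mathbb{E}_\pi[(\hat\theta_{CF} - \theta)^2] \le C_1(\|L_\pi^{-1} g\|_{L^2(\pi_X)}^2 n^{-2} + \varepsilon n^{-1})$ as claimed. The fallback assertion ``$f \in \overline{\mathcal{H}_+}^\pi$ suffices'' is an immediate consequence of Proposition \ref{ease}, which identifies $\overline{\mathcal{H}_+}^\pi$ with $\overline{\text{Range}(L_\pi)}^\pi$ and hence guarantees an arbitrarily close $g \in \text{Range}(L_\pi)$.

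There is no real obstacle beyond careful bookkeeping; the main thing to be attentive to is that the variance-reduction factor $1/(n-m)$ appears only because the noiseless setting makes each $f_m(x_j)$ conditionally unbiased for $\theta$ given $D_0$, so the bound genuinely improves by the extra factor of $m^{-r}$ (as opposed to $m^{-r+1/2}$, which would arise if one instead routed through the $\mathcal{H}_+$ norm as in the DRSK proof). This is the one point where the ``Standard'' case beats the ``Biased'' case and where the supercanonical rate materializes.
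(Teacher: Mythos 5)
Your proposal is correct and follows essentially the same route as the paper: Theorem \ref{mainCF2} is obtained there as the noiseless specialization of Theorem \ref{mainCF1} ($\bar f = f$, $M_0 = 0$), using Corollary \ref{main_lemma}, Cauchy--Schwarz for $\mu_X(s_m)$, conditional unbiasedness of $f_m$ with the variance-minimization-at-the-mean step (the content of Lemma \ref{split} when the noise vanishes), and for part (b) the same linear splitting $f = g + h$ with Corollary \ref{main_lemma} at $r=1$ and $r=0$ together with Proposition \ref{ease}. Your closing remark about the $L^2(\pi_X)$ versus $\mathcal{H}_+$ routing correctly identifies why the rate is $m^{-r}$ here rather than $m^{-r+1/2}$.
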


Theorem \ref{mainCF0-2} in Section \ref{sec:DRSK main} is part (a) of Theorem \ref{mainCF2}.

\begin{theorem}[CF in the ``Biased" case]\label{mainCF3}
Suppose Assumption \ref{lrupper2} holds and take an RLS estimate with $\lambda = m^{-\frac{1}{2}}$. Then the CF estimator $\hat{\theta}_{CF}$ satisfies the following bound.\\
(a) If $f \in \text{Range}(L_q^{r})$ ($0\le r \le 1$), then $\mathbb{E}_q [(\hat{\theta}_{CF}-\theta)^2] \le C_1C_{f} m^{-r}$ where $C_{f}=\|L_q^{-r}f\|^2_{L^2(q_X)}$ (which is a constant indicating the regularity of $f$ in $\mathcal{H}_+$), $C_1$ only depends on $\kappa$, $\mathbb{E}_{x\sim \pi_X}[\frac{\pi_X(x)}{q_X(x)}]$. In particular, $\mathbb{E}_q [(\mu_X(s_m)-\theta)^2] \le C_1C_{f} m^{-r}$.\\
(b) 
Suppose that there exists a $\varepsilon>0$ and $g \in \text{Range}(L_q)$, such that $\|f-g\|^2_{L^2(q_X)}\le \varepsilon$. This assumption holds, for instance, if $f \in \overline{\mathcal{H}_+}^q$ (the closure of $\mathcal{H}_+$ in the space $L^2(q_X)$).
Then $\mathbb{E}_q [(\hat{\theta}_{CF}-\theta)^2] \le C_1(\|L_q^{-1}g\|^2_{L^2(q_X)} m^{-1} +\varepsilon)$ where $C_1$ only depends on $\kappa$, $\mathbb{E}_{x\sim \pi_X}[\frac{\pi_X(x)}{q_X(x)}]$.
\end{theorem}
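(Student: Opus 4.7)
The plan is to mirror the strategy of Theorem \ref{mainCF1}, but with two key adjustments arising from the biased sampling distribution: (i) the RLS theory in Section \ref{sec:RLS} is applied with $q$ (not $\pi$) as the sampling measure, which is already how that theory was phrased; and (ii) the step that bounded $(\mu_X(s_m)-\theta)^2$ by the $L^2(\pi_X)$ error of $s_m$ must now be replaced by a Cauchy--Schwarz argument with an explicit change of measure from $\pi_X$ to $q_X$.

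First I would apply Corollary \ref{main_lemma} to the samples $\{(x_j, f(x_j))\}_{j=1}^m$ drawn from $q$. Since there is no $Y$ in this case, $M_0 = 0$ and $\bar f = f$. With $f \in \text{Range}(L_q^r)$ and $\lambda = m^{-1/2}$, the corollary directly yields
\[
\mathbb{E}_q[\nu_X((f - s_m)^2)] \le C_\kappa\, C_f\, m^{-r}.
\]
Next, for the bias of $\mu_X(s_m)$ as an estimator of $\theta$, I would write
\[
(\mu_X(s_m) - \theta)^2 = \left(\int_\Omega (s_m - f)(t)\frac{\pi_X(t)}{q_X(t)}\,q_X(t)\,dt\right)^2,
\]
and apply Cauchy--Schwarz under $q_X$ to split out the likelihood ratio, obtaining
\[
(\mu_X(s_m) - \theta)^2 \le \nu_X((f - s_m)^2)\cdot \mathbb{E}_{x\sim \pi_X}\!\left[\tfrac{\pi_X(x)}{q_X(x)}\right].
\]
Taking expectation over $D_0$ and combining with the RLS bound immediately gives $\mathbb{E}_q[(\mu_X(s_m) - \theta)^2] \le C_1 C_f m^{-r}$, proving the ``in particular'' clause. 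This is exactly where Assumption \ref{lrupper2} is used, and why it enters through the constant $C_1$.

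For the full CF estimator, the key difference from Theorem \ref{mainCF1} is that $f_m(x_j,y_j) = f(x_j) - s_m(x_j) + \mu_X(s_m)$ is no longer unbiased for $\theta$ when conditioning on $D_0$, because $x_{m+1},\dots,x_n$ come from $q_X$ while $\mu_X(s_m)$ is the expectation under $\pi_X$. I would therefore decompose
\[
(\hat\theta_{CF} - \theta)^2 \le 2\left(\tfrac{1}{n-m}\sum_{j=m+1}^n (f(x_j) - s_m(x_j))\right)^2 + 2(\mu_X(s_m) - \theta)^2,
\]
bound the first term by $\frac{1}{n-m}\sum_{j=m+1}^n (f(x_j)-s_m(x_j))^2$ via Jensen's inequality, and take conditional expectation given $D_0$ to get an upper bound of $2\nu_X((f-s_m)^2)$. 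Combined with the previous two displays, this yields
$\mathbb{E}_q[(\hat\theta_{CF} - \theta)^2] \le C_1 C_f m^{-r}$ as claimed. The main conceptual point, which is also the main obstacle to overcome, is that averaging over $D_1$ offers no $(n-m)^{-1}$ gain in this biased setting: since $f_m(X)$ is biased under $q$, the square of the bias dominates and the rate cannot outperform $m^{-r}$.

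For part (b), I would invoke Proposition \ref{ease}, which gives $\overline{\mathcal{H}_+}^q = \overline{\text{Range}(L_q)}^q$, so that the hypothesis holds whenever $f \in \overline{\mathcal{H}_+}^q$. Writing $f = g + h$ with $g \in \text{Range}(L_q)$ and $\|h\|^2_{L^2(q_X)}\le \varepsilon$, I would exploit linearity of $s_m$ in the data (noted after Lemma \ref{H+RLS1}) to write $s_m = s_m^g + s_m^h$, then apply Corollary \ref{main_lemma} with $r=1$ to $g$ (giving $O(m^{-1})$) and with $r=0$ to $h$ (giving $O(\varepsilon)$ since $M_0^h = 0$). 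Inserting the resulting bound on $\mathbb{E}_q[\nu_X((f-s_m)^2)]$ into the same Cauchy--Schwarz/decomposition argument as in part (a) delivers the stated rate $C_1(\|L_q^{-1}g\|^2_{L^2(q_X)} m^{-1} + \varepsilon)$.
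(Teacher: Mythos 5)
Your proposal is correct and follows essentially the same route as the paper's proof: Corollary \ref{main_lemma} applied under the sampling measure $q$ with $M_0=0$, a Cauchy--Schwarz change-of-measure bound giving $(\mu_X(s_m)-\theta)^2 \le \nu_X((f-s_m)^2)\,\mathbb{E}_{\pi_X}[\pi_X/q_X]$, the observation that bias under $q$ forbids any $(n-m)^{-1}$ gain so the average over $D_1$ is bounded by the per-sample second moment, and for part (b) the split $f=g+h$ via Proposition \ref{ease} and linearity of $s_m$ with Corollary \ref{main_lemma} at $r=1$ and $r=0$. The only difference is cosmetic: you separate the $(f-s_m)$-average from the bias term before applying Jensen, whereas the paper bounds the average of $f_m-\theta$ by $\nu((f_m-\theta)^2)$ first and then decomposes; the two orderings are equivalent.
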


\begin{proof}
Note that in this theorem, only $m$ rather than $n$ appears in the upper bound. In the ``Biased" case, the second subset of data $D_1$ may have no contribution to the MSE bound of CF because of the bias.

(a) Note that
\begin{align*}
&(\mu_X(s_m)-\theta)^2\\
=& \left|\int_{\Omega} (s_m(t)-f(t))\pi_X(t)dt\right|^2\\
\le&\left(\int_{\Omega} |s_m(t)-f(t)|\pi_X(t)dt\right)^2\\
\le & \left(\int_{\Omega} |s_m(t)-f(t)|^2 q_X(t)dt\right)\left( \int_{\Omega} \frac{(\pi_X(t))^2}{q_X(t)} dt\right) \quad \text{by Cauchy-Schwarz inequality}\\
= & \nu_X((f - s_m)^2) \mathbb{E}_{x\sim \pi_X}\left[\frac{\pi_X(x)}{q_X(x)}\right].
\end{align*}
Therefore we obtain
$$\mathbb{E}_q [(\mu_X(s_m)-\theta)^2]\le  \mathbb{E}_q[\nu_X((f - s_m)^2)] \mathbb{E}_{x\sim \pi_X}\left[\frac{\pi_X(x)}{q_X(x)}\right].$$
Moreover, since we have $f_q=f$ and $M_0=0$ in the ``Biased" case.
It follows from Corollary \ref{main_lemma} that
\begin{equation} \label{mainCF3_1}
\mathbb{E}_q[\nu_X((f - s_m)^2)] \le C_\kappa m^{-r} \nu_X((L_q^{-r}f)^2)
\end{equation}
where $C_\kappa=2\kappa^4+2$. Combining the above inequalities, we obtain by 
Cauchy–Schwarz inequality that
\begin{align*}
\mathbb{E}_q[\nu_X((f_m - \theta)^2)] &\le 2\left(\mathbb{E}_q[\nu_X((f - s_m)^2)]+\mathbb{E}_q [(\mu_X(s_m)-\theta)^2]\right) \\
&\le 2\left(\mathbb{E}_{x\sim \pi_X}\left[\frac{\pi_X(x)}{q_X(x)}\right]+1\right)  \mathbb{E}_q[\nu_X((f - s_m)^2)]\\
&\le 2\left(\mathbb{E}_{x\sim \pi_X}\left[\frac{\pi_X(x)}{q_X(x)}\right]+1\right) C_\kappa m^{-r} \nu_X((L_q^{-r}f)^2) \\
&\le C_1 C_{f} m^{-r}.
\end{align*}
Nevertheless, given $D_0$, $f_m(x_j)$ is not necessarily an unbiased estimator of $\theta$ so we can only assert that
$$\mathbb{E}_q\left[(\hat{\theta}_{CF}-\theta)^2\right]=\mathbb{E}_q\left[\left(\sum_{j=m+1}^{n}\frac{1}{n-m}f_m(x_j)-\theta\right)^2\right]\le \mathbb{E}_q\left[\nu\left((f_m - \theta)^2\right)\right]$$
by Cauchy–Schwarz inequality. (Note that the cross terms may not vanish.)
Therefore,
$$\mathbb{E}_q\left[(\hat{\theta}_{CF}-\theta)^2\right]\le C_1 C_{f} m^{-r}.$$

(b) We only need to replace inequality (\ref{mainCF3_1}). The rest of the proof is the same as part (a).

We note that Proposition \ref{ease} shows that $\overline{\mathcal{H}_+}^q = \overline{\text{Range}(L_q)}^q$. Therefore, the assumption holds if $f \in \overline{\mathcal{H}_+}^q$.
Let $h=f-g$.  Let $s^{h}_m$, $s^{g}_m$ be the RLS functional approximation of $h$, $g$ respectively. $s^{h}_m$ is a linear functional of $h$, so we write
$$h - s^{h}_m=(f - s_m)-(g - s^{g}_m)$$
Next we apply Corollary \ref{main_lemma} (with $r=1$) to the samples $\{(x_i, g(x_i))\}$: Since $g\in\text{Range}(L_q)$ and $g$ is a function of $x$ only, then $M^g_0=0$ and
$$\mathbb{E}_q[\nu_X((g - s^{g}_m)^2)] \le C_\kappa m^{-1} \nu_X((L_q^{-1}g)^2)$$
Again, we apply Corollary \ref{main_lemma} (with $r=0$) to the samples $\{(x_i, h(x_i))\}$: We note that $h\in L^2(q_X)$ and $h$ is a function of $x$ only so 
$M^h_0=0$ and thus
$$\mathbb{E}_q[\nu_X((h - s^{h}_m)^2)]\le C_\kappa \nu_X(h^2)\le C_\kappa \varepsilon$$
where $C_\kappa=2\kappa^4+2$.\\
Adding these two parts we obtain
$$\mathbb{E}_q[\nu_X((f - s_m)^2)] \le 2 C_\kappa(\varepsilon+m^{-1}\nu_X((L_q^{-1}g)^2)).$$
Finally, we conclude that
$$\mathbb{E}_q\left[(\hat{\theta}_{CF}-\theta)^2\right]\le C_1(\|L_q^{-1}g\|^2_{L^2(q_X)} m^{-1} +\varepsilon).$$
\end{proof}

Theorem \ref{mainCF0-3} in Section \ref{sec:DRSK main} is part (a) of Theorem \ref{mainCF3}.

\begin{theorem}[CF in the ``Both" case]\label{mainCF4}
Suppose Assumptions \ref{CSA}, \ref{noise1}, and \ref{lrupper2} hold. Take an RLS estimate with $\lambda = m^{-\frac{1}{2}}$. Let $m=\alpha n$ where $0< \alpha< 1$. Then the CF estimator $\hat{\theta}_{CF}$ satisfies the following bound.\\
(a) If $\bar{f} \in \text{Range}(L_q^{r})$ ($0\le r \le 1$), then $\mathbb{E}_q [(\hat{\theta}_{CF}-\theta)^2] \le C_1(C_{f} n^{-r}+M_0)$ where $C_{f}=\|L_q^{-r}\bar{f}\|^2_{L^2(q_X)}$ (which is a constant indicating the regularity of $\bar{f}$ in $\mathcal{H}_+$), $C_1$ only depends on $\alpha$, $\kappa$, $\mathbb{E}_{x\sim \pi_X}[\frac{\pi_X(x)}{q_X(x)}]$. In particular, $\mathbb{E}_q [(\mu_X(s_m)-\theta)^2] \le C_1(C_{f} m^{-r}+M_0)$.\\
(b) Suppose that there exists a $\varepsilon>0$ and $g \in \text{Range}(L_q)$, such that $\|\bar{f}-g\|^2_{L^2(q_X)}\le \varepsilon$. This assumption holds, for instance, if $\bar{f} \in \overline{\mathcal{H}_+}^q$ (the closure of $\mathcal{H}_+$ in the space $L^2(q_X)$).
Then $\mathbb{E}_q [(\hat{\theta}_{CF}-\theta)^2] \le C_1(\|L_q^{-1}g\|^2_{L^2(q_X)} n^{-1} +M_0 +\varepsilon)$ where $C_1$ only depends on $\alpha$, $\kappa$, $\mathbb{E}_{x\sim \pi_X}[\frac{\pi_X(x)}{q_X(x)}]$. 
\end{theorem}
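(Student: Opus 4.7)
The plan is to combine the decomposition technique used in Theorem \ref{mainCF1} (to handle the noise $Y$) with the likelihood-ratio bound from Theorem \ref{mainCF3} (to handle the bias from sampling under $q$). First I would write, for any $j \in \{m+1,\ldots,n\}$,
$$f_m(x_j,y_j) - \theta \;=\; \bigl(\bar{f}(x_j) - s_m(x_j)\bigr) + \bigl(\mu_X(s_m) - \theta\bigr) + \epsilon(x_j,y_j),$$
and apply $(a+b+c)^2 \le 3(a^2+b^2+c^2)$. Integrating against $\nu$ (conditional on $D_0$) and then taking the outer expectation, I get
$$\mathbb{E}_q\bigl[\nu((f_m-\theta)^2)\bigr] \;\le\; 3\Bigl(\mathbb{E}_q[\nu_X((\bar{f}-s_m)^2)] + \mathbb{E}_q[(\mu_X(s_m)-\theta)^2] + M_0\Bigr),$$
where the third term uses Assumption \ref{noise1} and the fact that $\mathbb{E}[\epsilon(X,Y)\mid X] = 0$ so $\nu(\epsilon^2) \le M_0$.

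Next I would bound the middle term exactly as in the proof of Theorem \ref{mainCF3}: using $\theta = \mu_X(\bar{f})$ (valid under Assumption \ref{CSA}) together with Cauchy–Schwarz against the likelihood ratio,
$$(\mu_X(s_m) - \theta)^2 \;\le\; \nu_X\bigl((s_m - \bar{f})^2\bigr)\cdot \mathbb{E}_{x\sim \pi_X}\Bigl[\tfrac{\pi_X(x)}{q_X(x)}\Bigr].$$
So both residual terms in the display are controlled by $\mathbb{E}_q[\nu_X((\bar{f}-s_m)^2)]$, which I would bound via Corollary \ref{main_lemma} (with the choice $\lambda = m^{-1/2}$). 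For part (a), assuming $\bar{f} \in \text{Range}(L_q^r)$ gives
$$\mathbb{E}_q[\nu_X((\bar{f}-s_m)^2)] \;\le\; C_\kappa m^{-r}\|L_q^{-r}\bar{f}\|^2_{L^2(q_X)} + (C_\kappa-2) M_0,$$
so after collecting constants,
$$\mathbb{E}_q\bigl[\nu((f_m-\theta)^2)\bigr] \;\le\; C_1(C_f m^{-r} + M_0).$$

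The final step is to pass from a bound on a single $\nu((f_m-\theta)^2)$ to a bound on $\mathbb{E}_q[(\hat{\theta}_{CF}-\theta)^2]$. Here is the crucial obstacle: because samples come from $q \neq \pi$, a single $f_m(X,Y)$ is \emph{not} an unbiased estimator of $\theta$ conditional on $D_0$, so the cross terms in expanding $\bigl(\tfrac{1}{n-m}\sum_j(f_m(x_j,y_j)-\theta)\bigr)^2$ do not vanish. This is exactly what forced the statement in Theorem \ref{mainCF3} to depend on $m$ rather than $n$. I resolve this by a plain Cauchy–Schwarz inequality,
$$\Bigl(\tfrac{1}{n-m}\sum_{j=m+1}^n (f_m(x_j,y_j)-\theta)\Bigr)^2 \;\le\; \tfrac{1}{n-m}\sum_{j=m+1}^n (f_m(x_j,y_j)-\theta)^2,$$
whose expectation equals $\mathbb{E}_q[\nu((f_m-\theta)^2)]$, yielding $\mathbb{E}_q[(\hat{\theta}_{CF}-\theta)^2] \le C_1(C_f m^{-r} + M_0)$, and $m = \alpha n$ converts this to the claimed $C_1(C_f n^{-r} + M_0)$ form. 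Note that the $M_0$ term does \emph{not} shrink with $n$, reflecting the inability of CF's final uniform average to correct for the bias induced by $q \neq \pi$, which matches the discussion in Section \ref{sec:discussion}.

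For part (b), I would replay the argument of part (b) of Theorem \ref{mainCF1} / Theorem \ref{mainCF3}: decompose $\bar{f} = g + h$ with $g \in \text{Range}(L_q)$ and $\|h\|^2_{L^2(q_X)} \le \varepsilon$, exploit linearity $s_m^{\bar{f}} = s_m^g + s_m^h$, apply Corollary \ref{main_lemma} with $r=1$ to $g$ (giving the $m^{-1}\|L_q^{-1}g\|^2_{L^2(q_X)}$ term) and with $r=0$ to $h$ (giving the $\varepsilon + M_0$ term). The reduction to $\overline{\mathcal{H}_+}^q$ is justified by Proposition \ref{ease}. Combining these through the same Cauchy–Schwarz step as above delivers the bound $C_1(\|L_q^{-1}g\|^2_{L^2(q_X)}n^{-1} + M_0 + \varepsilon)$.
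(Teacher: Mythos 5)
Your proposal is correct and follows essentially the same route as the paper, which proves this theorem by explicitly combining the arguments of Theorems \ref{mainCF1} and \ref{mainCF3}: the same decomposition $f_m-\theta=(\bar f-s_m)+(\mu_X(s_m)-\theta)+\epsilon$, the same likelihood-ratio Cauchy--Schwarz bound $(\mu_X(s_m)-\theta)^2\le\nu_X((\bar f-s_m)^2)\,\mathbb{E}_{x\sim\pi_X}[\pi_X(x)/q_X(x)]$ (valid since $\theta=\mu_X(\bar f)$ under Assumption \ref{CSA}), Corollary \ref{main_lemma} with $\lambda=m^{-1/2}$, and a final Cauchy--Schwarz on the average because the cross terms no longer vanish under $q$. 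The one place you diverge is the noise term: the paper keeps $\frac{1}{n-m}\sum_j\epsilon(x_j,y_j)$ as a separate summand and uses that, under Assumptions \ref{CSA}--\ref{noise1}, the $\epsilon(x_j,y_j)$ are mean-zero and uncorrelated, so this piece contributes $M_0/(n-m)$, i.e.\ only the canonical rate; by folding $\epsilon$ into the pointwise bound and applying Cauchy--Schwarz to the entire sum, you pick up a non-vanishing $3M_0$ from the noise. Since the stated bound already carries a non-vanishing $M_0$ (the $2\kappa^4 M_0$ term in Corollary \ref{main_lemma}), your coarser treatment still yields the theorem exactly as stated, though the paper's split is sharper and is the version that gives the $M_0 n^{-1}$ scaling in the other cases. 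One small notational point in part (b): since $s_m$ is trained on the noisy responses $f(x_j,y_j)$, the linear split should be $f=g+h$ with $h=f-g$ (so $\bar h=\bar f-g$ and the noise level of $h$ equals $M_0$), rather than a split of $\bar f$ itself; with that adjustment your application of Corollary \ref{main_lemma} with $r=1$ to $g$ and $r=0$ to $h$ is exactly the paper's argument.
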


\begin{proof}
It follows from combining the proofs of Theorem \ref{mainCF1} and \ref{mainCF3}. In fact,
$$\hat{\theta}_{CF}-\theta=\frac{1}{n-m}\sum_{j=m+1}^{n}(\bar{f}(x_j)-s_m(x_j)+\mu_X(s_m)-\theta+\epsilon(x_j,y_j))$$
so we only need to analyze the following two terms separately:
$$\mathbb{E}_q\left[\left(\frac{1}{n-m}\sum_{j=m+1}^{n}(\bar{f}_m(x_j)-\theta)\right)^2\right], \ \mathbb{E}_q\left[\left(\frac{1}{n-m}\sum_{j=m+1}^{n}\epsilon(x_j,y_j)\right)^2\right].$$
The first term can be studied as in Theorem \ref{mainCF3}:
\begin{align*}
&\mathbb{E}_q\left[\left(\frac{1}{n-m}\sum_{j=m+1}^{n}(\bar{f}_m(x_j)-\theta)\right)^2\right]\\
\le &\mathbb{E}_q[\nu_X((\bar{f}_m - \theta)^2)]\\
\le & 2\left(\mathbb{E}_{x\sim \pi_X}\left[\frac{\pi_X(x)}{q_X(x)}\right]+1\right)  \mathbb{E}_q[\nu_X((\bar{f} - s_m)^2)].    
\end{align*}
For the second term,
we note that with Assumption \ref{CSA}, we have
$\mathbb{E}_q[\epsilon(x_j,y_j)]=0$ and thus
$$\mathbb{E}_q\left[\left(\frac{1}{n-m}\sum_{j=m+1}^{n}\epsilon(x_j,y_j)\right)^2\right]= \frac{1}{n-m} \mathbb{E}_q\left[\epsilon(x_{m+1},y_{m+1})^2\right]\le \frac{M_0}{n-m}$$
by Assumption \ref{noise1}.
\end{proof}

Theorem \ref{mainCF0} in Section \ref{sec:DRSK main} is part (a) of Theorem \ref{mainCF4}. Although $M_0$ appears non-vanishing in Theorem \ref{mainCF4}, it is merely a matter of the bound that we use on the term $\mathbb{E}_q[\nu_X((\bar{f} - s_m)^2)]$. There are several ways to overcome this. The first approach is to choose $\lambda = m^{-\frac{1}{2+2r}}$ (instead of $\lambda=m^{-\frac{1}{2}}$) to obtain, by Corollary \ref{main_lemma},
$$\mathbb{E}_q[\nu_X((f_q - s_m)^2)] \le C_1 m^{-\frac{r}{1+r}} (\nu_X(( L_q^{-r} f_q)^2)+M_0)$$
with a vanishing rate $m^{-\frac{r}{1+r}}$ for the $M_0$ term, but at the cost of a worse rate $m^{-\frac{r}{1+r}}$ than $m^{-r}$ for the $C_{f}$ term. The second approach is to leverage refined error bounds in \citet{SW,SW2}, e.g., by keeping $\lambda=m^{-\frac{1}{2}}$, we can obtain
$$\mathbb{E}_q[\nu_X((f_q - s_m)^2)] \le C_1 (m^{-r} \nu_X(( L_q^{-r} f_q)^2)+m^{-\frac{1}{2}}M_0).$$
However, none of these approaches can provide a bound that is better than the bound $o(n^{-\frac{1}{2}-r}) + M_0n^{-1}$ in our DRSK.

\subsection{Black-box Importance Sampling}\label{sec:IS}

In this section, we prove the theorems in Section \ref{sec:BBIS1}. 


\begin{proof} [Proof of Theorem \ref{mainIS0-1} in Section \ref{sec:DRSK main}]
See \citet{LL}.
\end{proof}

To prove Theorem \ref{mainIS0}, we split our proof into three lemmas. First we demonstrate that with the upper bound on each weight, we can control the noise term. Then since the optimization problem (\ref{COP0}) we consider here is different from \citet{LL}, Theorem 3.2 and 3.3 in \citet{LL} cannot be applied straightforwardly to Theorem \ref{mainIS0} which thus needs to be redeveloped. This is done in Lemma \ref{varifyweight1} and \ref{varifyweight2} which consider Part (a) and Part (b) of Theorem \ref{mainIS0} respectively.

\begin{lemma} \label{noiselemma}
Suppose Assumptions \ref{CSA} and \ref{noise1} hold. Then 
$$\mathbb{E}_q\left[\left(\sum_{j=1}^{n} \hat{w}_j \epsilon(x_j,y_j)\right)^2\right] \le \frac{M_0 B_0^2}{n}.$$
\end{lemma}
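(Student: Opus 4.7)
The plan is to expand the square, exploit conditional independence of the noise terms given the covariates, and then use the uniform weight bound $\hat{w}_j \le B_0/n$ to control the resulting diagonal terms.

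First I would write
\[
\mathbb{E}_q\!\left[\Big(\sum_{j=1}^{n}\hat{w}_j\epsilon(x_j,y_j)\Big)^2\right]
=\sum_{j=1}^{n}\mathbb{E}_q[\hat{w}_j^2\epsilon(x_j,y_j)^2]
+\sum_{j\ne k}\mathbb{E}_q[\hat{w}_j\hat{w}_k\epsilon(x_j,y_j)\epsilon(x_k,y_k)],
\]
and condition on the full vector of covariates $x_{1:n}$. The key observation is that $\hat{w}_j$ is computed from the optimization problem \eqref{COP0}, whose inputs depend only on $x_{1:n}$ (through the Gram matrix $K_0$), so $\hat{w}_j$ is deterministic given $x_{1:n}$. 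Moreover, since the samples $(x_i,y_i)$ are i.i.d.\ under $q$, conditionally on $x_{1:n}$ the noises $\epsilon(x_j,y_j)$ for different $j$ are independent.

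Next I would use Assumption \ref{CSA} to conclude that $\bar f$ is the regression function under both $\pi$ and $q$, hence $\mathbb{E}_q[\epsilon(X,Y)\mid X]=0$. For the off-diagonal terms ($j\ne k$), conditional independence plus zero conditional mean forces each summand to vanish:
\[
\mathbb{E}_q[\epsilon(x_j,y_j)\epsilon(x_k,y_k)\mid x_{1:n}]
=\mathbb{E}_q[\epsilon(x_j,y_j)\mid x_j]\,\mathbb{E}_q[\epsilon(x_k,y_k)\mid x_k]=0.
\]
So only the diagonal sum survives. For each diagonal term, the weight constraint $0\le\hat{w}_j\le B_0/n$ from \eqref{COP0} gives $\hat{w}_j^2\le B_0^2/n^2$ deterministically, and then Assumption \ref{noise1} bounds $\mathbb{E}_q[\epsilon(x_j,y_j)^2]\le M_0$, so
\[
\mathbb{E}_q[\hat{w}_j^2\epsilon(x_j,y_j)^2]\le \frac{B_0^2}{n^2}\mathbb{E}_q[\epsilon(x_j,y_j)^2]\le \frac{M_0 B_0^2}{n^2}.
\]
Summing over $j=1,\dots,n$ yields the stated bound $M_0 B_0^2/n$.

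There is no real obstacle here; the proof is essentially a variance calculation. The only subtle point worth stating carefully in the write-up is the justification that $\hat{w}_j$ is measurable with respect to $x_{1:n}$ alone, so that the conditioning argument applies cleanly, and that Assumption \ref{CSA} is what lets us equate the conditional means under $\pi$ and $q$.
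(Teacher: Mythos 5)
Your proposal is correct and follows essentially the same route as the paper's proof: condition on the covariate vector, use the fact that the weights from \eqref{COP0} are measurable with respect to $x_{1:n}$ together with the zero conditional mean of $\epsilon$ (via Assumption \ref{CSA}) to kill the cross terms, and then bound each diagonal term by the weight cap $B_0/n$ and Assumption \ref{noise1}. No gaps; the write-up only needs to state explicitly, as you note, that $\hat{w}_j$ depends on the data only through $x_{1:n}$.
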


\begin{proof}
The covariate shift assumption implies that
$\mathbb{E}_q[\epsilon(x_j,y_j)|x_j]=\mathbb{E}_\pi[\epsilon(x_j,y_j)|x_j]=0$. Since $\hat{w}_j$ is a function of the $X$ factor $\textbf{x}=(x_{1},\cdots,x_n)$, it follows that $\mathbb{E}_q[\hat{w}_j\epsilon(x_j,y_j)|\textbf{x}]=0$ and conditional on $\textbf{x}$, $\hat{w}_j\epsilon(x_j,y_j)$ is conditionally independent of each other.
So we assert that
$$\mathbb{E}_q\left[\left(\sum_{j=1}^{n} \hat{w}_j \epsilon(x_j,y_j)\right)^2\bigg|\textbf{x}\right]=\sum_{j=1}^{n}\mathbb{E}_q\left[\left( \hat{w}_j \epsilon(x_j,y_j)\right)^2\bigg|\textbf{x}\right],$$
and thus 
$$\mathbb{E}_q\left[\left(\sum_{j=1}^{n} \hat{w}_j \epsilon(x_j,y_j)\right)^2\right]= \sum_{j=1}^{n}\mathbb{E}_q \left[\mathbb{E}_q\left[\left( \hat{w}_j \epsilon(x_j,y_j)\right)^2\bigg|\textbf{x}\right] \right].$$
The upper bound on $\hat{w}_j$ (in the BBIS construction) and $\epsilon(x_j,y_j)$ (in Assumption \ref{noise1}) implies that \begin{align*}
\mathbb{E}_q\left[\mathbb{E}_q\left[\hat{w}_j^2 \epsilon(x_j,y_j)^2|\textbf{x}\right]\right]&= \mathbb{E}_q\left[\hat{w}_j^2 
\mathbb{E}_q\left[\epsilon(x_j,y_j)^2|\textbf{x}\right]\right]\\
&\le \mathbb{E}_q\left[\frac{B_0^2}{n^2} 
\mathbb{E}_q\left[\epsilon(x_j,y_j)^2|\textbf{x}\right]\right]\\
&= \frac{B_0^2}{n^2}\mathbb{E}_q\left[ 
\mathbb{E}_q\left[\epsilon(x_j,y_j)^2|\textbf{x}\right]\right]\\
&= \frac{B_0^2}{n^2} 
\mathbb{E}_q\left[\epsilon(x_j,y_j)^2\right] \le \frac{M_0 B_0^2}{n^2}.
\end{align*}
Hence we obtain that
$$\mathbb{E}_q\left[\left(\sum_{j=1}^{n} \hat{w}_j \epsilon(x_j,y_j)\right)^2\right]\le \frac{M_0 B_0^2}{n}.$$

\end{proof}

\begin{lemma} \label{varifyweight1}
Suppose Assumption \ref{lrupper} holds. Take $B_0=2B$ in (\ref{COP0}). We have $$\mathbb{E}_q[\mathbb{S}(\{\hat{w}_j,x_j\},\pi_X)]= O(n^{-1}).$$
\end{lemma}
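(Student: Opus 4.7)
The plan is to exploit the optimality of $\hat{w}$ by exhibiting an explicit feasible candidate whose expected KSD already decays at rate $O(n^{-1})$. The natural candidate is the normalized empirical likelihood ratio $\tilde{w}_j := w_j^*/S_n$, where $w_j^* := \frac{1}{n}\pi_X(x_j)/q_X(x_j)$ and $S_n := \sum_{j=1}^n w_j^*$. Under Assumption \ref{lrupper} we have $w_j^* \le B/n$ pointwise and $\mathbb{E}_q[S_n] = 1$, so on the ``good'' event $E_n := \{S_n \ge 1/2\}$ we get $\tilde{w}_j \le 2B/n = B_0/n$, making $\tilde{w}$ feasible in \eqref{COP0}. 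Optimality of $\hat{w}$ then yields $\hat{w}^T K_0 \hat{w}\cdot \mathbf{1}_{E_n} \le \tilde{w}^T K_0 \tilde{w}\cdot \mathbf{1}_{E_n} \le 4\,(w^*)^T K_0 w^*$.

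Next I would compute $\mathbb{E}_q[(w^*)^T K_0 w^*] = \sum_{i,j} \mathbb{E}_q[w_i^* w_j^* k_0(x_i, x_j)]$ by separating diagonal and off-diagonal contributions. The $n(n-1)$ off-diagonal terms vanish in expectation: by independence of $x_i$ and $x_j$ together with the kernelized Stein identity \eqref{equ:stein},
\[
\mathbb{E}_q[w_i^* k_0(x_i, x_j) \mid x_j] = \frac{1}{n}\mathbb{E}_{X \sim \pi_X}[k_0(X, x_j)] = 0.
\]
Each of the $n$ diagonal terms is bounded by $\mathbb{E}_q[(w_j^*)^2 k_0(x_j, x_j)] \le (B/n^2)\sup_x k_0(x,x)$, so altogether $\mathbb{E}_q[(w^*)^T K_0 w^*] = O(n^{-1})$.

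For the complementary ``bad'' event $E_n^c$, I would use the crude deterministic bound $\hat{w}^T K_0 \hat{w} \le B_0^2 \sup_x k_0(x,x)$, which follows from $|k_0(x,x')| \le \sqrt{k_0(x,x)k_0(x',x')}$ and the box constraint $\hat{w}_j \le B_0/n$. Chebyshev's inequality together with the variance bound $\mathrm{Var}_q(S_n) \le \frac{1}{n}\mathbb{E}_{q_X}[(\pi_X/q_X)^2] \le B/n$ (where the last inequality uses Assumption \ref{lrupper} via $(\pi_X/q_X)^2 \le B \cdot \pi_X/q_X$) gives $\mathbb{P}_q(E_n^c) \le 4B/n$. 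Combining the good- and bad-event estimates yields
\[
\mathbb{E}_q[\mathbb{S}(\{\hat{w}_j, x_j\}, \pi_X)] \le 4\,\mathbb{E}_q[(w^*)^T K_0 w^*] + B_0^2 \sup_x k_0(x,x)\cdot \mathbb{P}_q(E_n^c) = O(n^{-1}).
\]

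The main technical subtlety lies in the feasibility step: the normalized weights $\tilde w$ are only guaranteed to respect the upper-bound constraint $B_0/n$ on $E_n$, and this is precisely where the choice $B_0 = 2B$ in the statement enters. The complementary event must then be absorbed cleanly, which is the role of the Chebyshev bound on $S_n$; since its probability is itself $O(n^{-1})$, it contributes at the leading order without worsening the rate.
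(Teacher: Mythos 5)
Your proof is correct, and it follows the same skeleton as the paper's: exhibit the self-normalized likelihood-ratio weights as a feasible point of \eqref{COP0} on the high-probability event $\{\frac{1}{n}\sum_j \pi_X(x_j)/q_X(x_j)\ge 1/2\}$ (this is exactly where $B_0=2B$ is used), invoke optimality of $\hat w$, and absorb the complementary event with the crude deterministic bound $\hat w^T K_0\hat w\le B_0^2\sup_x k_0(x,x)$. The two local differences are worth noting. First, where the paper simply cites Theorem B.2 of \citet{LL} for $\mathbb{E}_q[\mathbb{S}(\{w^*_j,x_j\},\pi_X)]=O(n^{-1})$ (the KSD of the \emph{normalized} weights), you make this step self-contained: on the good event you bound the normalized quadratic form by $4$ times the unnormalized one (valid since $K_0\succeq 0$), and then the unnormalized expectation is computed directly — off-diagonal terms vanish by the kernelized Stein identity \eqref{equ:stein} after the change of measure, and the $n$ diagonal terms give $O(n^{-1})$ via Assumption \ref{lrupper} and $\sup_x k_0(x,x)<\infty$. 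This removes the dependence among the normalized weights that the cited result has to handle, at no cost to the rate. Second, you control the bad event by Chebyshev, giving $\mathbb{P}(E_n^c)=O(n^{-1})$, whereas the paper uses Hoeffding's inequality (the bounded ratio $\pi_X/q_X\le B$ makes Hoeffding applicable) to get an exponentially small probability; since the bad-event contribution only needs to be $O(n^{-1})$ for this lemma, either suffices, though the exponential tail is what the paper reuses in the analogous argument under the stronger Assumption \ref{lrupper4}. Net effect: your argument is marginally more elementary and self-contained, while the paper's is slightly sharper on the exceptional event and leans on the existing BBIS analysis.
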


\begin{proof}
We define the weights as constructed in \citet{LL}:
$$w_j^*=\frac{1}{Z} \frac{\pi_X(x_j)}{q_X(x_j)}, \quad Z=\sum_{j=1}^{n} \frac{\pi_X(x_j)}{q_X(x_j)}.$$
Notice that $\frac{1}{n}\frac{\pi_X(x_j)}{q_X(x_j)} \in [0, \frac{B}{n}]$.
By Hoeffding's inequality \citep{Hoeffding}, we have
$$\mathbb{P}\left(\frac{1}{n}Z-1 \le -\frac{1}{2}\right)\le \exp(-\frac{n}{2B^2})$$
where the probability is with respect to the sampling distribution $q_X$ (the same below). Let $$\mathcal{E} :=\left\{\frac{1}{n}Z-1 \ge -\frac{1}{2}\right\}.$$ 
The above statement demonstrates that
$\mathbb{P}(\mathcal{E}^c)\le \exp(-\frac{n}{2B^2}).$ Furthermore,
note that given $\mathcal{E}$, we have
$$w^*_j\le \frac{B}{n/2}=\frac{2B}{n}.$$
So
$$\mathcal{E}\subset \left\{0\le w^*_i \le \frac{2B}{n}, \ \forall i= 1,\cdots,n\right\}.$$
This demonstrates that given $\mathcal{E}$, $w^*_j$ is a feasible solution to the problem (\ref{COP0}) and thus
$$\mathbb{S}(\{\hat{w}_j,x_j\},\pi_X)\le \mathbb{S}(\{w^*_j,x_j\},\pi_X).$$ 
Moreover, we observe that since $0\le \hat{w}_j\le \frac{2B}{n}$,
\begin{align*}
0&\le \mathbb{S}(\{\hat{w}_j,x_j\},\pi_X)=\sum_{j,k=1}^{n} \hat{w}_j\hat{w}_k k_0(x_j,x_k)\\
&\le \sum_{j,k=1}^{n} |\hat{w}_j\hat{w}_k k_0(x_j,x_k)|\le \sum_{j,k=1}^{n}(\frac{2B}{n})^2\kappa_0^2=4B^2\kappa_0^2.    
\end{align*}
where $\kappa_0:=\sup_{x\in \Omega} \sqrt{k_0(x,x)}<\infty$ by our construction of $\mathcal{H}_0$.
Therefore we can express $\mathbb{E}_q[\mathbb{S}(\{\hat{w}_j,x_j\},\pi_X)]$ as
\begin{align*}
\mathbb{E}_q[\mathbb{S}(\{\hat{w}_j,x_j\},\pi_X)] &= \mathbb{E}_q[\mathbb{S}(\{\hat{w}_j,x_j\},\pi_X)|\mathcal{E}]\cdot \mathbb{P}[\mathcal{E}] + \mathbb{E}_q[\mathbb{S}(\{\hat{w}_j,x_j\},\pi_X)|\mathcal{E}^c]\cdot \mathbb{P}[\mathcal{E}^c]\\
&\le \mathbb{E}_q[\mathbb{S}(\{w^*_j,x_j\},\pi_X)|\mathcal{E}]\cdot \mathbb{P}[\mathcal{E}] + \mathbb{E}_q[\mathbb{S}(\{\hat{w}_j,x_j\},\pi_X)|\mathcal{E}^c]\cdot \exp(-\frac{n}{2B^2})\\
&\le \mathbb{E}_q[\mathbb{S}(\{w^*_j,x_j\},\pi_X)] + 4B^2\kappa_0^2\cdot \exp(-\frac{n}{2B^2}).
\end{align*}
It follows from Theorem B.2 in \citet{LL} that 
$$\mathbb{E}_q[\mathbb{S}(\{w^*_j,x_j\},\pi_X)]= O(n^{-1}).$$
Obviously we also have
$$4B^2\kappa_0^2\cdot \exp(-\frac{n}{2B^2})= O(n^{-1}).$$
Hence, we finally obtain
$$\mathbb{E}_q[\mathbb{S}(\{\hat{w}_j,x_j\},\pi_X)]= O(n^{-1}).$$
\end{proof}

\begin{lemma} \label{varifyweight2}
Suppose Assumptions \ref{lrupper} and \ref{lrupper4}  hold. Take $B_0=4B$ in (\ref{COP0}). We have 
$$\mathbb{E}_q[\mathbb{S}(\{\hat{w}_j,x_j\},\pi_X)]= o(n^{-1}).$$
\end{lemma}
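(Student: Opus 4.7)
The plan is to mirror the argument of Lemma \ref{varifyweight1}, replacing its $O(n^{-1})$ input bound (from Theorem B.2 of \citet{LL}) with the sharper $o(n^{-1})$ bound available under the additional Assumption \ref{lrupper4} (Theorem B.4 of \citet{LL}). As in Lemma \ref{varifyweight1}, I would reuse the self-normalized reference weights
$$w^*_j := \frac{1}{Z}\frac{\pi_X(x_j)}{q_X(x_j)}, \qquad Z := \sum_{k=1}^{n}\frac{\pi_X(x_k)}{q_X(x_k)},$$
as a feasible comparator inside the box-constrained quadratic program (\ref{COP0}). Under Assumption \ref{lrupper}, each summand in $Z$ lies in $[0,B]$ with mean $1$, so Hoeffding's inequality gives $\mathbb{P}(Z/n \le 1/2) \le \exp(-n/(2B^2))$. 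On the event $\mathcal{E} := \{Z/n > 1/2\}$ we have $w^*_j \le 2B/n \le B_0/n = 4B/n$, so $w^*$ is feasible in (\ref{COP0}) and, by optimality of $\hat w$, $\mathbb{S}(\{\hat w_j, x_j\}, \pi_X) \le \mathbb{S}(\{w^*_j, x_j\}, \pi_X)$ on $\mathcal{E}$.

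Next I would split $\mathbb{E}_q[\mathbb{S}(\{\hat w_j, x_j\}, \pi_X)]$ via the law of total expectation. On $\mathcal{E}^c$, the box constraint $\hat w_j \le 4B/n$ together with $\sup_{x \in \Omega}\sqrt{k_0(x,x)} \le \kappa_0 < \infty$ yields the crude deterministic bound $\mathbb{S}(\{\hat w_j, x_j\}, \pi_X) \le 16 B^2 \kappa_0^2$, so the $\mathcal{E}^c$-contribution is at most $16 B^2 \kappa_0^2 \exp(-n/(2B^2))$, which decays super-polynomially in $n$ and is trivially $o(n^{-1})$. On $\mathcal{E}$, we have $\mathbb{E}_q[\mathbb{S}(\{\hat w_j, x_j\}, \pi_X)\mathbf{1}_{\mathcal{E}}] \le \mathbb{E}_q[\mathbb{S}(\{w^*_j, x_j\}, \pi_X)]$, and the latter is exactly the object bounded by Theorem B.4 of \citet{LL}: under Assumptions \ref{lrupper} and \ref{lrupper4}, $\mathbb{E}_q[\mathbb{S}(\{w^*_j, x_j\}, \pi_X)] = o(n^{-1})$ (the same ingredient underlying part (b) of Theorem \ref{mainIS0-1}). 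Summing the two contributions gives the claim.

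The only nontrivial step is the supercanonical input rate $\mathbb{E}_q[\mathbb{S}(\{w^*_j, x_j\}, \pi_X)] = o(n^{-1})$, which is imported from \citet{LL} and depends crucially on the trace-class condition and the uniform eigenfunction bound in Assumption \ref{lrupper4}; once granted, the rest is a routine conditioning on a Hoeffding event, exactly parallel to the proof of Lemma \ref{varifyweight1}. The slack $B_0 = 4B$ (rather than the $2B$ used in Lemma \ref{varifyweight1}) is larger than what the feasibility step on $\mathcal{E}$ strictly requires, but is retained for compatibility with Theorem \ref{mainDRSK0-2}, where these same weights are reused to simultaneously control the aleatory noise term.
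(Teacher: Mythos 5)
There is a genuine gap at the one step you yourself flag as nontrivial: the claim that the self-normalized likelihood-ratio weights $w^*_j=\frac{1}{Z}\frac{\pi_X(x_j)}{q_X(x_j)}$ satisfy $\mathbb{E}_q[\mathbb{S}(\{w^*_j,x_j\},\pi_X)]=o(n^{-1})$ under Assumptions \ref{lrupper} and \ref{lrupper4}. No result in \citet{LL} gives this, and it is in fact false in general: writing the KSD as a double sum, the off-diagonal terms have mean zero by the Stein identity, but the diagonal contributes $\frac{1}{n}\,\mathbb{E}_{x\sim q_X}\bigl[(\tfrac{\pi_X(x)}{q_X(x)})^2 k_0(x,x)\bigr]$, a strictly positive quantity of exact order $n^{-1}$ (self-normalization does not change this). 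That is precisely why Theorem B.2 of \citet{LL} only yields $O(n^{-1})$ for these weights — the rate used in Lemma \ref{varifyweight1} — and why the supercanonical rate for the optimized weights cannot be obtained by comparing against the plain importance weights.

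To get $o(n^{-1})$ one must exhibit a \emph{better} feasible comparator, and this is what the paper's proof (following Theorem B.5 of \citet{LL}) does: split $D_1$ into two halves, define corrected weights $w^*_i=\frac{1}{n}\frac{\pi_X(x_i)}{q_X(x_i)}\bigl(1-\frac{2}{n}\sum_{j}\frac{\pi_X(x_j)}{q_X(x_j)}k_L(x_j,x_i)\bigr)$ with the cross-fitted truncated eigen-kernel $k_L(x,x')=\sum_{l\le L}\phi_l(x)\phi_l(x')$, $L=n^{1/4}$ (this is where Assumption \ref{lrupper4} enters and where the $\Theta(n^{-1})$ diagonal bias gets cancelled), then clip and renormalize to $w^+_i=\max(0,w^*_i)/\sum_i\max(0,w^*_i)$, use Hoeffding-type bounds (Lemma B.8 of \citet{LL}) to show that with overwhelming probability $0\le w^+_i\le \frac{4B}{n}$ and hence $w^+$ is feasible in (\ref{COP0}), and only then conclude by optimality plus the crude bound $16B^2\kappa_0^2$ on the exceptional event. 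Note also that your closing remark about $B_0=4B$ being unneeded slack is backwards: the factor $4$ is exactly what feasibility of this corrected, renormalized comparator requires (the pre-normalization weights are bounded by $2B/n$ on the good event while the normalizing sum can be as small as $1/2$); it is not merely retained for compatibility with Theorem \ref{mainDRSK0-2}. Your conditioning-on-a-Hoeffding-event scaffolding is fine, but the comparator must be replaced as above for the lemma to go through.
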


\begin{proof}
Without loss of generality, assume $n$ is an even number, and we partition the index of the data set $D_1$ into two parts $D_2=\{1, \cdots,\frac{n}{2}\}$ and $D_3=\{\frac{n}{2}+1,\cdots, n\}$. We define the weights as constructed in \citet{LL}:
$$w_i^* = \left\{ \begin{array}{rl} 
\frac{1}{n}\frac{\pi_X(x_i)}{q_X(x_i)}-\frac{2}{n^2}\sum_{j\in D_3} \frac{\pi_X(x_i)}{q_X(x_i)}\frac{\pi_X(x_j)}{q_X(x_j)} k_L(x_j,x_i) & \forall i\in D_2,\\
\frac{1}{n}\frac{\pi_X(x_i)}{q_X(x_i)}-\frac{2}{n^2}\sum_{j\in D_2} \frac{\pi_X(x_i)}{q_X(x_i)}\frac{\pi_X(x_j)}{q_X(x_j)} k_L(x_j,x_i) & \forall i\in D_3, \end{array} \right. $$
where $k_L(x,x')=\sum_{l=1}^{L} \phi_l(x) \phi_l(x')$ and $L=n^{1/4}$.
The proof here follows the proof of Theorem B.5. in \citet{LL}. Obviously, we only need to consider $i \in D_2$. Let $$T= \frac{2}{n}\sum_{j\in D_3} \frac{\pi_X(x_j)}{q_X(x_j)} k_L(x_j,x_i).$$
Lemma B.8 in \citet{LL} implies that
$$\mathbb{P}\left(\sum_{i=1}^{n}w^*_i<\frac{1}{2}\right)\le 2\exp(-\frac{n}{4L^2M_s}) \quad \text{where } M_s=M_2^2(M_2^2+\sqrt{2})^2/4,$$
$$\mathbb{P}(w^*_i<0)=\mathbb{P}(T>1)\le \exp(-\frac{n}{L^2M_2^4}).$$
Note that $\frac{1}{n}\frac{\pi_X(x_j)}{q_X(x_j)} \in [0, \frac{B}{n}]$ and $w_i^*(x)=\frac{1}{n}\frac{\pi_X(x_j)}{q_X(x_j)}(1-T).$ So $w_i^*(x)\ge \frac{2B}{n}$ implies that $T\le -1$. Using the similar argument, we obtain (by Hoeffding's inequality)
$$\mathbb{P}(w^*_i\ge\frac{2B}{n})\le Q(T\le -1)\le \exp(-\frac{n}{L^2M_2^4}).$$
Let $$\mathcal{E} = \left\{\sum_{i=1}^{n} w^*_i\ge 1/2, \ 0\le w^*_i \le \frac{2B}{n}, \ \forall i= 1,\cdots,n\right\}.$$ 
The above statement demonstrates that
$Q(\mathcal{E}^c)\le 2n\exp(-\frac{n}{L^2M_2^4})+2\exp(-\frac{n}{4L^2M_s}).$ Next we consider the following weights
$$w^+_i=\frac{\max(0,w^*_i)}{\sum_{i=1}^{n}\max(0,w^*_i)}.$$
Given the event $\mathcal{E}$, $0\le w^+_j\le \frac{4B}{n}$ is a feasible solution to the problem (\ref{COP0}) and thus,
$$\mathbb{S}(\{\hat{w}_j,x_j\},\pi_X)\le\mathbb{S}(\{w_j^+,x_j\},\pi_X).$$
Then we can express $\mathbb{E}_q[\mathbb{S}(\{\hat{w}_j,x_j\},\pi_X)]$ as
\begin{align*}
&\mathbb{E}_q[\mathbb{S}(\{\hat{w}_j,x_j\},\pi_X)]\\
=& \mathbb{E}_q[\mathbb{S}(\{\hat{w}_j,x_j\},\pi_X)|\mathcal{E}]\cdot \mathbb{P}[\mathcal{E}] + \mathbb{E}_q[\mathbb{S}(\{\hat{w}_j,x_j\},\pi_X)|\mathcal{E}^c]\cdot \mathbb{P}[\mathcal{E}^c]\\
\le& \mathbb{E}_q[\mathbb{S}(\{w^+_j,x_j\},\pi_X)|\mathcal{E}]\cdot \mathbb{P}[\mathcal{E}] + \mathbb{E}_q[\mathbb{S}(\{\hat{w}_j,x_j\},\pi_X)|\mathcal{E}^c]\cdot \mathbb{P}[\mathcal{E}^c]\\
\le& \mathbb{E}_q[\mathbb{S}(\{w^+_j,x_j\},\pi_X)] + 16B^2\kappa_0^2\cdot \left(2n\exp(-\frac{n}{L^2M_2^4})+2\exp(-\frac{n}{4L^2M_s})\right).
\end{align*}
by noting that $0\le \mathbb{S}(\{\hat{w}_j,x_j\},\pi_X)\le 16B^2\kappa_0^2.$
It follows from the remark below Theorem B.5 in \citet{LL} that 
$$\mathbb{E}_q[\mathbb{S}(\{w^+_j,x_j\},\pi_X)]= o(n^{-1}).$$
Obviously we also have
$$2n\exp(-\frac{n}{L^2M_2^4})+2\exp(-\frac{n}{4L^2M_s})= o(n^{-1}).$$
Hence, we finally obtain
$$\mathbb{E}_q[\mathbb{S}(\{\hat{w}_j,x_j\},\pi_X)]= o(n^{-1}).$$
\end{proof}

Now we are ready to prove Theorem \ref{mainIS0}.
\begin{proof} [Proof of Theorem \ref{mainIS0} in Section \ref{sec:DRSK main}]
Proposition 3.1 in \citet{LL} shows that
\begin{align*}
(\hat{\theta}_{IS}-\theta)^2&=\left(\sum_{j=1}^{n} \hat{w}_j (\bar{f}(x_j)+\epsilon(x_j,y_j)-\theta)\right)^2\\
&\le 2\left(\left(\sum_{j=1}^{n} \hat{w}_j (\bar{f}(x_j)-\theta)\right)^2+\left(\sum_{j=1}^{n} \hat{w}_j \epsilon(x_j,y_j)\right)^2\right)\\
&\le 2\left(\|\bar{f}-\theta\|^2_{\mathcal{H}_0} \cdot \mathbb{S}(\{\hat{w}_j,x_j\},\pi_X)+\left(\sum_{j=1}^{n} \hat{w}_j \epsilon(x_j,y_j)\right)^2\right)
\end{align*}
Note that $\|\bar{f}-\theta\|^2_{\mathcal{H}_0}$ is a constant. Combining Lemma \ref{noiselemma} and Lemma \ref{varifyweight1}, we obtain part (a).  Combining Lemma \ref{noiselemma} and Lemma \ref{varifyweight2}, we obtain part (b).
\end{proof}

\subsection{Doubly Robust Stein-Kernelized Estimators} \label{sec:DRSK}

We present and prove the following theorems that subsume the ones on our DRSK estimator in Section \ref{sec:DRSK main}.

\begin{theorem}[DRSK in all cases under weak assumptions]\label{mainDRSK1}
Suppose Assumptions \ref{CSA}, \ref{noise1}, and \ref{lrupper} hold. Take an RLS estimate with $\lambda = m^{-\frac{1}{2}}$ and $B_0=2B$ in (\ref{COP3}). Let $m=\alpha n$ where $0< \alpha< 1$. The DRSK estimator $\hat{\theta}_{DRSK}$ satisfies the following bound.\\
(a) If $\bar{f} \in \text{Range}(L_q^{r})$ ($\frac{1}{2}\le r \le 1$), then $\mathbb{E}_q [(\hat{\theta}_{DRSK}-\theta)^2] \le C_1(C_{f} n^{-\frac{1}{2}-r}+M_0 n^{-1})$ where $C_{f}=\|L_q^{-r}\bar{f}\|^2_{L^2(q_X)}$ (which is a constant indicating the regularity of $\bar{f}$ in $\mathcal{H}_+$), $C_1$ only depends on $\alpha, \kappa, B$.\\
(b) 
Suppose that there exists a $\varepsilon>0$ and $g \in \text{Range}(L_q)$, such that $\|\bar{f}-g\|^2_{\mathcal{H}_+}\le \varepsilon$. This assumption holds, for instance, if $\bar{f} \in \text{Range}(L_q^{\frac{1}{2}})$.
Then under this assumption, $\mathbb{E}_q [(\hat{\theta}_{DRSK}-\theta)^2] \le C_1(\|L_q^{-1}g\|^2_{L^2(q_X)} n^{-\frac{3}{2}}+M_0 n^{-1}+\varepsilon n^{-1})$ where $C_1$ only depends on $\alpha, \kappa, B$. \\
\end{theorem}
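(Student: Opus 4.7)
The plan follows the outline in Section \ref{sec:outline}. First I would decompose $f(x_j,y_j) = \bar f(x_j) + \epsilon(x_j, y_j)$ and write
\begin{equation*}
\hat\theta_{DRSK} - \theta \;=\; \sum_{j=m+1}^n \hat w_j\bigl(\bar f_m(x_j) - \theta\bigr) \;+\; \sum_{j=m+1}^n \hat w_j\,\epsilon(x_j, y_j),
\end{equation*}
with $\bar f_m := \bar f - s_m + \mu_X(s_m)$. Using the direct sum $\mathcal{H}_+ = \mathcal{C} \oplus \mathcal{H}_0$, the $\mathcal{C}$-component of $\bar f - s_m$ equals $\mu_X(\bar f - s_m) = \theta - \mu_X(s_m)$, so $\bar f_m - \theta$ is exactly its $\mathcal{H}_0$-component, giving $\|\bar f_m - \theta\|_{\mathcal{H}_0}^2 \le \|\bar f - s_m\|_{\mathcal{H}_+}^2$. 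The reproducing kernel identity for $\mathcal{H}_0$ combined with Cauchy--Schwarz then yields
\begin{equation*}
(\hat\theta_{DRSK} - \theta)^2 \;\le\; 2\|\bar f - s_m\|_{\mathcal{H}_+}^2 \cdot \mathbb{S}(\{\hat w_j, x_j\}, \pi_X) \;+\; 2\Bigl(\sum_{j=m+1}^n \hat w_j\,\epsilon(x_j, y_j)\Bigr)^{2}.
\end{equation*}

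The key structural fact is that $s_m$ depends only on $D_0$, while $\hat w$ and the associated KSD depend only on $D_1$, and $D_0 \perp D_1$. Conditioning on $D_0$ therefore factors the product expectation:
\begin{equation*}
\mathbb{E}_q\bigl[\|\bar f - s_m\|_{\mathcal{H}_+}^2 \cdot \mathbb{S}(\{\hat w_j, x_j\}, \pi_X)\bigr] \;=\; \mathbb{E}_q\bigl[\|\bar f - s_m\|_{\mathcal{H}_+}^2\bigr] \cdot \mathbb{E}_q\bigl[\mathbb{S}(\{\hat w_j, x_j\}, \pi_X)\bigr].
\end{equation*}
For part (a), Corollary \ref{main_lemma2} applied on $D_0$ with $\lambda = m^{-1/2}$ bounds the first factor by $C_\kappa(C_f m^{-r+1/2} + M_0)$, and Lemma \ref{varifyweight1} applied on $D_1$ (of size $n - m$, with $B_0 = 2B$) bounds the KSD factor by $O((n-m)^{-1})$. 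The noise cross-term is handled directly by Lemma \ref{noiselemma}, contributing $M_0 B_0^2/(n-m)$. Plugging in $m = \alpha n$ and combining the three pieces gives the claimed rate $C_1(C_f n^{-1/2-r} + M_0 n^{-1})$.

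For part (b), the approach is to exploit the linearity of the RLS map in the training responses. Write $\bar f = g + h$ with $g \in \text{Range}(L_q)$ and $\|h\|_{\mathcal{H}_+}^2 \le \varepsilon$ (available whenever $\bar f \in \text{Range}(L_q^{1/2})$ by Proposition \ref{ease2}), and split $s_m = s_m^g + s_m^h$ accordingly. Corollary \ref{main_lemma2} with $r = 1$ applied to the noiseless response $g(x_i)$ yields $\mathbb{E}_q[\|g - s_m^g\|_{\mathcal{H}_+}^2] = O(m^{-1/2}\|L_q^{-1} g\|_{L^2(q_X)}^2)$, while the same corollary with $r = 1/2$ applied to the residual, using Lemma \ref{isomorphism} to identify $\|L_q^{-1/2}h\|_{L^2(q_X)}$ with $\|h\|_{\mathcal{H}_+}$, gives $\mathbb{E}_q[\|h - s_m^h\|_{\mathcal{H}_+}^2] = O(\varepsilon + M_0)$. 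Inserting these into the factorized bound and adding the noise contribution produces the promised $O(\|L_q^{-1}g\|^2 n^{-3/2} + \varepsilon n^{-1} + M_0 n^{-1})$. The main technical obstacle lies in the KSD bound under the weight cap $B_0/(n-m)$: because of the cap, the natural importance weights $(n-m)^{-1}\pi_X(x_j)/q_X(x_j)$ are only feasible when the empirical normalizer is close to $1$, so a Hoeffding concentration argument is needed to establish feasibility on an event of probability $1 - \exp(-cn)$ for some $c > 0$, with the exceptional event absorbed by the uniform bound $\mathbb{S} \le 4B^2\kappa_0^2$; this is precisely what Lemma \ref{varifyweight1} delivers.
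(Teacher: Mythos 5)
Your proposal is correct and follows essentially the same route as the paper's proof: the same decomposition into the $\mathcal{H}_0$-component bound $\|\bar f_m-\theta\|^2_{\mathcal{H}_0}\le\|\bar f-s_m\|^2_{\mathcal{H}_+}$, the same factorization of the expectation via independence of $D_0$ and $D_1$, and the same combination of Corollary \ref{main_lemma2}, Lemma \ref{varifyweight1}, and Lemma \ref{noiselemma}. Part (b) likewise matches the paper's argument, splitting the response via linearity of the RLS map and applying Corollary \ref{main_lemma2} with $r=1$ and $r=\tfrac{1}{2}$ together with Lemma \ref{isomorphism} and Proposition \ref{ease2}.
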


\begin{proof}
Similarly to the proof of Theorem \ref{mainIS0}, we express $(\hat{\theta}_{DRSK}-\theta)^2$ as
\begin{align*}
(\hat{\theta}_{DRSK}-\theta)^2&=\left(\sum_{j=m+1}^{n} \hat{w}_j (\bar{f}_m(x_j)+\epsilon(x_j,y_j)-\theta)\right)^2\\
&\le 2\left(\left(\sum_{j=m+1}^{n} \hat{w}_j (\bar{f}_m(x_j)-\theta)\right)^2+\left(\sum_{j=m+1}^{n} \hat{w}_j \epsilon(x_j,y_j)\right)^2\right)\\
&\le 2\left(\|\bar{f}_m-\theta\|^2_{\mathcal{H}_0} \cdot \mathbb{S}(\{\hat{w}_j,x_j\},\pi_X)+\left(\sum_{j=m+1}^{n} \hat{w}_j \epsilon(x_j,y_j)\right)^2\right)
\end{align*}
where $\bar{f}_m=\bar{f}-s_m+\mu_X(s_m)$. To see that $\bar{f}_m-\theta\in \mathcal{H}_0$ in the above equation, we note that $\bar{f}\in \text{Range}(L_q^{r}) \subset \mathcal{H}_+$ whenever $\frac{1}{2}\le r \le 1$ and $s_m\in \mathcal{H}_+$, which implies that $\bar{f}_m-\theta=\bar{f}-s_m+\mu_X(s_m)-\theta$ is a function in $\mathcal{H}_+$. We can further claim that $\bar{f}_m-\theta$ is a function in $\mathcal{H}_0$ since $\mu_X(\bar{f}_m-\theta)=\mu_X(\bar{f}-s_m+\mu_X(s_m)-\theta)=\theta-\mu_X(s_m)+\mu_X(s_m)-\theta=0$. Note that
$$\bar{f}-s_m=(\bar{f}-s_m+\mu_X(s_m)-\theta)+(\theta-\mu_X(s_m)),$$
so we can express
$$\|\bar{f}-s_m\|^2_{\mathcal{H}_+}=\|\bar{f}_m-\theta\|^2_{\mathcal{H}_0}+\|\theta-\mu_X(s_m)\|^2_{\mathcal{C}}.$$
Thus we have
$$\|\bar{f}_m-\theta\|^2_{\mathcal{H}_0}\le \|\bar{f}-s_m\|^2_{\mathcal{H}_+}.$$
Since $\mathbb{S}(\{\hat{w}_j,x_j\},\pi_X)$ depends only on $D_1$ and $\|\bar{f}_m-\theta\|^2_{\mathcal{H}_0}$ depends only on $D_0$, they are independent of each other. Hence,
\begin{align} \label{equa1}
&\mathbb{E}_q[(\hat{\theta}_{DRSK}-\theta)^2] \nonumber\\
\le &2\left(\mathbb{E}_q[\|\bar{f}_m-\theta\|^2_{\mathcal{H}_0}] \cdot \mathbb{E}_q[\mathbb{S}(\{\hat{w}_j,x_j\},\pi_X)]+\mathbb{E}_q\left[\left(\sum_{j=m+1}^{n} \hat{w}_j \epsilon(x_j,y_j)\right)^2\right]\right) \nonumber\\
\le &2\left(\mathbb{E}_q[\|\bar{f}-s_m\|^2_{\mathcal{H}_+}] \cdot O((n-m)^{-1})+M_0B_0^2(n-m)^{-1}\right)
\end{align}
by Lemma \ref{noiselemma} and Lemma \ref{varifyweight1}.

(a) We apply Corollary \ref{main_lemma2} (with $r$) to the samples $\{(x_i,f(x_i,y_i))\}$ to get
\begin{equation} \label{equa3}
\mathbb{E}_q[\|\bar{f}-s_m\|^2_{\mathcal{H}_+}]  \le C_\kappa \nu_X(( L_q^{-r} \bar{f})^2)m^{-r+\frac{1}{2}}+2\kappa^2 M_0    
\end{equation}
where $C_\kappa=2\kappa^2+2$.
Plugging (\ref{equa3}) into (\ref{equa1}), we obtain the desired result.\\

(b) We note that Proposition \ref{ease2} and Lemma \ref{isomorphism} show that $\text{Range}(L^{\frac{1}{2}}_q)= (\mathcal{H}_+)^q_1=\overline{\text{Range}(L_q)}^{\mathcal{H}_+}$. Therefore, the assumption holds if $\bar{f} \in \text{Range}(L_q^{\frac{1}{2}})$.

Let $h=f-g$ so $\bar{h}=\bar{f}-g$.  Let $s^{h}_m$, $s^{g}_m$ be the RLS functional approximation of $h$, $g$ respectively. Note that $s^{h}_m$ is a linear functional of $h$, so we can write
$$h - s^{h}_m=(f - s_m)-(g - s^{g}_m).$$
We apply Corollary \ref{main_lemma2} (with $r=1$) to the samples $\{(x_i, g(x_i))\}$: Since $g\in\text{Range}(L_q)$ and $g$ is a function of $x$ only, then $M^g_0=0$ and
$$\mathbb{E}_q[\|g-s_m^g\|^2_{\mathcal{H}_+}] \le C_\kappa m^{-\frac{1}{2}} \nu_X((L_q^{-1}g)^2).$$
Again we apply Corollary \ref{main_lemma2} (with $r=\frac{1}{2}$) to the samples $\{(x_i,h(x_i,y_i))\}$: We note that $\bar{h}=\bar{f}-g\in \text{Range}(L_q^{\frac{1}{2}})=(\mathcal{H}_+)^q_1$ and $\bar{g}=g$ so 
$$M^h_0:=\mathbb{E}_\pi [(h(x,y)-\bar{h}(x))^2]=\mathbb{E}_\pi [(f(x,y)-\bar{f}(x))^2]=M_0$$
and thus
$$\mathbb{E}_q[\|h-s^h_m\|^2_{\mathcal{H}_+}]  \le C_\kappa \nu_X(( L_q^{-\frac{1}{2}} \bar{h})^2)+2\kappa^2 M_0= C_\kappa \|\bar{h}\|^2_{\mathcal{H}_+}+2\kappa^2 M_0\le C_\kappa \varepsilon+2\kappa^2 M_0$$
where $C_\kappa=2\kappa^2+2$.\\
Adding these two parts we obtain
\begin{equation} \label{equa2}
\mathbb{E}_q[\|\bar{f} - s_m\|^2_{\mathcal{H}_+}] \le 2 C_\kappa (\varepsilon+ m^{-\frac{1}{2}} \nu_X((L_q^{-1}g)^2))+4\kappa^2 M_0=C_1(\varepsilon+M_0+\|L_q^{-1}g\|^2_{L^2(q_X)} m^{-\frac{1}{2}}).    
\end{equation}
Plugging (\ref{equa2}) into (\ref{equa1}), we obtain the desired result.
\end{proof}

Theorem \ref{mainDRSK0} in Section \ref{sec:DRSK main} is part (a) of Theorem \ref{mainDRSK1}.


\begin{theorem}[DRSK in all cases under strong assumptions]\label{mainDRSK2}
Suppose Assumptions \ref{CSA}, \ref{noise1}, \ref{lrupper}, and \ref{lrupper4} hold. Take an RLS estimate with $\lambda = m^{-\frac{1}{2}}$ and $B_0=4B$ in (\ref{COP3}). Let $m=\alpha n$ where $0< \alpha< 1$. The DRSK estimator $\hat{\theta}_{DRSK}$ satisfies the following bound.\\
(a) If $\bar{f} \in \text{Range}(L_q^{r})$ ($\frac{1}{2}\le r \le 1$), then $\mathbb{E}_q [(\hat{\theta}_{DRSK}-\theta)^2] \le C_1(C_{f,n} n^{-\frac{1}{2}-r}+M_0 n^{-1})$ where $C_{f,n}=\|L_q^{-r}\bar{f}\|^2_{L^2(q_X)} \cdot o(1)$ as $n\to \infty$, $C_1$ only depends on $\alpha, \kappa, B$.\\
(b) 
Suppose that there exists a $\varepsilon>0$ and $g \in \text{Range}(L_q)$, such that $\|\bar{f}-g\|^2_{\mathcal{H}_+}\le \varepsilon$. This assumption holds, for instance, if $\bar{f} \in \text{Range}(L_q^{\frac{1}{2}})$.
Then under this assumption, $\mathbb{E}_q [(\hat{\theta}_{DRSK}-\theta)^2] \le C_1(C_{g,n} n^{-\frac{3}{2}}+M_0 n^{-1}+\varepsilon n^{-1})$ where $C_{g,n}=\|L_q^{-1}g\|^2_{L^2(q_X)} \cdot o(1)$  as $n\to \infty$, $C_1$ only depends on $\alpha, \kappa, B$. 
\end{theorem}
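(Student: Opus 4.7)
The plan is to follow exactly the same line of argument used for Theorem \ref{mainDRSK1}, replacing the canonical-rate weight bound (Lemma \ref{varifyweight1}) by the supercanonical bound (Lemma \ref{varifyweight2}), which becomes available under the additional Assumption \ref{lrupper4} and the choice $B_0 = 4B$. Concretely, I would first insert $\epsilon(x_j,y_j) = f(x_j,y_j) - \bar f(x_j)$ into $\hat\theta_{DRSK} - \theta$ and apply $(a+b)^2 \le 2(a^2+b^2)$ together with the RKHS Cauchy--Schwarz inequality in $\mathcal{H}_0$, exactly as in the proof of Theorem \ref{mainDRSK1}, to obtain
\begin{equation*}
(\hat\theta_{DRSK} - \theta)^2 \;\le\; 2\Bigl(\|\bar f_m - \theta\|_{\mathcal{H}_0}^2 \cdot \mathbb{S}(\{\hat w_j, x_j\}, \pi_X) + \Bigl(\sum_{j=m+1}^n \hat w_j\, \epsilon(x_j,y_j)\Bigr)^2\Bigr),
\end{equation*}
where $\bar f_m = \bar f - s_m + \mu_X(s_m)$. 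Crucially, $s_m$ depends only on $D_0$ while $\hat w$ depends only on $D_1$, so the two random factors $\|\bar f_m - \theta\|_{\mathcal{H}_0}^2$ and $\mathbb{S}(\{\hat w_j, x_j\}, \pi_X)$ are independent and the expectation factorizes. Using $\|\bar f_m - \theta\|_{\mathcal{H}_0}^2 \le \|\bar f - s_m\|_{\mathcal{H}_+}^2$, the problem reduces to estimating the KRR error in the $\mathcal{H}_+$ norm and the expected KSD separately.

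For part (a), I would apply Corollary \ref{main_lemma2} with regularity parameter $r \in [\tfrac12, 1]$ and $\lambda = m^{-1/2}$ to get
\begin{equation*}
\mathbb{E}_q\bigl[\|\bar f - s_m\|_{\mathcal{H}_+}^2\bigr] \;\le\; C_\kappa\, m^{-r + 1/2}\,\|L_q^{-r}\bar f\|_{L^2(q_X)}^2 + 2\kappa^2 M_0,
\end{equation*}
combine it with the bound $\mathbb{E}_q[\mathbb{S}(\{\hat w_j, x_j\}, \pi_X)] = o((n-m)^{-1})$ from Lemma \ref{varifyweight2}, and control the noise term by $M_0 B_0^2/(n-m) = O(M_0 n^{-1})$ via Lemma \ref{noiselemma}. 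Since $m = \alpha n$, the leading-order product is $\|L_q^{-r}\bar f\|_{L^2(q_X)}^2 \cdot o(n^{-r - 1/2})$, while the cross term $2\kappa^2 M_0 \cdot o((n-m)^{-1})$ is absorbed into the $O(M_0 n^{-1})$ from the noise piece. Setting $C_{f,n} := \|L_q^{-r}\bar f\|_{L^2(q_X)}^2 \cdot o(1)$ then gives the advertised bound.

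For part (b), I would rerun the approximation argument used in part (b) of Theorem \ref{mainDRSK1}: write $\bar f = g + h$ with $h := \bar f - g$, exploit linearity of the RLS map $f \mapsto s_m^f$ in its target values to split $s_m$ accordingly, and apply Corollary \ref{main_lemma2} with $r = 1$ to $(x_i, g(x_i))$ (so $M_0^g = 0$) and with $r = 1/2$ to $(x_i, h(x_i, y_i))$ (where $M_0^h = M_0$ and $\|L_q^{-1/2}\bar h\|_{L^2(q_X)}^2 = \|\bar h\|_{\mathcal{H}_+}^2 \le \varepsilon$ by Lemma \ref{isomorphism}). Multiplying the resulting $\mathbb{E}_q[\|\bar f - s_m\|_{\mathcal{H}_+}^2] \le C_1(\varepsilon + M_0 + m^{-1/2}\|L_q^{-1}g\|_{L^2(q_X)}^2)$ by the $o((n-m)^{-1})$ KSD bound and adding the $O(M_0 n^{-1})$ noise term yields the claimed bound with $C_{g,n} = \|L_q^{-1}g\|_{L^2(q_X)}^2 \cdot o(1)$. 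The verification that $\bar f \in \text{Range}(L_q^{1/2})$ implies the stated approximation assumption follows from Proposition \ref{ease2} together with Lemma \ref{isomorphism}.

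The proof is essentially mechanical once Theorem \ref{mainDRSK1} and Lemma \ref{varifyweight2} are in place; the only bookkeeping subtlety is ensuring that the $M_0$-tail of the KRR error, when multiplied by $o((n-m)^{-1})$, is strictly dominated by the $M_0 n^{-1}$ contribution of Lemma \ref{noiselemma}, which is immediate since $o(n^{-1}) \cdot M_0 = o(M_0 n^{-1})$. Thus no new analytical ingredients are needed beyond what Theorem \ref{mainDRSK1} already assembles.
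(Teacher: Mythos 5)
Your proposal is correct and follows essentially the same route as the paper: the paper proves Theorem \ref{mainDRSK2} by repeating the argument of Theorem \ref{mainDRSK1} verbatim, with Lemma \ref{varifyweight2} (under Assumption \ref{lrupper4} and $B_0=4B$) replacing Lemma \ref{varifyweight1} to upgrade the KSD factor to $o((n-m)^{-1})$, and with Lemma \ref{noiselemma} and Corollary \ref{main_lemma2} (plus the part-(b) splitting via Proposition \ref{ease2} and Lemma \ref{isomorphism}) used exactly as you describe. Your added remark that the $M_0$-tail of the KRR bound times $o((n-m)^{-1})$ is dominated by the $M_0 n^{-1}$ noise contribution is a correct and harmless piece of bookkeeping that the paper leaves implicit.
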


\begin{proof}
Similarly to the proof of Theorem \ref{mainDRSK1} but leveraging Lemma \ref{noiselemma} and Lemma \ref{varifyweight2} in this Theorem, we obtain that 
\begin{align*}
&\mathbb{E}_q[(\hat{\theta}_{DRSK}-\theta)^2]\\
\le &2\left(\mathbb{E}_q[\|\bar{f}_m-\theta\|^2_{\mathcal{H}_0}] \cdot \mathbb{E}_q[\mathbb{S}(\{\hat{w}_j,x_j\},\pi_X)]+\mathbb{E}_q\left[\left(\sum_{j=m+1}^{n} \hat{w}_j \epsilon(x_j,y_j)\right)^2\right]\right)\\
\le &2\left(\mathbb{E}_q[\|\bar{f}-s_m\|^2_{\mathcal{H}_+}] \cdot o((n-m)^{-1})+M_0B_0^2(n-m)^{-1}\right).
\end{align*}
The rest of the proof is similar to Theorem \ref{mainDRSK1}.
\end{proof}

Theorem \ref{mainDRSK0-2} in Section \ref{sec:DRSK main} is part (a) of Theorem \ref{mainDRSK2}.

\acks{We gratefully acknowledge support from the National Science Foundation under grants CAREER CMMI-1834710 and IIS-1849280. The research of Haofeng Zhang is supported in part by the Cheung-Kong Innovation Doctoral Fellowship. The authors also thank the reviewers and editor for their constructive comments, which have helped improve our paper tremendously.}


\bibliography{bibliography}

\end{document}